\definecolor{mydarkblue}{rgb}{0,0.08,0.45}
\newtheorem{thm}{Theorem}
\newtheorem{pro}{Proposition}
\newtheorem{cor}{Corollary}
\newtheorem{lem}{Lemma}
\newtheorem{rem}{Remark}
\newtheorem{defn}{Definition}
\newtheorem{assumption}{Assumption}
\newtheorem*{statement}{Statement}
\newcommand{\ceil}[1]{\left\lceil #1 \right\rceil}
\newcommand{\Identity}{\mathrm{I}}
\newcommand{\E}{\mathbb{E}}
\newcommand{\R}{\mathbb{R}}
    \newcolumntype{P}[1]{>{\centering\arraybackslash}p{#1}}
    \newcolumntype{M}[1]{>{\centering\arraybackslash}m{#1}}
\newcommand{\card}[1]{\left\lvert#1\right\rvert}
\newcommand{\expp}{ \mathbb{E}}
\newcommand{\calS}{\mathcal{S}}
\setlist{leftmargin=5.5mm}
\newcommand{\norm}[1]{\left\lVert#1\right\rVert}
\newcommand{\cleandata}{$\{(\tilde{s}^\tau_h,\tilde{a}^\tau_h,\tilde{b}^\tau_h,\tilde{r}^\tau_h,\tilde{s}^\tau_{h+1})\}^{K,H}_{\tau=1,h=1}$}
\newcommand{\data}{$\{(s^\tau_h,a^\tau_h,b^\tau_h,r^\tau_h,s^\tau_{h+1})\}^{K,H}_{\tau=1,h=1}$}
\newcommand{\optgap}{\text{SubOpt}}
\begin{document}

%

%
\runningauthor{Andi Nika, Debmalya Mandal, Adish Singla and Goran Radanovic}

\twocolumn[

\aistatstitle{Corruption-Robust Offline Two-Player Zero-Sum Markov Games}

\aistatsauthor{ Andi Nika \And Debmalya Mandal$^\dagger$ \And Adish Singla \And Goran Radanovic}
\aistatsaddress{ MPI-SWS \And University of Warwick \And MPI-SWS \And MPI-SWS } ]

\doparttoc 
\faketableofcontents 

\begin{abstract}
We study data corruption robustness in offline two-player zero-sum Markov games. Given a dataset of realized trajectories of two players, an adversary is allowed to modify an $\epsilon$-fraction of it. The learner's goal is to identify an approximate Nash Equilibrium policy pair from the corrupted data.  We consider this problem in linear Markov games under different degrees of data coverage and corruption. We start by providing an information-theoretic lower bound on the suboptimality gap of any learner. Next, we propose robust versions of the Pessimistic Minimax Value Iteration algorithm \citep{zhong2022pessimistic}, both under coverage on the corrupted data and under coverage only on the clean data, and show that they achieve (near)-optimal suboptimality gap bounds with respect to $\epsilon$. We note that we are the first to provide such a characterization of the problem of learning approximate Nash Equilibrium policies in offline two-player zero-sum Markov games under data corruption.
\end{abstract}


\section{INTRODUCTION}

\begin{figure*}[!t]
    \centering    
    \includegraphics[width=0.9\textwidth]{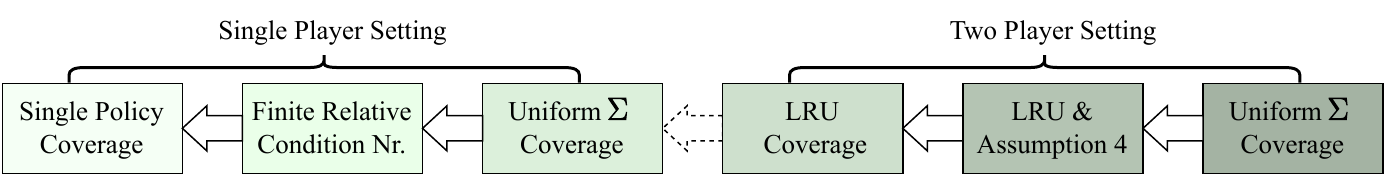}\vspace{0.5em}
    \caption{Relationship between coverage assumptions. The minimal coverage requirements are single policy coverage and LRU coverage for the single-player and two-player settings, respectively.  Arrows stand for implications. The middle (dashed) arrow denotes the restriction from the two-player to the single-player setting: when fixing the second player's policy, the LRU coverage assumption reduces to the uniform coverage assumption.}
    \label{fig:coverage_relationship}
\end{figure*}

Some of the most successful applications of Multi-agent Reinforcement Learning (MARL) are in competitive game-playing \citep{silver2017mastering, berner2019dota}, where we have a model of the environment that we can use for training purposes. %
Given that many real-world multi-agent applications, such as autonomous driving \citep{pan2017agile} or healthcare \citep{wang2018supervised}, do not have readily available simulators, there has recently been a growing interest in studying offline settings, where offline data is used to derive agents' policies. 
Since a dynamic exploration of the environment is impossible, state-of-the-art (SOTA) algorithms use the paradigm of \textit{pessimism in the face of uncertainty}  to derive these policies \citep{jin2021pessimism}.
Moreover, these works typically assume that the data is coming from a latent distribution with ``nice'' properties.

In practice, however,  datasets may be subject to adversarial attacks that corrupt data points and can significantly impact the performance of the learning process.  
Such security threats have already been explored in single-agent RL, where prior work has proposed corruption robust algorithms \citep{zhang2022corruption}. However, these results do not directly translate to MARL due to the intricacies of multi-agent settings. For instance, the learning objective in these settings requires a more complex solution concept of learning a Nash Equilibrium (NE) policy pair for agents instead of simply learning a near-optimal policy for an agent. In this work, we initiate the study of corruption robust algorithms for learning equilibrium policies in offline MARL. More specifically, we focus on two-agent zero-sum Markov games and consider the following research question: 

\looseness-1\textit{Can we design algorithms that approximately solve offline two-player zero-sum Markov games under data corruption?}

To effectively answer this question, we need to account for another crucial factor in offline learning, that is, the quality of the collected data, which drastically affects the quality of the learned policy. It is thus common practice to assume that the collected data \textit{covers} at least some trajectories of interest. It turns out that the necessary coverage assumptions for solving the offline single-player problem are not enough to solve the offline two-player problem (see Figure \ref{fig:coverage_relationship}). Thus, stronger assumptions are required, i.e., the so-called \textit{Low Relative Uncertainty} (LRU) assumption. This problem is exacerbated by the presence of corruption in our setting. If good coverage in the clean setting seems natural, supposing that the data has been collected by a \textit{good enough} policy, such an assumption is no longer guaranteed when a potentially malicious adversary intervenes in the data. 
\def\a{$\widetilde{O}\Bigl( H^2SAB\epsilon + H^2K^{-1/2}(SAB)^{3/2} \Bigr)$}
\def\b{$\widetilde{O}\Bigl( H^2d\epsilon + H^2K^{-1/2}d^{3/2}\Bigr)$ }
\def\c{$\widetilde{O}\Bigl( H^2SAB\epsilon + H^{3/2}K^{-1/2}f(SAB)  \Bigr)$}
\def\d{$\widetilde{O}\Bigl( H^2d\epsilon + H^{3/2}f(d)K^{-1/2} \Bigr)$}
\def\e{$\widetilde{O}\Bigl( H^2d^{3/2}\sqrt{\epsilon} + H^2K^{-1/2}d^{3/2}\Bigr)$}
\def\f{$\widetilde{O}\Bigl( H^2d^{3}\epsilon + H^2K^{-1/2}d^{3}\Bigr)$}

\begin{table*}[!b]
\renewcommand{\arraystretch}{1.7}
    \centering
    \scalebox{0.9}
{
\begin{tabular}{l c c c c}\toprule
        Coverage & Covariates & Algorithm & Suboptimality Gap & Result \\
        \midrule
        Uniform $\Sigma$ on corrupted data & Corrupted & R-PMVI & \d & [Theorem \ref{thm:linear_uniform_coverage}] \\  
        LRU on corrupted data & Clean & S-PMVI &  \b & [Theorem \ref{thm:linear_case}]  \\ 
        LRU on clean data & Corrupted & S-PMVI & \e & [Theorem \ref{thm:general_result}] \\
         LRU on clean data  \& A.\ref{asm:features_lower_bound}  & Corrupted & S-PMVI & \f & [Theorem \ref{thm:second_general_result}] \\
        \bottomrule
\end{tabular}
}
\vspace{0.1in}
\caption{Summary of our results under Low Relative Uncertainty and uniform $\Sigma$-coverage assumptions (see Assumptions \ref{asn:uniform-phi-coverage} and \ref{asm:low_relative_uncertainty} for definitions) on clean or corrupted data, and different corruption levels of the feature covariance matrix. Here $\epsilon$ denotes the corruption level, $K$ denotes the number of trajectories contained in the data, and $f(x)$ denotes a polynomial function of $x$. The Covariates column refers to whether the state-action part of the data tuple is corrupted or not. We have omitted linear dependence on noise variance $\gamma^2$ of the rewards for ease of presentation. We point the reader to the relevant results for a detailed description. We note that the universal lower bound for the offline two-player zero-sum MG setting is $\Omega (Hd\epsilon)$. }
\label{tab:summary}
\end{table*}
Motivated by these observations, we study the problem of corruption in the offline two-player setting under various assumptions. First, we tackle corruption under the minimal LRU coverage and the more relaxed uniform $\Sigma$-coverage (defined in Assumption \ref{asn:uniform-phi-coverage}) assumptions on the corrupted data. Furthermore, we also consider the more difficult setting where the minimal LRU coverage holds only on the clean dataset. We build upon recent techniques for the offline two-player zero-sum setting \citep{zhong2022pessimistic} and propose robust versions of their method. We tackle the corruption problem by using two setting-specific robust estimators and carefully designing new bonus terms that capture the additional estimation errors coming from corruption. 
More concretely, our main results and contributions are summarized below (also, see Table~\ref{tab:summary}):
\begin{enumerate}[label={\Roman*.}]
\setlength{\leftmargin}{-3pt}
\setlength\itemsep{-0.1em}
\item \textbf{Lower bound.} First, we formulate the problem of data corruption in offline two-player zero-sum Markov games. We prove an information-theoretic lower bound of $\Omega (Hd\epsilon)$ on the suboptimality gap of any algorithm that uses a corrupted dataset where $H$ is the episode length, $d$ is the dimension, and $\epsilon$ is the corruption level.
\item \textbf{Uniform $\Sigma$ and corrupted covariates.} Next, we consider the corruption problem under the Uniform $\Sigma$-coverage assumption on the corrupted data. We propose R-PMVI that uses a Robust Least Squares estimator \citep{zhang2022corruption} and show that it incurs a near-optimal bound on the error coming from corruption with high probability,  showing an improvement on the single-player bounds of \cite{zhang2022corruption} by a factor of $H$.
\item \textbf{LRU and clean covariates.} Furthermore, we consider the LRU coverage on the corrupted data. We additionally assume that corruption is done only on the reward and next state part of the data tuples, i.e., assuming clean covariates. In this setting, we propose S-PMVI that uses the Spectrally Regularized Alternating Minimization algorithm \citep{chen2022online} oracle as a robust estimator and provide a high probability near-optimal bound on its suboptimality gap. We note that, under no corruption, we recover the SOTA bounds of \cite{zhong2022pessimistic}, while obtaining similar rates of $\epsilon$ as in the single agent SOTA \citep{zhang2022corruption}.
\item \textbf{LRU on the clean data.} Finally, we consider the most restrictive setting where no guarantees of coverage on the corrupted data are given but only make the  necessary LRU assumption on the clean data. In this setting, we propose a new bonus term for the S-PMVI algorithm, that takes into account the additional error terms. Our method yields $O(d^{3/2}\sqrt{\epsilon})$ bounds on the suboptimality gap, similar to the single-player setting. Moreover, under an additional mild assumption on the feature space, we are able to recover the optimal $O(\epsilon)$ rate, at the cost of an additional $O(d^{3/2})$ factor. 
\end{enumerate}

Furthermore, we observe that convergence to Nash equilibria, without knowledge of $\epsilon$ and uniform coverage, is impossible in the offline two-player setting, which is a direct implication of the single-player setting \citep{zhang2022corruption}. Finally, we provide a comprehensive discussion on the relationship between used coverage assumptions and other similar assumptions found in the offline MARL literature and position them relative to each other.

\section{PRELIMINARIES}\label{sec:preliminaries}

\textbf{Notation.} We begin this section by introducing the notation to be used throughout the paper. As usual, $I$ denotes the identity matrix, $\norm{\cdot}_2$ denotes the Euclidean norm, $\norm{\cdot}_A$ denotes the Mahalanobis norm given square matrix $A$, and $\Delta(\mathcal{X})$ denotes the probability simplex on set $\mathcal{X}$. When $\widetilde{O}$ is used, any polylogarithmic terms are omitted. Furthermore, $[H]$ denotes the set of natural numbers up to and including $H$, $\langle \cdot, \cdot \rangle$ denotes the inner product, $\mathbf{1}\{ \cdot\}$ denotes the indicator function, and $\succeq$ denotes the Loewner order, where $A \succeq B$ is equivalent to $A-B$ being positive semi-definite.  Finally, we define $\Pi_h(x)=\min \{ h, \max \{ x, 0\} \}$. 

\subsection{Two-player Zero-sum Markov Games}

Let $\mathcal{G} = (\mathcal{S},\mathcal{A},\mathcal{B},p,r,H)$ be a finite-horizon zero-sum Markov game between two players, one of which is trying to maximize the total reward and the other is trying to minimize it. Here $\mathcal{S}$ denotes the state space with $S$ states\footnote{We only introduce the state space cardinality notation for convenience. Note that in linear Markov games, $S$ may be intractably large.}, $\mathcal{A}$ is the action space of the first player with $A$ actions, $\mathcal{B}$ is the action space of the second player with $B$ actions, $p = (p_1,\ldots, p_H)$ where $p_h \in \mathbb{R}^{SAB \times S}$ $\forall h \in [H]$ denote the transition kernels, $r=(r_1,\ldots,r_H)$ where $r_h\in \Delta(\mathbb{R})^{SAB}$ $\forall h\in[H]$ are $\gamma^2$-subGaussian rewards, 
and $H$ is the horizon length. At each time step $h$ and state $s_h$ the $\max$ player selects action $a_h\in \mathcal{A}$ and the 
$\min$ player selects action $b_h\in \mathcal{B}$ and both observe reward $r_h(s_h,a_h,b_h)$. Then, the next state $s_{h+1}$ is sampled from $p_h(\cdot | s_h,a_h,b_h)$. 

A strategy pair $(\pi, \nu)$ is comprised of the strategy of the first player $\pi = (\pi_1,\ldots,\pi_H)$, $\pi_h: \mathcal{S} \rightarrow \Delta (\mathcal{A})$ $\forall h \in [H]$ and that of the second player $\nu = (\nu_1,\ldots,\nu_H)$, $\nu_h:\mathcal{S}\rightarrow \Delta(\mathcal{B})$ $\forall h \in [H]$. Given $h\in [H]$, we define the state-value function and state-action value function as
\begin{align*}
    V^{\pi,\nu}_h(s_h) & = \mathbb{E}\left[ \sum^H_{t=h} r_h(s_t,a_t,b_t) | \pi,\nu, s_h\right] ~,\\
    Q^{\pi,\nu}_h(s_h,a_h,b_h) & = \mathbb{E}\left[ \sum^H_{t=h} r_h(s_t,a_t,b_t) | \pi,\nu, s_h, a_h, b_h \right] ~.
\end{align*}

\subsection{Nash Equilibria and Performance Metrics}
Let $\nu$ be a fixed strategy of the second player. Then, an optimal policy with respect to the MDP induced by $\nu$ is called the best response of the first player and we denote it by $\text{br}(\nu)$. Similarly, for a fixed strategy $\pi$ of the first player, the best response of the second player is denoted by $\text{br}(\pi)$.  Further, for any $\pi$, $\nu$ and $h \leq H$, we define
\begin{align*}
    V^{\pi,*}_h(s_h) & = V^{\pi,\text{br}(\pi)}_h(s_h)=\inf_\nu V^{\pi,\nu}_h(s_h)~, \\
    V^{*,\nu}_h(s_h) & = V^{\text{br}(\nu),\nu}_h(s_h)=\sup_\mu V^{\mu,\nu}_h(s_h)~.
\end{align*}
\begin{defn}\label{def:nash_equilibrium}
    A Nash Equilibrium (NE) is a strategy pair $(\pi^*,\nu^*)$ such that, for all $h\in[H]$ and $s\in\mathcal{S}$:
    \begin{align*}
        \sup_\pi \inf_\nu V^{\pi,\nu}_h(s) = V^{\pi^*,\nu^*}_h(s)=\inf_\nu\sup_\pi V^{\pi,\nu}_h(s)~.
    \end{align*}
\end{defn}
It is well-known that the NE of a zero-sum Markov game with a unique value function exists \citep{shapley1953stochastic}. We define $V^*_h(s_h)=V^{\pi^*,\nu^*}_h(s_h)$, for all $h \leq H$. Then, for all strategy pairs $(\pi,\nu)$, the weak duality is written as $V^{\pi,*}_h(s_h) \leq V^*_h(s_h) \leq V^{*,\nu}_h(s_h), \forall h \leq H$.
Consequently, the suboptimality gap of $(\pi,\nu)$ is 
\begin{align*}
    \optgap (\pi,\nu,s) = V^{*,\nu}_1(s) - V^{\pi,*}_1(s)~.
\end{align*}
Note that the duality is always non-negative, and it is zero only when $(\pi,\nu)$ is a NE strategy. We say that a strategy $(\pi,\nu)$ is a $\eta$-approximate NE if we have $\optgap (\pi,\nu, s) \leq \eta$ for all $s\in\mathcal{S}$.

\subsection{Linear Markov Games}\label{sec:linear_markov_games}

We consider linear two-player zero-sum Markov games $\mathcal{G}= (\mathcal{S},\mathcal{A},\mathcal{B},p,r,H)$. Here, we formally state the linearity assumption for Markov games, standard in the literature \citep{xie2020learning}.
\begin{defn}[Linear Markov games]\label{asmp_linearity}
 For each $(s,a,b)\in \mathcal{S}\times \mathcal{A}\times \mathcal{B}$ we have
\begin{align*}
    r_h(s,a,b) & = \phi(s,a,b)^\top \theta_h + \zeta \;\; \text{and} \\
    p_h(\cdot |s,a,b) & = \phi(s,a,b)^\top \mu_h(\cdot)~,
\end{align*}
where $\phi : \mathcal{S}\times \mathcal{A} \times \mathcal{B} \rightarrow \mathbb{R}^d$ is a known feature map, $\theta_h \in \mathbb{R}^d$ is an unknown vector, $\zeta$ is zero-mean $\gamma^2$-subGaussian noise, and $\mu_h=(\mu^{(i)}_h)_{i \in [d]}$ is a vector of $d$ unknown signed measures on $\mathcal{S}$. We assume that $\norm{\phi(\cdot,\cdot,\cdot)}_2 \leq 1$, $\norm{\theta_h}_2\leq \sqrt{d}$, and $\norm{\mu_h(\mathcal{S})}_2\leq\sqrt{d}$ for all $h\in [H]$. 
\end{defn}
As previously observed \citep{zhong2022pessimistic}, given a policy pair $(\pi,\nu)$ and time-step $h$, there exists a $d$-dimensional weight vector $\omega^{\pi,\nu}_h$, with $\norm{\omega^{\pi,\nu}_h}_2\leq H\sqrt{d}$, such that $Q^{\pi,\nu}_h(s,a,b) = \phi(s,a,b)^\top \omega^{\pi,\nu}_h$ for any $(s,a,b)$ tuple. 
 
\subsection{Offline Data Collection}

In offline RL, the objective is to learn an optimal policy from data that has already been collected beforehand \citep{levine2020offline}. Similarly, in offline Markov Games, the objective is to learn an approximate NE strategy based on a given dataset. Formally, we are given a dataset $ D = \{ (s^\tau_h,a^\tau_h,b^\tau_h,r^\tau_h,s^\tau_{h+1}) \}^{\tau\in [K]}_{h\in [H]}$ of $K$ trajectories,
gathered from a behavioral policy $\rho = (\rho_1,\ldots,\rho_H), \rho_h:\mathcal{S} \rightarrow \Delta(\mathcal{A}\times \mathcal{B})$, $\forall h \in [H]$.\footnote{The behavioral policy can be thought of as a product of a $\min$ and $\max$ policy.}

Given $h\in [H]$, a strategy pair $(\pi,\nu)$, and a tuple $(s,a,b)$, we denote by $d^{\pi,\nu}_h(s,a,b)$ the probability that the state-action tuple $(s,a,b)$ is traversed in time step $h$ by $(\pi,\nu)$, i.e., $d^{\pi,\nu}_h(s,a,b) = \mathbb{P}\left( s_h=s,a_h=a,b_h=b|\pi,\nu\right)$.
If $d^{\pi,\nu}_h(s,a,b) >0$, for all $h\in [H]$, we say that state-action tuple $(s,a,b)$ is covered by policy pair $(\pi,\nu)$. 

Next, we formally define the compliance of a given offline dataset with an underlying Markov game, which basically implies that the clean collected data follows the same dynamics as the environment. We will assume later on that the  clean dataset is in compliance with the underlying Markov game.
\begin{defn}[Compliance of dataset]  Given a Markov game $ \mathcal{G} = (\mathcal{S},\mathcal{A},\mathcal{B},p,r,H)$ and a dataset $D = \{ (s^\tau_h,a^\tau_h,b^\tau_h,r^\tau_h,s^\tau_{h+1}) \}^{\tau\in [K]}_{h\in [H]}~,$ we say the dataset $D$ is compliant with $\mathcal{G}$ if, for all $h\in[H]$ and $s\in\mathcal{S}$, 
\begin{align*}
    \mathbb{P} & \left( r^\tau_h =r,s^\tau_{h+1}=s \vert \{ (s^i_h,a^i_h,b^i_h)\}^\tau_{i=1}, \{ (r^i_h,s^i_{h+1})\}^{\tau -1}_{i=1} \right) \\
        & = \mathbb{P}_h\left( r_h=r,s_{h+1}=s \vert s_h=s^\tau_h, a_h=a^\tau_h,b_h=b^\tau_h \right)~,
\end{align*}
where $\mathbb{P}$ is with respect to $D$ and $\mathbb{P}_h$ is the probability measure taken with respect to the underlying Markov game $\mathcal{G}$.
\end{defn}

\subsection{Corruption Robust Estimation}

The standard assumption in statistical estimation is that the samples we are given come from a fixed distribution, allowing one to directly use probability laws to obtain unbiased estimates of interest. However, it is usually the case that outliers are present in the data that do not belong to the underlying distribution, or that an adversary can arbitrarily corrupt the data. Only one outlier is enough to arbitrarily shift the empirical mean of the data, and therefore acquiring robust estimators for the moments of the distribution is important. 
We will use the Huber contamination model, akin to the corruption model in single-player offline RL \citep{zhang2022corruption}.
\begin{assumption}[$\epsilon$-contamination in offline Markov games]\label{asm:corruption_model}
    Given $\epsilon \in [0,1]$ and a set of clean tuples $\widetilde{D}= \{ (\widetilde{s}_i,\widetilde{a}_i,\widetilde{b}_i,\widetilde{r}_i,\widetilde{s}'_i)\}^N_{i=1}$, an adversary is allowed to inspect the tuples and replace any $\epsilon N$ of them with arbitrary contaminated tuples $(s,a,b,r,s')\in \mathcal{S}\times\mathcal{A}\times\mathcal{B}\times\mathcal{R}\times\mathcal{S}$. The resulting set $D=\{ (s_i,a_i,b_i,r_i,s'_i)\}^N_{i=1}$ is then revealed to the learner. 
\end{assumption}

We say that a set of samples is $\epsilon$-corrupted if it is generated by the above process. Given an $\epsilon$-corrupted set of data points, the goal of robust statistics is to compute accurate estimates of the first and second moments. In \citep{diakonikolas2017being}, the authors provide efficient and nearly sample-optimal filtering algorithms for mean and covariance estimation. 

For the linear regression problem, different robust estimators require different coverage assumptions. For instance, SCRAM \citep{chen2022online} does not make strong coverage assumptions but assumes clean covariates, while Robust Least Squares (RLS) \citep{bakshi2021robust, pensia2020robust, zhang2022corruption} needs stronger coverage while allowing for corrupted covariates. We will use both aforementioned methods as robust estimator oracles under different coverage assumptions and corruption models.

\section{PROBLEM FORMULATION}\label{sec:problem_formulation}

We assume that an unknown experimenter collects a dataset $\widetilde{D}=$ \cleandata of $K$ trajectories, in compliance with $\mathcal{G}$. We assume that \cleandata is collected from behavioral policy 
$\rho=(\rho^1,\rho^2)$. Formally, for any $h\in[H]$, we have $(\widetilde{s}_h,\widetilde{a}_h,\widetilde{b}_h) \sim d^\rho_h$ and $\widetilde{s}_{h+1}\sim p_h(\cdot|\widetilde{s}_h,\widetilde{a}_h,\widetilde{b}_h)$ for any $(\widetilde{s}_h,\widetilde{a}_h,\widetilde{b}_h,\widetilde{r}_h,\widetilde{s}_{h+1})\in \widetilde{D}$.  Subsequently, an adversary contaminates an $\epsilon$-fraction of all tuples of $\widetilde{D}$ and provides us with the corrupted dataset $D=$\data. 

The learner's objective is, given access to an $\epsilon$-corrupted dataset $D$, to be able to compute an approximate Nash equilibrium policy pair. In the clean offline setting, this is usually done by leveraging pessimism in order to penalize samples that were not sufficiently covered. However, if the dataset does not cover state actions of interest, i.e., those traversed by equilibrium policies, then learning becomes impossible, even in the clean setting. First, we formally define what we mean by coverage.
\begin{defn}
    A strategy pair $(\pi, \nu)$ is covered by the dataset generated by behavioral policy $\rho$ if and only if every tuple $(s,a,b)$ that is covered by $(\pi, \nu)$ is also covered by $\rho$. In other words, we have
\begin{align*}
    \frac{d^{\pi, \nu}_h(s,a,b)}{d^{\rho}_h(s,a,b)} < \infty, \forall (s,a,b) \in \mathcal{S}\times\mathcal{A}\times\mathcal{B}, h \in [H]~.
\end{align*}
\end{defn}
For the offline single-player setting, it has been established that coverage of the optimal (or any target policy used as reference) policy is necessary to achieve convergence. For the two-player zero-sum Markov game setting, \cite{zhong2022pessimistic} show that coverage of a Nash equilibrium policy pair and its neighbors across each player is necessary for learning, as we will see in the next section.\footnote{Formal definitions are given in the next section.} We will consider the corruption problem under different coverage assumptions, starting from strong assumptions on the corrupted data, and ending with the setting where minimal coverage assumptions hold only on the clean dataset.

\section{RESULTS UNDER COVERAGE ON CORRUPTED DATA}\label{sec:linear_setting}

Recall from Section \ref{sec:linear_markov_games} that, given $(\pi, \nu)$, $(s,a,b)\in \mathcal{S}\times \mathcal{A} \times \mathcal{B}$, and $h\in [H]$, Definition \ref{asmp_linearity} implies that $Q^{\pi,\nu}_h(s,a,b) =\phi(s,a,b)^\top \omega^{\pi,\nu}_h$, for some $\omega^{\pi,\nu}_h\in\mathbb{R}^d$.
Thus, learning the action value function $Q^*_h$ that corresponds to a Nash equilibrium (NE) pair reduces to learning the optimal weights, which we denote by $\omega^*_h$. For that, we will rely on \textit{Pessimistic Minimax Value Iteration} (PMVI) \citep{zhong2022pessimistic}, which computes an approximate NE pair using offline data.
However, PMVI computes estimates of $\omega^*_h$ by solving regularized least-squares on the Bellman operator. We cannot directly use those estimates since our data contains corrupted samples. Instead, we use a robust estimator to tackle the corruption problem. Let \textrm{R-EST} denote a generic robust estimator. 

First, we randomly split the data into $H$ batches $D_h$  and set $\underline{V}_{H+1}(\cdot)=\overline{V}_{H+1}(\cdot)=0$. Then, for each time step $h=H, H-1,\ldots,1$, we obtain
\begin{align*}
    \underline{\omega}_h & \leftarrow \textrm{R-EST} \Bigl( \{ \phi(s^\tau_h,a^\tau_h,b^\tau_h), r^\tau_h + \underline{V}_{h+1}(s^\tau_{h+1})\}^K_{\tau=1} \Bigr)~, \\
    \overline{\omega}_h & \leftarrow \textrm{R-EST} \Bigl( \{ \phi(s^\tau_h,a^\tau_h,b^\tau_h), r^\tau_h + \overline{V}_{h+1}(s^\tau_{h+1})\}^K_{\tau=1} \Bigr)~.
\end{align*}
Once we have estimates of the optimal weights, the algorithm proceeds to construct estimates of the $Q$-values for both players, which depend on carefully constructed bonus terms $\Gamma_h$ -- we provide different bonus terms depending on the robust estimators we use and other characteristics of corruption. 

Next, we compute a Nash equilibrium corresponding to payoffs $\underline{Q}(\cdot,\cdot,\cdot) $ and $\overline{Q}(\cdot,\cdot,\cdot) $, and obtain $(\widehat{\pi}_h,\nu'_h)$ and $(\pi',\widehat{\nu}_h)$, as solutions corresponding to $\underline{Q}(\cdot,\cdot,\cdot)$ and $\overline{Q}(\cdot,\cdot,\cdot)$, respectively. 

Finally, the algorithm estimates the value functions for both players based on the computed policy pairs. After $H$ steps, the algorithm terminates and outputs the estimated pairs of strategies $(\widehat{\pi},\widehat{\nu})$. The pseudocode of the described method is given in Algorithm \ref{alg:generic_PMVI}.

\begin{algorithm}
\setstretch{1.2}
\caption{Robust PMVI}
\label{alg:generic_PMVI}
\begin{algorithmic}[1]
\State \textbf{Input}: Dataset $D$, failure probability $\delta$, robust estimator \textrm{R-EST}, bonus functions $\Gamma_h(\cdot,\cdot,\cdot)$.
\State \textbf{Initialize:} Randomly split dataset $D$ into $H$ subsets $D_h$ of cardinality $K$; set $\underline{V}_{H+1}(s)=\overline{V}_{H+1}(s)=0$, for all $s\in\mathcal{S}$.
\For{$h=H,H-1,\ldots,1$}:
\State Compute estimates $\underline{\omega}_h$ and $\overline{\omega}_h$ via \textrm{R-EST}.
\State  $\underline{Q}_h(\cdot,\cdot,\cdot) \leftarrow \Pi_{H-h+1}\left( \phi(\cdot,\cdot,\cdot)^\top \underline{\omega}_h - \Gamma_h(\cdot,\cdot,\cdot) \right)$. 
\State $\overline{Q}_h(\cdot,\cdot,\cdot)\leftarrow \Pi_{H-h+1}\left( \phi(\cdot,\cdot,\cdot)^\top \underline{\omega}_h + \Gamma_h(\cdot,\cdot,\cdot) \right)$.
\State Compute $(\widehat{\pi}_h, \nu'_h)$ and $(\pi'_h,\widehat{\nu}_h)$ as NE solutions to $\underline{Q}_h(\cdot, \cdot, \cdot)$ and $\overline{Q}_h(\cdot, \cdot, \cdot)$, respectively. 
\State  $\underline{V}_h(\cdot) \leftarrow \mathbb{E}_{a\sim \widehat{\pi}_h,b\sim \nu'_h}[\underline{Q}_h(\cdot,a,b)]$. 
\State $\overline{V}_h(\cdot) \leftarrow \mathbb{E}_{a\sim \pi'_h,b\sim \widehat{\nu}_h}[\overline{Q}_h(\cdot,a,b)]$.
\EndFor
\State \textbf{Output:} $\widehat{\pi} = (\widehat{\pi}_h)^H_{h=1},\widehat{\nu}=(\widehat{\nu}_h)^H_{h=1}$. 
\end{algorithmic}
\end{algorithm}

In the following sections, we will introduce different estimators that compute the Bellman operator weights under different coverage assumptions. Depending on which estimator we use, we also get the corresponding convergence guarantees. Intuitively, the error of estimating the Bellman operator is inflated by an extra term coming from data corruption. To have a sense of the strength of such guarantees in terms of $\epsilon$, we now present our first contribution, a result that states an algorithm-independent minimax lower bound for the corruption problem in the offline two-player zero-sum setting. 

\begin{thm}\label{thm:lower_bound}
   For every algorithm $L$, there exists a Markov game $\mathcal{G}$, an instance of the corrupted dataset, corruption level $\epsilon$, and a data collecting distribution $\rho$, such that, with probability at least $1/4$, $L$ will find a no-better than $\Omega(Hd\epsilon)$-approximate NE policy pair $(\widetilde{\pi},\widetilde{\nu})$. That is, with a probability of at least $1/4$, we have, for every $s\in\mathcal{S}$:
    \begin{align*}
        \optgap(\widetilde{\pi},\widetilde{\nu},s) = \Omega(Hd\epsilon)~.
    \end{align*}
\end{thm}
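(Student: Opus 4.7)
I would prove the bound by a two-hypothesis (Le Cam) indistinguishability argument, after first reducing the two-player problem to a single-agent one. Taking $\mathcal{B}=\{b_0\}$ so that the min player has a single trivial action, the linear Markov game collapses to a linear MDP for the max player with effective feature map $\phi(s,a)\equiv\phi(s,a,b_0)$. Any Nash pair $(\pi^*,\nu^*)$ then has $\pi^*$ optimal in this MDP and $\nu^*\equiv b_0$, and for every output $(\widehat\pi,\widehat\nu)$ the quantity $\optgap(\widehat\pi,\widehat\nu,s)$ equals the usual single-agent suboptimality $V^*_1(s)-V^{\widehat\pi}_1(s)$. It is therefore enough to exhibit a linear MDP on which every offline learner, given an $\epsilon$-corrupted dataset, incurs single-agent suboptimality $\Omega(Hd\epsilon)$ with probability at least $1/4$.

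\textbf{Hard instance family.} I would take an $H$-layered MDP with per-layer disjoint state spaces $\mathcal{S}_h$, two actions $\{a_+,a_-\}$, and a feature map for which the differences $\phi(\cdot,a_+)-\phi(\cdot,a_-)$ span a $d$-dimensional subspace across the horizon. Indexed by sign vectors $\sigma\in\{\pm 1\}^H$, the reward parameter at layer $h$ is $\theta^{(\sigma)}_h=\sigma_h\,\tau\, v_h$ with $v_h$ a coordinate direction of $\R^d$ and a scale $\tau=\Theta(\sqrt{d}\,\epsilon)$ chosen so that the per-layer value gap between the two actions is $\Omega(d\epsilon)$. Summing over $H$ layers, any policy that disagrees with $\sigma$ on a constant fraction of layers incurs cumulative suboptimality $\Omega(Hd\epsilon)$. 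The behavioral policy $\rho$ is taken uniform over the actions, so compliance and the normalizations $\|\phi\|_2\leq 1$, $\|\theta_h\|_2\leq \sqrt d$ are immediate, as is the existence of an NE.

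\textbf{Indistinguishability and main obstacle.} For the lower bound itself I would give an explicit adversarial coupling: for neighboring signs $\sigma,\sigma'$ differing in a single layer, an $\epsilon$-corruption adversary maps the clean data distribution of $\mathcal{G}^{(\sigma)}$ to a distribution within total variation $O(\epsilon)$ of that of $\mathcal{G}^{(\sigma')}$. Applying Le Cam's method per layer (equivalently, a Fano-style argument over the packing $\{\pm 1\}^H$) then forces any learner to err on a constant fraction of layers with probability at least $1/4$, which by the previous paragraph gives the claimed $\Omega(Hd\epsilon)$ suboptimality. The main difficulty is calibrating the per-layer reward scale $\tau$: it must be small enough that the two corrupted data distributions are $\epsilon$-close in TV (otherwise no $\epsilon$-corruption adversary can couple them), yet large enough that a wrong action in any single layer already costs $\Omega(d\epsilon)$ in value. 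Getting the factor $d$ rather than $\sqrt d$ requires that it is the per-layer value gap, not the $\ell_2$ parameter error, which governs suboptimality---this is the analogue in our setting of the $d\epsilon$ excess-risk lower bound for $\epsilon$-contaminated linear regression used in the single-agent corruption lower bound of \citet{zhang2022corruption}. The layered state space then prevents cancellations across $h$, yielding the additive $H$ factor.
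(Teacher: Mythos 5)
Your high-level skeleton --- a Le Cam-type indistinguishability argument in which an explicit $\epsilon$-corruption adversary couples the data distributions of two nearby games, together with a reduction that trivializes the min player --- is compatible in spirit with the paper's proof: the paper likewise builds two games $\mathcal{G},\mathcal{G}'$ differing in a single reward, shows the adversary can erase the difference within budget, and extracts the factor $H$ by making the critical state absorbing so that one per-step gap is paid for $\Omega(H)$ consecutive steps (your $H$ independent layers with an Assouad-type packing could also work, at the cost of a more delicate budget accounting across layers). The genuine gap is in the factor $d$. With $\norm{\phi}_2\leq 1$ and $\theta_h^{(\sigma)}=\sigma_h\tau v_h$, Cauchy--Schwarz caps the per-layer value gap at $2\tau$, so your calibration $\tau=\Theta(\sqrt{d}\,\epsilon)$ cannot produce a per-layer gap of $\Omega(d\epsilon)$; and the fallback you invoke --- a ``$d\epsilon$ excess-risk lower bound for $\epsilon$-contaminated linear regression'' --- is not available under the coverage you describe: when the covariates of interest are well represented, robust regression attains prediction error $\norm{\omega-\omega^*}_{\Sigma}=O(\gamma\epsilon)$ with no dimension factor (this is exactly the guarantee of Lemma \ref{asmp_scram}), and the dimension factor in the single-agent lower bound of \citep{zhang2022corruption} also originates from a train/test coverage mismatch rather than from regression hardness per se. No regression-only argument can supply the $d$.

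The missing ingredient is an under-coverage (pigeonhole) mechanism. In the paper's construction the critical tuple $(s^*_q,a_1,b_1)$ is chosen to be the \emph{least represented} one under $d^\rho$, hence carries mass at most $1/(SAB)=1/d$; among the roughly $KH/d$ samples that land there, the adversary's global budget of $\epsilon KH$ corruptions amounts to a \emph{local} contamination rate of $d\epsilon$, which is precisely what allows a reward perturbation of magnitude $\Theta(d\epsilon)$ (a $\mathrm{Ber}(d\epsilon)$ shift) to be hidden exactly, while the NE policy concentrates all of its mass on that tuple and therefore pays the full $d\epsilon$ per step. In your layered instance, if the relevant direction $v_h$ at layer $h$ is hit by essentially all $K$ samples of that layer --- which is what ``two actions, uniform $\rho$'' gives you --- the local contamination rate is only $\epsilon$, the hideable per-layer shift is $O(\epsilon)$, and the construction yields $\Omega(H\epsilon)$, not $\Omega(Hd\epsilon)$. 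To repair the proposal you must engineer, within each layer (or at a single absorbing state as the paper does), $d$ feature directions over which $\rho$ spreads its mass roughly uniformly while the target NE policy concentrates on a single one, set the reward scale there to $\Theta(d\epsilon)$, and discard the $\tau=\Theta(\sqrt{d}\,\epsilon)$ calibration.
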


\subsection{Uniform $\Sigma$-Coverage and Corrupted Covariates}\label{sec:unifrom_corrupted}

We start by considering the strongest coverage assumption first. 
Throughout Section \ref{sec:unifrom_corrupted}, we assume that the corrupted data $D$ satisfies the following.
\begin{assumption}[Uniform $\Sigma$-coverage]\label{asn:uniform-phi-coverage}
 For all $h\in [H]$ and $s\in\mathcal{S}$, $\E_{\rho}\left[ \Sigma_h \vert s_1=s \right] \succeq \kappa I$ for some $\kappa > 0$, where $\Sigma_h=\phi(s_h,a_h,b_h) \phi(s_h,a_h,b_h)^\top$.
 \end{assumption}
 Assumption \ref{asn:uniform-phi-coverage} implies that every state-action tuple in the support of the behavioral policy $\rho$ is covered by the offline data. In Section \ref{sec:coverage_discussion} we further discuss the strength of this assumption relative to other coverage assumptions studied in the literature.

\looseness-1Inspired by the Robust Value Iteration algorithm \citep{zhang2022corruption} that uses a Robust Least Squares (RLS) oracle as a subroutine for estimating the weights of the value function, we propose RLS-PMVI, which uses a similar oracle that relies on Assumption \ref{asn:uniform-phi-coverage}. However, it allows for an arbitrary corruption model, where an $\epsilon$-fraction of the dataset can be arbitrarily corrupted, that is, any component of the data samples can be modified arbitrarily. RLS-PMVI returns an approximate NE policy pair that incurs the following bounds on the gap. 

\begin{thm}\label{thm:linear_uniform_coverage}
    Suppose that Assumption \ref{asn:uniform-phi-coverage} holds on an $\epsilon$-corrupted dataset $D$ corresponding to a linear Markov game. Then, given $\delta > 0$, with probability at least $1-\delta$, RLS-PMVI with bonus term $\Gamma_h(s,a,b)=0$ achieves suboptimality gap upper bounded by
    \begin{align*}
         \widetilde{O} \left( \sqrt{\frac{H(H+\gamma)^2poly(d)}{\kappa^2 K}} + \frac{H(H+\gamma)}{\kappa}\epsilon \right)~.
    \end{align*}
\end{thm}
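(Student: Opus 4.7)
The plan is to adapt the pessimistic/optimistic value iteration analysis of \citet{zhong2022pessimistic} by replacing their regularized least-squares error bound with the Robust Least Squares (RLS) guarantee from \citet{zhang2022corruption}. Because the bonus $\Gamma_h=0$, the usual pessimism/optimism bracketing argument cannot be used directly; instead, I will bound each per-step Bellman estimation error in an $\ell_\infty$ sense (uniformly over $(s,a,b)$) via $\|\phi\|_2 \le 1$ and the norm guarantee on the weight estimate, then propagate across the horizon.

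\textbf{Step 1: suboptimality decomposition.} I invoke the extended value-difference identity to write
\begin{align*}
\optgap(\widehat\pi,\widehat\nu,s) \;\le\; \sum_{h=1}^H \mathbb{E}_{\pi^*,\widehat\nu}\!\bigl[\overline\Delta_h(s_h,a_h,b_h)\,\big|\,s_1{=}s\bigr] + \sum_{h=1}^H \mathbb{E}_{\widehat\pi,\nu^*}\!\bigl[\underline\Delta_h(s_h,a_h,b_h)\,\big|\,s_1{=}s\bigr],
\end{align*}
where $\overline\Delta_h(s,a,b)=|\phi(s,a,b)^\top\overline\omega_h-(\mathbb{B}_h\overline V_{h+1})(s,a,b)|$ and analogously for $\underline\Delta_h$. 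This step uses only the linearity of expectation and the fact that $\overline Q_h,\underline Q_h$ are clipped to $[0,H-h+1]$; the NE computation inside the algorithm cancels the on-policy terms in the standard way.

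\textbf{Step 2: reduction to corrupted linear regression.} By Definition~\ref{asmp_linearity}, for any fixed function $V:\mathcal{S}\to[0,H]$ there is $\omega^V_h\in\mathbb{R}^d$ with $\|\omega^V_h\|_2\le H\sqrt d$ such that $(\mathbb{B}_hV)(s,a,b)=\phi(s,a,b)^\top\omega^V_h$. Using the sample-splitting of Algorithm~\ref{alg:generic_PMVI} (batch $D_h$ is independent of $D_{h+1},\ldots,D_H$ that determine $\overline V_{h+1},\underline V_{h+1}$), conditional on $\overline V_{h+1}$ the inputs to \textrm{R-EST} at step $h$ form an $\epsilon$-corrupted linear regression instance with parameter $\omega^{\overline V_{h+1}}_h$ and $(H+\gamma)$-subGaussian noise (reward noise $\gamma$ plus the bounded next-step value fluctuation of scale $H$); the same holds for $\underline V_{h+1}$.

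\textbf{Step 3: invoke the RLS guarantee and union bound.} Assumption~\ref{asn:uniform-phi-coverage} on the corrupted dataset provides the covariance lower bound the RLS oracle requires (modulo a standard concentration of the empirical second moment onto its expectation, which I will carry out to justify applying the oracle on the empirical side at the $\kappa$-scale). Then \citet{zhang2022corruption}'s RLS bound yields, with probability at least $1-\delta/(2H)$,
\begin{align*}
\|\overline\omega_h-\omega^{\overline V_{h+1}}_h\|_2 \;\le\; \widetilde O\!\left(\sqrt{\tfrac{(H+\gamma)^2\,\mathrm{poly}(d)}{\kappa^2 K}}+\tfrac{(H+\gamma)\epsilon}{\kappa}\right),
\end{align*}
and analogously for $\underline\omega_h$. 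A union bound over $h\in[H]$ and over the two estimators gives overall failure probability $\delta$. Combined with $\|\phi\|_2\le 1$, this gives a uniform pointwise bound on $\overline\Delta_h$ and $\underline\Delta_h$.

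\textbf{Step 4: aggregate over the horizon.} Plugging the uniform per-step error into the decomposition of Step~1 and summing over $h\in[H]$ (the expectation over policies disappears because the bound is uniform in $(s,a,b)$) yields the claimed rate. The main obstacle is Step~2/Step~3: carefully handling the data-dependence of the regression targets (addressed by the sample-splitting into independent batches) and transferring the \emph{population} covariance lower bound of Assumption~\ref{asn:uniform-phi-coverage} into an \emph{empirical} covariance lower bound that survives the adversarial $\epsilon$-contamination of the covariates (addressed by a matrix-Bernstein concentration on the clean $(1-\epsilon)$-fraction and a worst-case handling of the contaminated $\epsilon$-fraction inside the RLS filter), which is precisely the step responsible for the additive $(H+\gamma)\epsilon/\kappa$ error in the bound.
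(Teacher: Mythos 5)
Your proposal follows essentially the same route as the paper's proof: decompose the suboptimality gap via the value-difference lemma into sums of per-step Bellman errors, reduce each step to an $\epsilon$-corrupted linear regression with $(H+\gamma)$-subGaussian noise (the paper's Lemma on the variance bound), invoke the RLS guarantee of \citet{zhang2022corruption} with a union bound over $h$ and the two estimators, and exploit $\norm{\phi}_2\le 1$ to get a pointwise bound that makes the trajectory expectations trivial, yielding the $H\cdot\mathcal{E}_1$ aggregation. The only (minor, and harmless) difference is that you explicitly flag the population-to-empirical covariance transfer needed before applying the RLS oracle, a step the paper glosses over by applying the oracle's population-covariance form directly.
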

The proof of Theorem \ref{thm:linear_uniform_coverage} is based on similar ideas to those used in the single-player setting. We provide the full proof in Appendix for completion. The following remarks are in order.
\begin{rem}
    Note that the order of $\epsilon$ is optimal. Furthermore, since the bonus term is $0$, the algorithm \textit{does not require knowledge of $\epsilon$.} Thus, if we have uniform coverage, under a stricter corruption model, i.e., the features can also be corrupted, and without knowledge of $\epsilon$, RLS-PMVI  incurs a suboptimality gap with optimal dependence on $\epsilon$.
\end{rem}
\begin{rem}
    Note that the coverage constant $\kappa$ is in the order of $1/d$, since $\norm{\phi}_2 \leq 1$. Thus, the bound corresponding to the corruption error becomes $O(H^2d\epsilon)$. 
\end{rem}

\subsection{LRU Coverage and Clean Covariates}\label{sec:lru_corrupted}

As mentioned in the previous section, Assumption \ref{asn:uniform-phi-coverage} is a very strong requirement to make on the data. Thus, we now focus our attention on a relaxed scenario, where the observed data covers only policies of interest. In the two-player zero-sum setting, it turns out that coverage of the NE policy pair alone is not enough to efficiently compute an approximate solution. We state such an assumption below. Given dataset $\mathcal{D}$ of $K$ trajectories, let us denote by 
\begin{align}\label{eq:sample_covariance}
    \Lambda_h = \sum^K_{\tau=1}\phi(s^\tau_h,a^\tau_h,b^\tau_h)\phi (s^\tau_h,a^\tau_h,b^\tau_h)^\top + I
\end{align}
the regularized sample covariance with respect to $\mathcal{D}$.

 \begin{assumption}[Low relative uncertainty]\label{asm:low_relative_uncertainty}
    There exists a constant $c_1>0$ such that, for all $x\in \mathcal{S}$:
    \begin{align*}
        \Lambda_h  \succeq I + c_1 K & \max \left\{ \sup_\nu \; \mathbb{E}_{\pi^*,\nu}[\phi_h\phi_h^\top \vert s_1=x], \right. \\ & \quad \quad \left. \sup_\pi \; \mathbb{E}_{\pi,\nu^*}[\phi_h\phi_h^\top \vert s_1=x] \right\}~,
    \end{align*}
    where $\phi_h = \phi(s_h,a_h,b_h)$, for any $h\in[H]$.
\end{assumption}

As shown in \citep{zhong2022pessimistic}  (and \citep{cui2022offline} for tabular settings), Assumption \ref{asm:low_relative_uncertainty} (and Assumption \ref{asn:unilateral-coverage} for tabular settings\footnote{See Section \ref{sec:coverage_discussion}.}) is necessary for learning NE policies in two-player settings. Thus, the focus of Section \ref{sec:lru_corrupted} will be on data that satisfies such an assumption.
An immediate problem that naturally follows is choosing the right robust estimator. The RLS estimator, described in the previous section, provides nice guarantees but it relies on Assumption \ref{asn:uniform-phi-coverage}. If we are to assume only LRU coverage on the data, then we need a different estimator. 

To that end, we utilize a result by \cite{chen2022online} that does not require Assumption \ref{asn:uniform-phi-coverage} to hold. Their algorithm, SCRAM, utilizes an alternating minimization scheme to compute first-order stationary points.  However, the utilization of such an oracle relies on the assumption that the covariates of the dataset are not corrupted. Translated into our setting, this assumption requires the features $\phi(s,a,b)$, and, as a consequence, tuples $(s,a,b)\in D$ to be clean, i.e., only an $\epsilon$-fraction of the rewards and the next states are allowed to be arbitrarily corrupted. Below, we state the conditions of SCRAM and its guarantee. A  detailed version can be found in  Appendix.
\begin{lem}[\cite{chen2022online}]\label{asmp_scram}
Given a dataset $D=\{ x_i, y_i\}_{i \in [K]}$, which is an $\epsilon$-corrupted version of dataset $\widetilde{D}= \{ x_i,\widetilde{y}_i\}$, where $\widetilde{y}_i=\langle \omega^*_i,x_i\rangle + \xi_i$, $\epsilon < 0.499$, $\norm{x_i}_2\leq 1$ and $\xi_i$ are conditionally zero-mean $\gamma^2$ sub-Gaussian, then SCRAM returns estimators $(\omega_k)_{k\in[K]}$, such that
\begin{align}\label{eq:scram_approximate}
    \norm{\omega - \omega^*}_{\Sigma} \leq O(\epsilon\gamma + \gamma d^{1/2}K^{-1/4})~,
\end{align}
omitting poly-log factors, where $\Sigma = (1/K)\sum^K_{k=1}x_k x^\top_k$ is the sample covariance.\footnote{For explicit bounds, see Appendix.}
\end{lem}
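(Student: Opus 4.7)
The plan is to invoke the analysis of SCRAM from \cite{chen2022online} essentially as a black box, since the statement is a specialization of their main theorem to our regression instance. First I would verify that the hypotheses of their theorem are satisfied: the covariates $\{x_i\}$ are clean with $\|x_i\|_2 \le 1$, the responses $\widetilde{y}_i = \langle \omega^*_i, x_i\rangle + \xi_i$ have conditionally zero-mean $\gamma^2$-sub-Gaussian noise, and the adversary corrupts at most an $\epsilon$-fraction of the $y_i$ with $\epsilon < 0.499$. Once these are in place, the bound in \eqref{eq:scram_approximate} follows by reading off the appropriate corollary of their guarantee with our parameters $(d,K,\epsilon,\gamma)$.

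For completeness I would sketch why the bound has the stated shape. SCRAM alternates between (i) a spectrally regularized weighted least-squares update for $\omega$ and (ii) a soft outlier-rejection step that re-weights samples. A standard descent argument shows that the iterates converge to a first-order stationary point of the (non-convex) SCRAM objective. The analysis then bounds the $\Sigma$-norm error at any such stationary point by decomposing it into two pieces: a statistical error due to sub-Gaussian noise on the uncorrupted samples, which contributes the $\gamma d^{1/2} K^{-1/4}$ term, and a bias contribution from the at-most-$\epsilon K$ corrupted responses that survive the weighting step, which contributes the $\epsilon \gamma$ term. The spectral regularizer is used to guarantee that the covariance restricted to the selected inliers remains well-conditioned on a constant fraction of the data, which is what allows the corruption bias to be controlled linearly in $\epsilon$ rather than catastrophically.

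The main obstacle is the non-convexity of the alternating objective: unlike ordinary robust least squares, SCRAM is not guaranteed to reach the global minimum, so one has to argue that \emph{every} first-order stationary point enjoys the stated guarantee. This is precisely the technical core of \cite{chen2022online} and is why the clean-samples rate degrades from the oracle $\sqrt{d/K}$ to $d^{1/2}K^{-1/4}$. Beyond this, the only additional work on our side is to match notation (identifying their regression target with the Bellman-backup regression we perform in Section~\ref{sec:lru_corrupted}), substitute our boundedness and sub-Gaussianity constants, and absorb polylogarithmic factors into the $\widetilde{O}(\cdot)$. Since later theorems use this lemma only as an oracle guarantee on $\|\omega - \omega^*\|_\Sigma$, no additional properties beyond \eqref{eq:scram_approximate} need to be extracted from the proof.
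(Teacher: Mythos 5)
Your proposal matches the paper's treatment exactly: this lemma is an imported oracle guarantee, and the paper likewise states the detailed version (Theorem~\ref{thm:scram_guarantee} in the appendix, with the $\min\{\cdot,\cdot\}$ and $\log(1/\epsilon)$ factors made explicit) and attributes it to \citet{chen2022online} without re-proving it, exactly as you propose. Your verification of the hypotheses (clean bounded covariates, conditionally sub-Gaussian noise, $\epsilon<0.499$) and your remark that later results only consume the $\norm{\omega-\omega^*}_{\Sigma}$ bound are the only steps the paper actually needs, so the additional sketch of SCRAM's internals, while reasonable, is not load-bearing.
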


In Section \ref{sec:lru_corrupted}, we use \textrm{SCRAM} as a robust estimator with bonus term defined as
\begin{align*}
    \Gamma_h(s,a,b) =\left(\sqrt{K} \mathcal{E} + 2H\sqrt{d} \right) \norm{\phi(s,a,b)}_{\Lambda_h^{-1}}~,
\end{align*}
where $\Lambda_h$ is defined in Equation \ref{eq:sample_covariance} and $\mathcal{E}$ denotes the upper bound in Equation \eqref{eq:scram_approximate}.

We are now ready to state the main result of Section \ref{sec:lru_corrupted}, which gives an upper bound on the suboptimality gap of the policy pair returned by SCRAM-PMVI. 

\begin{thm}\label{thm:linear_case}
    Suppose that Assumption \ref{asm:low_relative_uncertainty} holds with given constant $c_1$. Let $\delta >0$, $\epsilon < 1/2$ and let $D$ be the $\epsilon$-corrupted version of the dataset $\widetilde{D}$ comprised of $K$ trajectories of length $H$, where $K \geq \log (\min (K,d))/\epsilon$ and $(\widetilde{s}^\tau_h,\widetilde{a}^\tau_h,\widetilde{b}^\tau_h) = (s^\tau_h,a^\tau_h,b^\tau_h)$, for all $\tau \in [K]$, $h\in[H]$. Then, with probability at least $1-\delta$, SCRAM-PMVI outputs $(\widehat{\pi},\widehat{\nu})$ that satisfy, for every $s\in\mathcal{S}$:
    \begin{align*}
        \optgap (\widehat{\pi},\widehat{\nu},s) & \leq \widetilde{O}\left( \frac{1}{\sqrt{c_1}}(\gamma+H)H\sqrt{d}\epsilon + \frac{H^2d}{\sqrt{c_1K}}\right)~.
    \end{align*}
\end{thm}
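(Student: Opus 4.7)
The plan is to follow the standard pessimism-based suboptimality decomposition established in \cite{zhong2022pessimistic} for PMVI, but with two key modifications that reflect the corruption-robust setting: (i) replacing the regularized least-squares estimator by the SCRAM oracle invoked on the clean-covariate but corrupted-label regression, and (ii) inflating the bonus to absorb the resulting estimation error. First, I would verify that, for every $h\in[H]$, the pair $\bigl(\phi(s_h^\tau,a_h^\tau,b_h^\tau),\, r_h^\tau + \underline{V}_{h+1}(s_{h+1}^\tau)\bigr)_{\tau\in[K]}$ fits the format required by Lemma~\ref{asmp_scram}. The covariates are clean by hypothesis; on the clean tuples, the label equals $\phi^\top \omega_h^{\pi,\nu'} + \xi$ where $\xi$ is the sum of the reward noise (variance $\gamma^2$) and the martingale difference $\underline{V}_{h+1}(s_{h+1}^\tau) - (\mathbb{P}_h \underline{V}_{h+1})(s_h^\tau,a_h^\tau,b_h^\tau)$, whose range is at most $H$. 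Hence the effective sub-Gaussian parameter is $O\bigl((H+\gamma)^2\bigr)$, and SCRAM yields
\begin{equation*}
    \bigl\lVert \underline{\omega}_h - \omega_h^{\widehat{\pi},\nu'}\bigr\rVert_{\Sigma_h} \;\leq\; \mathcal{E} \;=\; \widetilde{O}\Bigl((H+\gamma)\epsilon + (H+\gamma)\sqrt{d}\,K^{-1/4}\Bigr),
\end{equation*}
with $\Sigma_h=(1/K)\sum_{\tau}\phi_h^\tau(\phi_h^\tau)^\top$, and symmetrically for $\overline{\omega}_h$. The usual covering argument over the (random, data-dependent) value functions $\underline{V}_{h+1}, \overline{V}_{h+1}$ needs to be done carefully, as $\mathcal{V}_h$ has effective log-cardinality $\widetilde{O}(d)$, contributing to the polylogarithmic factors in $\mathcal{E}$.

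Next, I would convert the sample-covariance norm bound into an error guarantee on the Bellman target at an arbitrary $(s,a,b)$. Writing $\Lambda_h = K\Sigma_h + I$, Cauchy--Schwarz gives
\begin{equation*}
    \bigl\lvert \phi(s,a,b)^\top (\underline{\omega}_h - \omega_h^{\widehat{\pi},\nu'}) \bigr\rvert \;\leq\; \bigl\lVert \underline{\omega}_h - \omega_h^{\widehat{\pi},\nu'}\bigr\rVert_{\Lambda_h}\,\lVert \phi(s,a,b)\rVert_{\Lambda_h^{-1}},
\end{equation*}
and $\lVert \cdot\rVert_{\Lambda_h}^2 \leq K \lVert\cdot\rVert_{\Sigma_h}^2 + \lVert\cdot\rVert_2^2$, so that the right-hand side is at most $(\sqrt{K}\mathcal{E} + 2H\sqrt{d})\,\lVert\phi(s,a,b)\rVert_{\Lambda_h^{-1}} = \Gamma_h(s,a,b)$, using the a priori bound $\lVert \omega_h^{\pi,\nu}\rVert_2 \leq H\sqrt{d}$ from Definition~\ref{asmp_linearity}. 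This establishes that $\Gamma_h$ is a valid uncertainty quantifier for the corrupted-data Bellman backup of both $\underline{V}_{h+1}$ and $\overline{V}_{h+1}$ with high probability, so that by construction $\underline{Q}_h \leq (\mathcal{T}_h \underline{V}_{h+1}) \leq \overline{Q}_h$ and, inductively, $\underline{V}_h \leq V_h^{\widehat{\pi},*} \leq V_h^{*,\widehat{\nu}} \leq \overline{V}_h$.

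Then I would invoke the PMVI suboptimality decomposition of \cite{zhong2022pessimistic}: under the bracketing above,
\begin{equation*}
    \optgap(\widehat{\pi},\widehat{\nu},s) \;\leq\; \sum_{h=1}^{H} \Bigl( \mathbb{E}_{\pi^*,\widehat{\nu}}[2\Gamma_h(s_h,a_h,b_h)\mid s_1=s] + \mathbb{E}_{\widehat{\pi},\nu^*}[2\Gamma_h(s_h,a_h,b_h)\mid s_1=s]\Bigr).
\end{equation*}
Each expected bonus is controlled by LRU coverage: by Jensen and the trace trick,
\begin{equation*}
    \mathbb{E}_{\pi^*,\widehat{\nu}}\bigl[\lVert\phi_h\rVert_{\Lambda_h^{-1}}\bigr] \;\leq\; \sqrt{\mathrm{tr}\bigl(\mathbb{E}_{\pi^*,\widehat{\nu}}[\phi_h\phi_h^\top]\,\Lambda_h^{-1}\bigr)} \;\leq\; \sqrt{\tfrac{d}{c_1 K}},
\end{equation*}
where the second inequality uses Assumption~\ref{asm:low_relative_uncertainty} to replace $\mathbb{E}_{\pi^*,\widehat{\nu}}[\phi_h\phi_h^\top]$ by $\Lambda_h/(c_1 K)$ inside the trace. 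Multiplying by the scalar $\sqrt{K}\mathcal{E} + 2H\sqrt{d}$, summing over $h$, and substituting the expression for $\mathcal{E}$ yields the claimed bound (the $(H+\gamma)\sqrt{d}\,K^{-1/4}$ term inside $\mathcal{E}$ becomes $H(H+\gamma)d/(c_1^{1/2}K^{1/4})$, which is dominated by the two displayed terms under the regime $K\gtrsim \log(\min(K,d))/\epsilon$, yielding the stated $\widetilde{O}$-expression).

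The main obstacle is step two: correctly converting SCRAM's sample-covariance guarantee into a $\Lambda_h^{-1}$-weighted pointwise error while handling the statistical dependence between $\underline{V}_{h+1}$ and the sample $\{s_{h+1}^\tau\}$. The data-splitting across time steps in Algorithm~\ref{alg:generic_PMVI} decouples $\underline{V}_{h+1}$ from batch $D_h$, and combined with a uniform covering over the parametric value-function class built from $\underline{\omega}_{h+1}, \Gamma_{h+1}$ this dependence can be removed with only polylog-in-$d$ inflation of $\mathcal{E}$. A secondary subtlety is that SCRAM's bound is stated for a fixed target vector $\omega_h^*$, whereas here the ``ground truth'' $\omega_h^{\widehat{\pi},\nu'}$ is itself defined through the iterates; a triangle inequality against the true $Q$-function representation of $\underline{V}_{h+1}$ combined with the linear-MG assumption resolves this.
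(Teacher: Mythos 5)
Your proposal follows essentially the same route as the paper's proof: invoke SCRAM on the clean-covariate/corrupted-label regression for each batch $D_h$, convert the $\Sigma_h$-norm guarantee to a $\Lambda_h$-norm bound via $\norm{\cdot}^2_{\Lambda_h}\leq K\norm{\cdot}^2_{\Sigma_h}+4H^2d$ and Cauchy--Schwarz to validate the bonus $\Gamma_h=(\sqrt{K}\mathcal{E}+2H\sqrt{d})\norm{\phi}_{\Lambda_h^{-1}}$, then combine the pessimism-based value sandwiching and suboptimality decomposition with the LRU trace bound $\mathbb{E}[\norm{\phi_h}_{\Lambda_h^{-1}}]\leq\sqrt{d/(c_1K)}$ (the paper's Lemmas A.5--A.7). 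One small bookkeeping caveat: your claim that the $(H+\gamma)\sqrt{d}K^{-1/4}$ term in $\mathcal{E}$ is dominated under $K\gtrsim \log(\min(K,d))/\epsilon$ does not hold in general; the stated $H^2dK^{-1/2}$ rate instead comes from taking the $\gamma\sqrt{(d+\log(1/\delta))/K}$ branch of the min in the full SCRAM guarantee (Theorem \ref{thm:scram_guarantee}), exactly as the paper does.
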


    \looseness-1Note the $\sqrt{d}$ factor in the term coming from the corruption error. This might seem contradictory to our lower bound of Theorem \ref{thm:lower_bound} at first. However, there is a hidden dependence on $d$ in the $c_1$ constant, since $c_1$ cannot be arbitrarily large. By Assumption \ref{asm:low_relative_uncertainty}, first one can easily see that $0<c_1\leq 1$. Moreover, as shown in the proof of Proposition \ref{prop:equivalent_assumptions} (see Appendix), $c_1$ can be $O(1/d)$, in which case the bounds would be written as 
         $$\optgap (\widehat{\pi},\widehat{\nu},s)  \leq \widetilde{O}\left( (\gamma+H)Hd\epsilon + H^2d^{3/2}K^{-1/2}\right).$$
    
    Note that one can always construct other examples in which the dependence of the gap on $d$ is looser. However, here we are interested in showing that there is an explicit dependence in order to confirm the matching order.

It is worth mentioning that knowledge of $\epsilon$ is required by both SCRAM and RLS oracles. Moreover, agnostic learning without knowledge of $\epsilon$ is  impossible for linear Markov games, without uniform data coverage, as the next result shows. The proof follows immediately from that of Theorem 3.4 of \citep{zhang2022corruption}, by restricting the second player's action space to a single action and observing that the LRU coverage reduces to their finite relative condition number assumption.

\begin{cor}\label{thm:agnostic_learning}
    Under Assumption \ref{asm:low_relative_uncertainty}, for every algorithm $L$ that achieves a diminishing suboptimality gap in a clean environment, there exist linear Markov games $\mathcal{G}_1$ and $\mathcal{G}_2$, an instance of corrupted data, and a data collecting distribution $\rho$, such that, for every $\epsilon \in (0,1/2]$, $L$ achieves $\optgap (\widehat{\pi},\widehat{\nu},s) \geq 1/2$, for any $s\in\mathcal{S}$, with probability at least $1/4$, on at least one of the games.
\end{cor}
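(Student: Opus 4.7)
The plan is to reduce the two-player impossibility to the single-agent impossibility of \cite{zhang2022corruption} by collapsing the min player, and then invoke their Theorem~3.4 verbatim. Concretely, I would take the two hard MDP instances $M_1, M_2$ used in the proof of \cite{zhang2022corruption}'s Theorem~3.4 and lift them to linear two-player zero-sum Markov games $\mathcal{G}_1, \mathcal{G}_2$ by setting $|\mathcal{B}|=1$, i.e., the min player has a single trivial action. The feature map $\phi(s,a,b)=\phi_{\text{MDP}}(s,a)$ then inherits the linearity and the norm bounds $\|\phi\|_2\le 1$, $\|\theta_h\|_2\le \sqrt{d}$, $\|\mu_h(\mathcal{S})\|_2\le \sqrt{d}$ directly from the single-agent construction, so the lifted instances satisfy Definition~\ref{asmp_linearity}. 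The behavioral policy $\rho$ is the single-agent behavioral policy paired with the trivial $\nu$, and the $\epsilon$-contamination of Assumption~\ref{asm:corruption_model} reduces exactly to the single-agent $\epsilon$-contamination, since the $b$-coordinate is deterministic.

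Next, I would verify that Assumption~\ref{asm:low_relative_uncertainty} becomes the finite relative condition number assumption of \cite{zhang2022corruption}. Since $\mathcal{B}$ is a singleton, both suprema in the LRU condition collapse: $\sup_\nu \E_{\pi^*,\nu}[\phi_h\phi_h^\top\mid s_1=x] = \E_{\pi^*}[\phi_h\phi_h^\top\mid s_1=x]$ and $\sup_\pi \E_{\pi,\nu^*}[\phi_h\phi_h^\top\mid s_1=x] = \sup_\pi \E_{\pi}[\phi_h\phi_h^\top\mid s_1=x]$. Thus Assumption~\ref{asm:low_relative_uncertainty} reads $\Lambda_h \succeq I + c_1 K \sup_\pi \E_{\pi}[\phi_h\phi_h^\top\mid s_1=x]$, which is precisely the coverage hypothesis under which \cite{zhang2022corruption} prove their lower bound. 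This step is essentially the content of the hint.

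Then I would match the performance metrics. Because $\mathcal{B}$ is a singleton, every $\nu$ equals $\nu^*$, so for any $\pi$ the value $V^{\pi,\nu}_h(s)$ coincides with the value of $\pi$ in the induced MDP, and $V^*_h(s)$ coincides with the optimal MDP value. Hence
\begin{align*}
\optgap(\widehat\pi,\widehat\nu,s) \;=\; V^{*,\widehat\nu}_1(s) - V^{\widehat\pi,*}_1(s) \;=\; V^*_1(s) - V^{\widehat\pi}_1(s),
\end{align*}
which is exactly the single-agent suboptimality of $\widehat\pi$. Therefore any algorithm $L$ for the offline two-player problem that achieves diminishing gap in the clean regime induces, through this reduction, a single-agent algorithm with diminishing suboptimality in the clean regime (and crucially without any knowledge of $\epsilon$, by assumption on $L$). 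Applying Theorem~3.4 of \cite{zhang2022corruption} to $M_1,M_2$ then yields, with probability at least $1/4$, a single-agent suboptimality $\ge 1/2$ on at least one of them, and translating back gives $\optgap(\widehat\pi,\widehat\nu,s)\ge 1/2$.

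The main obstacle is administrative rather than conceptual: one must check that the hard MDPs of \cite{zhang2022corruption} genuinely satisfy Definition~\ref{asmp_linearity} with the same dimension $d$ and norm bounds, and that our episodic, clipped, Nash-based formulation (in particular the $\Pi_{H-h+1}$ truncation and the zero-sum value definitions) does not alter their construction once $|\mathcal{B}|=1$. Once these sanity checks are in place, the reduction is mechanical and the lower bound transfers immediately.
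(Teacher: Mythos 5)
Your reduction is exactly the paper's argument: the paper proves this corollary by restricting the second player's action space to a single action, observing that the LRU assumption then collapses to the finite relative condition number assumption of \citet{zhang2022corruption}, and invoking their Theorem~3.4 directly. Your write-up simply spells out the sanity checks (linearity of the lifted instances, collapse of the suprema, identification of the suboptimality gaps) that the paper leaves implicit, so it is correct and follows essentially the same route.
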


With this result, we end our discussion under corrupted data coverage assumptions. In order to give a formal characterization of the used coverage assumptions and other similar ones in the literature, we provide a formal discussion in Section \ref{sec:coverage_discussion}. In the next section, we turn our attention to the more difficult setting where coverage guarantees are no longer present in the corrupted data and provide near-optimal bounds on the suboptimality of SCRAM-PMVI with novel bonus terms designed to account for this corrupted coverage. 

\section{RESULTS UNDER COVERAGE ON CLEAN DATA}\label{sec:no_coverage_bounds}

In previous sections, we considered the problem of corruption in the linear setting, under the assumption that the corrupted data preserves the coverage that is necessary to solve the underlying Markov game. However, this might be too restrictive of an assumption since, in the worst case, the attacker would corrupt precisely those state-action tuples that belong to trajectories covered by LRU policies. 
Thus, the next natural question is whether we can approximately solve the problem given that we are not guaranteed that the available corrupted data satisfies the LRU coverage assumption. 

Apart from not assuming any guarantees on coverage, we also assume fully arbitrary corruption, in the sense that the attacker can arbitrarily corrupt an $\epsilon$-fraction of both covariates and observations. If we are to apply \textrm{SCRAM} in this setting, we would expect an additional error coming from the corruption of covariates, and another one coming from the corrupted coverage. 

\looseness-1We find that by carefully designing a bonus term that also takes into account these two additional errors, we are able to approximately solve the corruption robust offline two-player zero-sum game, as the following result states.

\begin{thm}\label{thm:general_result}
    Suppose that the condition of Assumption \ref{asm:low_relative_uncertainty} is satisfied only on the clean dataset $\widetilde{D}$, for a given constant $c_1$, that is, assume that the clean sample covariance matrix $\sum^{K}_{\tau =1} \widetilde{\phi}^\tau_h (\widetilde{\phi}^\tau_h)^\top + I,$
    where $\widetilde{\phi}^\tau_h$ denotes the clean feature of the sample at time-step $h$ of episode $\tau$, satisfies Assumption \ref{asm:low_relative_uncertainty}.
    Furthermore, let $\delta > 0$, $\epsilon \in (0,1/2)$, and $K \geq \log (\min \{ K,d\}) /\epsilon$. Then, under Assumption \ref{asm:corruption_model}, with probability at least $1-\delta$, SCRAM-PMVI with bonus defined as
    \begin{align*}
       \Gamma_h(\cdot) = \left(\sqrt{(1-\epsilon)K}\mathcal{E} + (\sqrt{\epsilon K} + 2) H\sqrt{d} \right) \norm{\phi(\cdot)}_{\Lambda^{-1}_h}
    \end{align*}
    returns $(\widehat{\pi},\widehat{\nu})$ that satisfy, for every $s\in\mathcal{S}$:
    \begin{align*}
        \optgap(\widehat{\pi},\widehat{\nu},s) \leq   O\left( \frac{1}{\sqrt{c_1}}(\gamma+H)Hd\sqrt{\epsilon} + \frac{H^2d}{\sqrt{c_1K}}\right)~.
    \end{align*}
\end{thm}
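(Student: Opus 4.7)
The plan is to adapt the PMVI template from Theorem \ref{thm:linear_case}, with two new ingredients needed because (i) the adversary can corrupt covariates as well as responses, and (ii) the LRU coverage is only guaranteed on the clean sample covariance $\widetilde{\Lambda}_h$ rather than the corrupted one $\Lambda_h$. First I would reuse the standard PMVI decomposition: conditioning on the event that $|\phi(\cdot)^\top \widehat{\omega}_h - (\mathcal{B}_h \underline{V}_{h+1})(\cdot)| \leq \Gamma_h(\cdot)$ uniformly in $h$, one obtains
\begin{align*}
\optgap(\widehat{\pi}, \widehat{\nu}, s) \leq \sum_{h=1}^{H} \Bigl( \mathbb{E}_{\pi^*, \widehat{\nu}}[\Gamma_h \mid s_1 {=} s] + \mathbb{E}_{\widehat{\pi}, \nu^*}[\Gamma_h \mid s_1 {=} s] \Bigr),
\end{align*}
reducing the theorem to proving that the stated $\Gamma_h$ is a valid confidence width and to bounding the resulting expected bonus.

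For bonus validity, I would partition the dataset indices into a clean-covariate set $\mathcal{C}$ (of size at least $(1-\epsilon)K$) and a corrupted-covariate set $\mathcal{I}$ (of size at most $\epsilon K$). On $\mathcal{C}$, the SCRAM guarantee from Lemma \ref{asmp_scram} applied to the clean subproblem yields $\norm{\widehat{\omega}_h - \omega_h^V}_{\Sigma_{\mathcal{C}}} \leq \mathcal{E}$, and a Cauchy--Schwarz step in the $\Lambda_h$-norm converts this into a pointwise contribution of $\sqrt{(1-\epsilon)K}\mathcal{E}\norm{\phi}_{\Lambda_h^{-1}}$. The corrupted-covariate samples enter the estimator's normal equations directly; bounding each summand via $\norm{\phi}_2 \cdot \norm{\omega_h^V}_2 \leq H\sqrt{d}$ and applying Cauchy--Schwarz over the $\leq \epsilon K$ indices produces $\sqrt{\epsilon K}H\sqrt{d}\norm{\phi}_{\Lambda_h^{-1}}$, while the remaining $2H\sqrt{d}\norm{\phi}_{\Lambda_h^{-1}}$ absorbs the $I$-regularization bias. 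These three pieces match the stated $\Gamma_h$ exactly.

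The core technical step is bounding $\mathbb{E}_{\pi^*, \nu}[\norm{\phi}_{\Lambda_h^{-1}}]$ using LRU on the \emph{clean} covariance $\widetilde{\Lambda}_h$. I would first establish the PSD comparison $\Lambda_h \succeq \widetilde{\Lambda}_h - 2\epsilon K \cdot I$ by noting that at most $\epsilon K$ rank-one terms, each of operator norm at most $1$, differ between $\Lambda_h$ and $\widetilde{\Lambda}_h$. Combined with $\widetilde{\Lambda}_h \succeq I + c_1 K \Phi^\star$ for $\Phi^\star = \mathbb{E}_{\pi^*,\nu}[\phi\phi^\top]$, this yields $\Lambda_h \succeq c \cdot \widetilde{\Lambda}_h$ on the $\Phi^\star$-subspace provided the LRU signal $c_1 K$ dominates the perturbation $2\epsilon K$, which is exactly where the assumption $K \geq \log(\min\{K,d\})/\epsilon$ enters. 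The LRU trace trick then gives $\mathbb{E}[\phi^\top \Lambda_h^{-1} \phi] \leq O(d/(c_1 K))$, and Jensen's inequality upgrades this to $\mathbb{E}[\norm{\phi}_{\Lambda_h^{-1}}] \leq O(\sqrt{d/(c_1 K)})$. Multiplying with the bonus magnitude, substituting SCRAM's $\mathcal{E} = O(\epsilon\gamma + \gamma d^{1/2}K^{-1/4})$ from Lemma \ref{asmp_scram}, and summing over $h\in [H]$ produces the advertised rate $O((\gamma+H)Hd\sqrt{\epsilon}/\sqrt{c_1} + H^2 d/\sqrt{c_1 K})$.

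The main obstacle is the PSD transfer: the naive $\Lambda_h \succeq \widetilde{\Lambda}_h - 2\epsilon K I$ is too crude on the $I$-regularized part of the spectrum, and one has to project onto the $\Phi^\star$-subspace where LRU delivers a signal of order $c_1 K$ and treat the orthogonal complement separately. A secondary subtlety is the interaction between SCRAM---whose guarantee nominally assumes clean covariates---and the $\mathcal{I}$-contamination in the feature matrix; coupling the clean/corrupted partition of the bonus-validity step with a stability argument for SCRAM's alternating minimization under feature-row outliers is what makes the $\sqrt{(1-\epsilon)K}\mathcal{E}$ piece of the bonus correct.
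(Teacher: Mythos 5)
Your overall architecture (pessimism decomposition, clean/corrupted split of $\norm{\underline{\omega}_h-\underline{\omega}^*_h}_{\Lambda_h}^2$ to validate the bonus, then an expected-bonus bound via clean LRU) matches the paper's, and your bonus-validity step is essentially the paper's Lemma \ref{lem:bellman_error_no_coverage}. The genuine gap is in your core technical step, the transfer of coverage from $\widetilde{\Lambda}_h$ to $\Lambda_h$. The comparison $\Lambda_h \succeq \widetilde{\Lambda}_h - 2\epsilon K I$ is true but cannot be repaired by projecting onto the range of $\Phi^\star=\mathbb{E}_{\pi^*,\nu}[\phi\phi^\top]$: in any direction $v$ with small $v^\top\Phi^\star v$ (and LRU places no lower bound on the nonzero eigenvalues of $\Phi^\star$), the signal $c_1 K\, v^\top\Phi^\star v$ need not dominate $2\epsilon K$, and your required condition ``$c_1 K$ dominates $2\epsilon K$'' amounts to $c_1 \gtrsim \epsilon$, which is not implied by any hypothesis ($c_1$ can be $O(1/d)$ while $\epsilon$ is up to $1/2$). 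Moreover, the assumption $K \ge \log(\min\{K,d\})/\epsilon$ that you invoke here is a \emph{lower} bound on $\epsilon K$ needed for the SCRAM guarantee (Theorem \ref{thm:scram_guarantee}); it makes the perturbation $2\epsilon K$ larger, not smaller, so it cannot play the role you assign it. If your PSD transfer did go through it would also erase the corrupted-coverage contribution entirely, whereas the theorem's bound genuinely contains such a term.

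The paper closes this gap with a different tool: it writes $\Lambda_h = S_h - \Delta_h$ with $\Delta_h = U_hE_hU_h^\top$ the (signed) eigendecomposition of the covariate perturbation, applies the inversion identity $\Lambda_h^{-1} = S_h^{-1} + S_h^{-1}M_hS_h^{-1}$ with $M_h = U_h(I - E_hU_h^\top S_h^{-1}U_h)^{-1}E_hU_h^\top$, shows $\bigl(I - E_hU_h^\top S_h^{-1}U_h\bigr)^{-1} \preceq \tfrac{1}{1-\epsilon}I$ via $\norm{S_hU_h}_2 \le 1+K$ and $\norm{E_h}_2\le\epsilon K$, and then bounds $Tr\bigl(S_h^{-1}M_hS_h^{-1}\Sigma_h(x)\bigr) \le \tfrac{d\epsilon}{c_1K(1-\epsilon)}$ using only the clean LRU assumption (Lemma \ref{lem:auxiliary_matrix_result}). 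This yields $\mathbb{E}\bigl[\norm{\phi}_{\Lambda_h^{-1}}\bigr] \le \sqrt{d/(c_1K)} + \sqrt{d\epsilon/(c_1K(1-\epsilon))}$ with no eigenvalue-domination requirement, and the extra additive term is exactly what produces the corrupted-coverage contribution in the final rate. Without an argument of this type (or some substitute that controls $\Lambda_h^{-1} - S_h^{-1}$ multiplicatively in $\epsilon$ rather than additively in $\epsilon K$), your proof does not go through.
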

Note that we incur an additional $\sqrt{d}$ factor in the corruption error and that the order of $\epsilon$ is not optimal. This is due to both the error from corrupted covariates and the one from the corrupted coverage. Thus, the next natural question is whether these bounds can be improved, at least in terms of the dependence on $\epsilon$. We answer this question in the affirmative. We show that, under a mild assumption on the feature space, we can improve the order of $\epsilon$ at the cost of an extra $d^{3/2}$ factor, thus recovering the optimal dependency on the corruption level. First, we state the assumption. 

\begin{assumption}\label{asm:features_lower_bound}
    Given a linear Markov game as in Definition \ref{asmp_linearity}, we assume that the features satisfy $\min_{s,a,b}\norm{\phi(s,a,b)}_2 \geq c_2$, for some $c_2>0$. 
\end{assumption}

We emphasize that such an assumption is not restrictive and that $c_2$ is in the order of $1/\sqrt{d}$ under various feature constructions, such as random Fourier features  \citep{rahimi2007random}. We find that Assumption \ref{asm:features_lower_bound} is enough to obtain order-optimal bounds in terms of $\epsilon$, albeit having an additional $O(d)$ term when comparing with Theorem \ref{thm:general_result} coming from the constant $c_2$.

\begin{thm}\label{thm:second_general_result}
    Suppose that the conditions of Theorem \ref{thm:general_result} and Assumption \ref{asm:features_lower_bound} hold. Then, with probability at least $1-\delta$, SCRAM-PMVI with bonus $\Gamma_h(\cdot)$ defined as
    \begin{align*}
          \left( 2(1-\epsilon)K\mathcal{E} + \epsilon KH\sqrt{d} + H\sqrt{Kd}\right) \norm{\phi(\cdot)^\top\Lambda^{-1}_h}_2~,
    \end{align*}
    returns $(\widehat{\pi},\widehat{\nu})$ that satisfy, for every $s\in\mathcal{S}$:
    \begin{align*}
        \optgap(\widehat{\pi},\widehat{\nu},s) \leq \widetilde{O} \left( \frac{1}{c_1c_2}H^2d^{3/2}\epsilon + \frac{H^2d^{3/2}}{c_1c_2\sqrt{K}} \right)~.
    \end{align*}
\end{thm}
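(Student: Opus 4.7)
My plan is to follow the same template as the proof of Theorem~\ref{thm:general_result}, but to modify the Cauchy--Schwarz step: instead of using the Mahalanobis decomposition $\phi^\top u \leq \norm{\phi}_{\Lambda_h^{-1}}\norm{u}_{\Lambda_h}$, I would use the plain $\ell_2$ decomposition $\phi^\top u = (\phi^\top \Lambda_h^{-1})(\Lambda_h u) \leq \norm{\phi^\top \Lambda_h^{-1}}_2 \norm{\Lambda_h u}_2$. This replaces $\norm{\phi}_{\Lambda_h^{-1}}$ by $\norm{\phi^\top \Lambda_h^{-1}}_2$ in the bonus, at the price of inflating the scalar coefficients by a factor of $\sqrt{\lambda_{\max}(\Lambda_h)} \leq \sqrt{K+1}$. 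Assumption~\ref{asm:features_lower_bound} would then let me control the expected $\ell_2$-norm at rate $O(d/K)$ instead of the $O(\sqrt{d/K})$ rate of the Mahalanobis bonus, and this is precisely what promotes the dependence on $\epsilon$ from $\sqrt{\epsilon}$ to $\epsilon$.

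\textbf{Pointwise error bound.} I would first invoke the PMVI gap decomposition as in Theorem~\ref{thm:general_result} to reduce $\optgap(\widehat{\pi},\widehat{\nu},s)$ to $\sum_h(\E_{\pi^*,\widehat{\nu}}[\Gamma_h] + \E_{\widehat{\pi},\nu^*}[\Gamma_h])$, provided the pessimism and optimism properties hold. These in turn reduce to the pointwise bound $|\phi^\top \underline{\omega}_h - (\mathbb{B}_h \underline{V}_{h+1})(\cdot)| \leq \Gamma_h(\cdot)$, where $\mathbb{B}_h$ denotes the Bellman operator. Applying $\ell_2$ Cauchy--Schwarz as above, it then suffices to bound $\norm{\Lambda_h(\underline{\omega}_h - \omega_h^V)}_2$ by the coefficient in the new bonus. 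I would split this vector into three pieces: (a) on the clean subsample, Lemma~\ref{asmp_scram} combined with $\norm{\Lambda_h u}_2 \leq \sqrt{\lambda_{\max}(\Lambda_h)}\,\norm{u}_{\Lambda_h}$ yields the $2(1-\epsilon)K\mathcal{E}$ term; (b) each of the $\epsilon K$ corrupted tuples contributes at most $H\sqrt{d}$ via $\norm{\omega_h^V}_2 \leq H\sqrt{d}$ and $\norm{\phi}_2 \leq 1$, summing to $\epsilon K H\sqrt{d}$; (c) the sub-Gaussian noise aggregate $\sum_\tau \phi^\tau_h \xi^\tau_h$ concentrates in $\ell_2$ at rate $H\sqrt{Kd}$ by a vector Bernstein / self-normalized inequality.

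\textbf{Expected bonus and assembly.} For the expected bonus I would apply Jensen to pass to $\sqrt{\mathrm{tr}(\Lambda_h^{-2}\,\E_{\pi^*,\widehat{\nu}}[\phi\phi^\top])}$. LRU on the clean data gives $\E[\phi\phi^\top] \preceq \widetilde{\Lambda}_h/(c_1 K)$, and an operator-norm perturbation argument between the clean and corrupted empirical covariances transfers this to a bound involving $\Lambda_h$ up to constants. Assumption~\ref{asm:features_lower_bound} then enters to establish a sample-level spectral bound $\lambda_{\min}(\Lambda_h) = \Omega(c_2^2 K / d)$ via a matrix-concentration argument on $\widetilde{\Lambda}_h$, so that $\mathrm{tr}(\Lambda_h^{-2}\widetilde{\Lambda}_h) = \widetilde{O}(d^2/(c_2^2 K))$ and hence $\E[\norm{\phi^\top \Lambda_h^{-1}}_2] = \widetilde{O}(d/(c_1 c_2 K))$ per time-step. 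Summing over $h$ and multiplying by the bonus coefficients yields the leading $\epsilon K H\sqrt{d} \cdot H d/(c_1 c_2 K) = H^2 d^{3/2}\epsilon/(c_1 c_2)$ term and a vanishing $H^2 d^{3/2}/(c_1 c_2 \sqrt{K})$ term from the SCRAM and noise contributions (after substituting $\mathcal{E} = \widetilde{O}(\epsilon + d^{1/2}K^{-1/4})$), which matches the stated bound.

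\textbf{Main obstacle.} The most delicate step will be converting the pointwise feature lower bound $\norm{\phi}_2 \geq c_2$ into the sample-level spectral bound $\lambda_{\min}(\Lambda_h) = \Omega(c_2^2 K / d)$. This requires a matrix-concentration argument on the clean empirical covariance $\widetilde{\Lambda}_h$ that exploits both the LRU condition on clean data and the uniform non-degeneracy of features, followed by a perturbation bound absorbing the $O(\epsilon K)$ modification from corrupted covariates. This spectral step is what produces the extra $d^{3/2}/c_2$ factor compared to Theorem~\ref{thm:general_result}, and it is the price paid for recovering the order-optimal linear dependence on $\epsilon$.
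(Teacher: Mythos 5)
Your overall architecture matches the paper's: the $\ell_2$ Cauchy--Schwarz step $|\phi^\top(\underline{\omega}_h-\underline{\omega}^*_h)| = |\phi^\top\Lambda_h^{-1}\Lambda_h(\underline{\omega}_h-\underline{\omega}^*_h)| \leq \norm{\phi^\top\Lambda_h^{-1}}_2\norm{\Lambda_h(\underline{\omega}_h-\underline{\omega}^*_h)}_2$ is exactly the paper's Lemma \ref{lem:improved_bellman_error}, and your three-way accounting of $\norm{\Lambda_h(\underline{\omega}_h-\underline{\omega}^*_h)}_2$ recovers the correct coefficient $2(1-\epsilon)K\mathcal{E}+\epsilon KH\sqrt{d}+H\sqrt{Kd}$ (though in the paper the $H\sqrt{Kd}$ term arises from the regularizer acting on $\norm{\underline{\omega}_h-\underline{\omega}^*_h}_2\leq 2H\sqrt{d}$, not from a separate noise-concentration step --- the noise is already inside $\mathcal{E}$, so your piece (c) is a misattribution, albeit a harmless one). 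The genuine gap is in the step you yourself flag as the ``main obstacle'': the claimed spectral bound $\lambda_{\min}(\Lambda_h)=\Omega(c_2^2K/d)$ simply does not follow from Assumption \ref{asm:features_lower_bound}. That assumption is a pointwise lower bound $\norm{\phi(s,a,b)}_2\geq c_2$ on individual feature vectors and says nothing about their directional spread: all features in the dataset could be collinear with norm $1$, in which case $\widetilde{\Lambda}_h$ is rank-one plus the identity and $\lambda_{\min}(\Lambda_h)=1$ regardless of $K$. The LRU assumption does not rescue this either --- it is a \emph{relative} domination condition, $\Lambda_h\succeq I+c_1K\sup_\nu\E_{\pi^*,\nu}[\phi_h\phi_h^\top\,|\,s_1=x]$, which lower-bounds $\Lambda_h$ only in directions where the target policies' population covariance is large, and that covariance may itself be degenerate. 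Since your route to $\E[\norm{\phi^\top\Lambda_h^{-1}}_2]=\widetilde{O}(d/(c_1c_2K))$ passes through $\mathrm{tr}(\Lambda_h^{-2}\widetilde{\Lambda}_h)\leq\norm{\Lambda_h^{-1}}_2\,\mathrm{tr}(\Lambda_h^{-1}\widetilde{\Lambda}_h)$, without the spectral bound you only get $\norm{\Lambda_h^{-1}}_2\leq 1$ and hence the rate $\sqrt{d/(c_1K)}$ --- which collapses back to the $\sqrt{\epsilon}$ dependence of Theorem \ref{thm:general_result} and loses exactly the improvement the theorem is about.

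The paper's Lemma \ref{lem:general_trace_bound} avoids any minimum-eigenvalue argument. It uses $c_2$ purely locally, writing $\norm{\phi_h^\top\Lambda_h^{-1}}_2\leq\frac{1}{c_2}\norm{\phi_h^\top\Lambda_h^{-1}}_2\norm{\phi_h}_2=\frac{1}{c_2}\sqrt{\mathrm{Tr}\bigl(\Lambda_h^{-2}(\phi_h\phi_h^\top)^2\bigr)}$, and then applies the trace inequality $\sqrt{\mathrm{Tr}(A^2B^2)}\leq\mathrm{Tr}(AB)$ for PSD matrices to reduce to the \emph{linear} trace $\frac{1}{c_2}\mathrm{Tr}(\Lambda_h^{-1}\phi_h\phi_h^\top)$. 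Taking expectations turns this into $\frac{1}{c_2}\mathrm{Tr}(\Lambda_h^{-1}\Sigma_h(x))$, which is precisely the quantity LRU controls at rate $d/(c_1K)$; the corrupted-covariance correction is handled via the explicit inversion identity $\Lambda_h^{-1}=S_h^{-1}+S_h^{-1}M_hS_h^{-1}$ and Lemma \ref{lem:auxiliary_matrix_result}, contributing $\frac{d\epsilon}{c_1c_2K(1-\epsilon)}$, rather than by a generic operator-norm perturbation (which would also fail, since $\norm{\Delta_h}_2$ can be of order $\epsilon K$ while $\lambda_{\min}(S_h)$ can be $O(1)$). If you want to repair your proof, replace the spectral-bound step by this local $\norm{\phi_h}_2/c_2\geq 1$ insertion followed by the $\mathrm{Tr}(A^2B^2)$ reduction; everything else in your outline then goes through.
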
\vspace{-0.6em}

This improvement comes as a result of a novel bonus term and a different style of analysis based on an application of the Woodbury matrix identity to account for the extra terms coming from corruption.

\section{DISCUSSION ON MARL COVERAGE ASSUMPTIONS}\label{sec:coverage_discussion}

In this section, we discuss the relationship between various coverage assumptions used in the literature.
First, we state three additional assumptions, apart from Assumption \ref{asn:uniform-phi-coverage} and \ref{asm:low_relative_uncertainty}.
 
 \begin{assumption}[Single-policy coverage]\label{asn:single-coverage} The NE strategy pair $(\pi^*,\nu^*)$ is covered by the dataset $D$.
 \end{assumption}
 
 \begin{assumption}[Unilateral coverage]\label{asn:unilateral-coverage}
 For all strategies $\pi:\mathcal{S}\rightarrow \Delta(\mathcal{A})$ and $\nu: \mathcal{S}\rightarrow\Delta(\mathcal{B})$, the strategy pairs $(\pi^*,\nu)$ and $(\pi,\nu^*)$ are covered by $D$.
 \end{assumption}
 
 \begin{assumption}[Uniform coverage]\label{asn:uniform-coverage}
 For all $h\in [H]$ and $(s,a,b)\in \mathcal{S}\times\mathcal{A}\times\mathcal{B}$, the tuple $(s,a,b)$ at time step $h$ is covered by $D$.
 \end{assumption}


Note that Assumption \ref{asn:single-coverage} is the weakest assumption. Moreover, it is a direct extension of Assumption 3.2 of \citep{zhang2022corruption}, while Assumption \ref{asn:uniform-coverage} is an extension of the uniform policy coverage in the single-player setting \citep{cai2020provably}. We show that Assumption \ref{asn:unilateral-coverage} implies Assumption \ref{asm:low_relative_uncertainty} with high probability, when the feature matrix has full rank, and that, for tabular Markov games,\footnote{For tabular Markov games, we have $\phi(s,a,b)=e_{s,a,b}$, where $e_{s,a,b}$ is the $SAB$-dimensional zero vector with $1$ in the $(s,a,b)$ entry.} these two assumptions are equivalent. All proofs of our results can be found in the Appendix.

\looseness-1\begin{pro}\label{prop:equivalent_assumptions}
   Let $\Phi \in \mathbb{R}^{SAB\times d}$ denote the feature matrix. Assume $\Phi$ has full rank and let $\delta\in(0,1)$. Then, if Assumption \ref{asn:unilateral-coverage} holds, there exists a positive constant that depends on $\delta$ for which  Assumption \ref{asm:low_relative_uncertainty} holds, with probability at least $1-\delta$. Moreover, in the tabular Markov game setting, these two assumptions are equivalent. 
\end{pro}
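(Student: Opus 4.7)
The plan is to compare the population feature covariance under unilateral deviations to that under $\rho$, and then pass to the empirical covariance $\Lambda_h$ via matrix concentration. Under Assumption \ref{asn:unilateral-coverage}, the ratios $d^{\pi^*,\nu}_h(s,a,b)/d^\rho_h(s,a,b)$ and $d^{\pi,\nu^*}_h(s,a,b)/d^\rho_h(s,a,b)$ are finite at every cell in the support of $d^\rho_h$ (the numerators vanish outside this support). Let $C := \sup_{h,\pi,\nu,(s,a,b)} \max\{ d^{\pi^*,\nu}_h(s,a,b), d^{\pi,\nu^*}_h(s,a,b)\}/d^\rho_h(s,a,b)$, with the sup restricted to cells where the denominator is positive; by continuity of $d^{\pi,\nu}_h$ in $(\pi,\nu)$, compactness of the policy classes, and the finiteness of the $SAB$ cells, one has $C<\infty$. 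Expanding $\E_{\pi^*,\nu}[\phi_h\phi_h^\top \mid s_1=x] = \sum_{s,a,b} d^{\pi^*,\nu}_h(s,a,b)\,\phi(s,a,b)\phi(s,a,b)^\top$ and bounding termwise yields
\[ \sup_\nu \E_{\pi^*,\nu}[\phi_h\phi_h^\top \mid s_1=x] \preceq C \cdot \E_\rho[\phi_h \phi_h^\top \mid s_1 = x], \]
and symmetrically for the supremum over $\pi$.

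Next, I would apply matrix Chernoff (Tropp) to the i.i.d.\ PSD summands $\phi^\tau_h(\phi^\tau_h)^\top$, each of operator norm at most $1$ since $\norm{\phi}_2 \leq 1$, with mean $\Sigma^\rho_h := \E_\rho[\phi_h\phi_h^\top \mid s_1 = x]$. This produces, with probability at least $1-\delta$ and for $K$ large enough in terms of $\delta$ and the spectrum of $\Sigma^\rho_h$, the multiplicative lower bound $\Lambda_h - I \succeq (K/2)\,\Sigma^\rho_h$. Directions in which $\Sigma^\rho_h$ has vanishing eigenvalue automatically satisfy the required inequality (both sides are zero in those directions); on the effective subspace, the full-rank assumption on $\Phi$ together with unilateral coverage guarantees a strictly positive minimum eigenvalue on the image, so matrix Chernoff applies without blow-up. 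Chaining with the first paragraph gives
\[ \Lambda_h \succeq I + \frac{K}{2C} \max\left\{ \sup_\nu \E_{\pi^*,\nu}[\phi_h\phi_h^\top \mid s_1=x],\; \sup_\pi \E_{\pi,\nu^*}[\phi_h\phi_h^\top \mid s_1=x] \right\}, \]
i.e., Assumption \ref{asm:low_relative_uncertainty} with $c_1 = 1/(2C)$.

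For the tabular equivalence, $\phi(s,a,b) = e_{s,a,b}$ makes every relevant covariance diagonal: $\E_{\pi,\nu}[\phi_h\phi_h^\top] = \mathrm{diag}(d^{\pi,\nu}_h)$ and $\Lambda_h = \mathrm{diag}(N_h) + I$, where $N_h(s,a,b)$ is the visit count. Assumption \ref{asm:low_relative_uncertainty} then reduces to the coordinatewise inequality $N_h(s,a,b) \geq c_1 K \max\{ \sup_\nu d^{\pi^*,\nu}_h(s,a,b),\sup_\pi d^{\pi,\nu^*}_h(s,a,b)\}$ at every cell. The forward direction repeats the previous two paragraphs with scalar Bernstein and a union bound over the $SAB$ cells in place of matrix concentration. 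The converse is immediate: if the coordinatewise inequality holds with $c_1 > 0$, then $N_h(s,a,b) > 0$, hence $d^\rho_h(s,a,b) > 0$, at every cell visited by some unilateral deviation, yielding finite density ratios and therefore Assumption \ref{asn:unilateral-coverage}.

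The main obstacle is the multiplicative matrix concentration in step two: a naive additive Bernstein bound only gives $\Lambda_h - I \succeq K \Sigma^\rho_h - O(\sqrt{K\log(d/\delta)})\, I$, which is not by itself a multiplicative lower bound in $\Sigma^\rho_h$. The remedy is to decompose the unit sphere into directions in the image and kernel of $\Sigma^\rho_h$, apply matrix Chernoff on the image (where the minimum nonzero eigenvalue is bounded away from zero by a problem-dependent constant coming from full rank of $\Phi$ together with coverage), and let the identity regularization absorb the kernel directions. A secondary, mostly bookkeeping subtlety is legitimizing the Loewner $\sup_\nu$ over an infinite family of randomized strategies; this is handled in the tabular case by passing to the finite class of deterministic policies, and in the linear case by monotone closure of the PSD cone under supremum.
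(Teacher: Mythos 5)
Your proof is correct and follows the same overall architecture as the paper's: use Assumption \ref{asn:unilateral-coverage} to get a uniform density-ratio constant, deduce a Loewner domination of the unilateral population covariances by the behavioral one, pass to the empirical matrix $\Lambda_h$ by concentration, and reverse the chain coordinatewise for the tabular converse. The genuine difference is where the concentration happens. The paper concentrates the \emph{diagonal occupancy matrices} $\widehat{D}^\rho_h \in \mathbb{R}^{SAB\times SAB}$ around $D^\rho_h$ (via the Zanette-style two-sided bound $\tfrac13(KD^\rho_h+\lambda I)\preceq \sum_\tau \widetilde{D}^\rho_{\tau,h}+\lambda I \preceq \tfrac53(KD^\rho_h+\lambda I)$) and only afterwards conjugates by $\Phi$, using full rank to transfer positive semidefiniteness; its constant is $c_1=\max\{1/c - 2d\log(KH/\delta)/(3Kd^{\pi,\nu}_h)\}$, which, like your $c_1=1/(2C)$, is positive only for $K$ large enough. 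You instead concentrate the $d$-dimensional feature covariance $\Phi^\top D^\rho_h\Phi$ directly via matrix Chernoff on whitened summands, with an image/kernel split to avoid blow-up on degenerate directions. Your route works in dimension $d$ rather than $SAB$ and correctly identifies the multiplicative-vs-additive concentration issue that the paper's lemma is designed to handle; the paper's route avoids the image/kernel bookkeeping because diagonal matrices commute with the splitting. Your tabular converse ($N_h>0 \Rightarrow d^\rho_h>0$) is simpler than the paper's quantitative reversal and suffices since coverage only asks for a finite ratio. Three minor points to tighten: (i) the finiteness of $C$ needs no compactness or continuity argument --- since the cell set is finite and the numerators are at most $1$, coverage gives $C \le \max_{(s,a,b):\,d^\rho_h>0} 1/d^\rho_h(s,a,b)$ directly; (ii) the phrase ``monotone closure of the PSD cone under supremum'' is not a valid principle (the PSD cone is not a lattice) --- the correct reading, which both you and the paper implicitly use, is that the LRU inequality is verified for each fixed $\nu$ (resp.\ $\pi$) separately; (iii) the image/kernel split needs the observation that every sampled feature lies almost surely in the image of $\Sigma^\rho_h$ (since $v^\top\Sigma^\rho_h v=0$ forces $v^\top\phi=0$ a.s.), so that no empirical mass leaks into kernel directions where you are relying on the $+I$ regularizer.
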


Furthermore, it is obvious that uniform coverage is stronger than unilateral coverage. The relation between Assumption \ref{asn:uniform-phi-coverage} and Assumption \ref{asn:uniform-coverage} is given as follows.
\looseness-1\begin{pro}\label{prop:strength_of_assumptions}
    Assume that $\Phi$ has full rank. Then, if Assumption \ref{asn:uniform-phi-coverage} holds, Assumption \ref{asn:uniform-coverage} holds. Moreover, in the tabular MG setting, these two assumptions are equivalent. 
\end{pro}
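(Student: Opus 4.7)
The plan is to recast both assumptions as matrix conditions on the feature matrix and the visitation diagonal, and then move between them using the full-rank hypothesis on $\Phi$. For any start state $s\in\mathcal{S}$, let $D_h(s)\in\mathbb{R}^{SAB\times SAB}$ be the diagonal matrix whose diagonal entries are $\{d^\rho_h(s',a,b\mid s_1=s)\}_{(s',a,b)}$, and let $\Phi\in\mathbb{R}^{SAB\times d}$ stack the feature rows $\phi(s',a,b)^\top$. A direct computation gives
$$\mathbb{E}_\rho[\Sigma_h\mid s_1=s] \;=\; \sum_{(s',a,b)} d^\rho_h(s',a,b\mid s_1=s)\,\phi(s',a,b)\phi(s',a,b)^\top \;=\; \Phi^\top D_h(s)\,\Phi.$$
Thus Assumption \ref{asn:uniform-phi-coverage} is exactly $\Phi^\top D_h(s)\Phi \succeq \kappa I_d$, whereas Assumption \ref{asn:uniform-coverage} is the statement that every diagonal entry of $D_h(s)$ is strictly positive for all $(h,s)$.

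For the forward implication I exploit the full-rank hypothesis in the following way: since the $SAB$ rows of $\Phi$ are linearly independent, the linear map $u\mapsto \Phi u$ is surjective onto $\mathbb{R}^{SAB}$, so for any fixed tuple $(s_0,a_0,b_0)$ there exists a nonzero $u\in\mathbb{R}^d$ with $\Phi u = e_{s_0,a_0,b_0}$. Evaluating the quadratic form at this $u$ yields
$$u^\top\Phi^\top D_h(s)\Phi u \;=\; e_{s_0,a_0,b_0}^\top D_h(s) e_{s_0,a_0,b_0} \;=\; d^\rho_h(s_0,a_0,b_0\mid s_1=s),$$
which by Assumption \ref{asn:uniform-phi-coverage} is at least $\kappa\|u\|_2^2>0$. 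Since the tuple was arbitrary, every state-action tuple at every step is covered.

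For the tabular equivalence, I substitute $\phi(s,a,b)=e_{s,a,b}$ so that $\Phi=I_{SAB}$ and the Loewner condition $\Phi^\top D_h(s)\Phi\succeq \kappa I$ collapses to $D_h(s)\succeq\kappa I$, i.e.\ a uniform positive lower bound $\kappa$ on all visitation probabilities. The converse (tabular) direction is then immediate: under uniform coverage the finitely many visitation probabilities are all strictly positive, so their minimum is a positive constant that can be taken as $\kappa$. The main obstacle, and where care is needed in presentation, is the interpretation of ``$\Phi$ has full rank''. The preimage argument above uses full \emph{row} rank of $\Phi$ (equivalently, linear independence of the $SAB$ feature vectors in $\mathbb{R}^d$, hence $d\geq SAB$). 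This is automatic in the tabular setting $\Phi=I$ where the equivalence is clean; in the general linear MG regime the forward implication should be read in this row-full-rank sense, and outside of it Assumption \ref{asn:uniform-phi-coverage} is genuinely weaker than Assumption \ref{asn:uniform-coverage} because the feature space can be fully excited without visiting every tuple.
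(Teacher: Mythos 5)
Your tabular equivalence argument is exactly the paper's: with $\Phi=I_{SAB}$ the expected covariance is $\mathrm{diag}(d^\rho_h)$, so the Loewner bound $\succeq\kappa I$ is the same as a uniform positive lower bound on the visitation probabilities, and finiteness gives the converse. The divergence is in the forward implication for general linear MGs, and it is substantive. You prove it under full \emph{row} rank ($\mathrm{rank}(\Phi)=SAB$, hence $d\ge SAB$), pulling back each coordinate vector $e_{s_0,a_0,b_0}$ through $\Phi$ and evaluating the quadratic form; that argument is correct under that hypothesis. The paper instead assumes full \emph{column} rank ($\mathrm{rank}(\Phi)=d$, the natural regime for linear MGs where $d\ll SAB$), writes the SVD $\Phi=U\Lambda V^\top$, reduces $\Phi^\top D^\rho_h\Phi\succeq\xi I$ to $U^\top D^\rho_h U\succeq\xi\Lambda^{-2}$, and then reads off $\min_{s,a,b}d^\rho_h(s,a,b)$ as the minimum eigenvalue of $U^\top D^\rho_h U$. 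So you and the paper are proving the proposition under two different readings of ``full rank.''

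Your closing caveat deserves emphasis rather than apology: it points at a genuine weakness in the column-rank version. When $d<SAB$, the matrix $U^\top D^\rho_h U$ is a compression of $D^\rho_h$ by a matrix with orthonormal columns, and the Poincar\'e separation theorem gives $\lambda_{\min}\bigl(U^\top D^\rho_h U\bigr)\ge\lambda_{\min}\bigl(D^\rho_h\bigr)$ --- the inequality runs the wrong way for the paper's step, which implicitly treats $U^\top D^\rho_h U$ as a diagonalization of $D^\rho_h$ (valid only when $U$ is square, i.e.\ $d=SAB$). Your remark that ``the feature space can be fully excited without visiting every tuple'' is exactly the obstruction: e.g.\ with $d=1$ and $\phi\equiv 1$, Assumption \ref{asn:uniform-phi-coverage} holds with $\kappa=1$ while a tuple can have zero visitation probability, so Assumption \ref{asn:uniform-coverage} fails. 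In short: your proof is correct for the row-rank reading, the tabular part coincides with the paper's, and your identification of where the column-rank implication breaks is a sharper diagnosis than the paper's own argument provides.
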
 

 As already shown in \citep{zhong2022pessimistic}, assuming that the collected data covers only the NE strategy pair $\pi^*=(\mu^*,\nu^*)$ \text{is not enough} to learn an approximate NE policy pair. Indeed, Assumption \ref{asm:low_relative_uncertainty} is necessary for solving offline two-player zero-sum Markov games, even in clean environments.

 \begin{rem}\label{rem:coverage_assumptions}
     Note that, under the assumption that $\Phi$ is full rank (if not,  orthogonalization can be applied), the given coverage assumptions are listed according to their strength. With high probability, we have
     \begin{equation*}
         A. \ref{asn:uniform-phi-coverage} \Rightarrow A. \ref{asn:uniform-coverage} \Rightarrow A. \ref{asn:unilateral-coverage} \Rightarrow A. \ref{asm:low_relative_uncertainty} \Rightarrow A. \ref{asn:single-coverage}~.
     \end{equation*}
     Moreover, for the tabular setting, Assumption \ref{asn:uniform-phi-coverage} is equivalent to \ref{asn:uniform-coverage} and Assumption \ref{asn:unilateral-coverage} is equivalent to \ref{asm:low_relative_uncertainty}. 
 \end{rem}


\section{RELATED WORK}

\textbf{Offline RL.} Our work is related to the offline RL literature, where there have been substantial developments in recent years, both on the empirical front \citep{jaques2019way, laroche2019safe, fujimoto2019off, kumar2020conservative, agarwal2020optimistic, kidambi2020morel} and the theoretical front \citep{jin2021pessimism, xie2021bellman, rashidinejad2021bridging, uehara2021pessimistic, zanette2021provable}. As previously mentioned, coverage assumptions on the data are key in this setting, and there has been a variety of different assumptions in the single-player setting, starting from the all-policy coverage $\norm{d^\pi/d^\rho}_\infty \leq B$, for all $\pi$ \citep{munos2008finite}, to optimal policy coverage $\norm{d^*/d^\rho}_\infty \leq B$ \citep{xie2021bellman}, and $\alpha$-regularized optimal policy coverage $\norm{d^*_\alpha/d^\rho}_\infty \leq B$ \citep{zhan2022offline}. First, all of these assumptions are with respect to the original data-collecting distribution $d^\rho$, while our weakest assumption (LRU coverage) relies only upon the sample covariance matrix of the data. Second, note that the weakest of the assumptions above ($\alpha$-regularized optimal policy coverage) requires coverage of only  the optimal solution to the regularized LP problem, while coverage of the NE policy pair and its unilateral neighbors is necessary for finding NE policy pairs in our setting \citep{zhong2022pessimistic}. This is arguably due to the higher complexity of the problem compared to the single-player setting. 

\textbf{Adversarial attacks in RL.} Our work also adds to the vast literature on adversarial attacks in ML \citep{szegedy2013intriguing, biggio2013evasion, nguyen2015deep, papernot2017practical, biggio2012poisoning, li2016data, xiao2012adversarial} and the existing body of work on adversarial attacks in RL and MARL \citep{huang2017adversarial, lin2017tactics, wu2022reward, gleave2019adversarial, sun2020stealthy, sun2020vulnerability, ma2019policy, rakhsha2021policy, everitt2017reinforcement, huang2019deceptive, rangi2022understanding, mohammadi2023implicit}. Specifically, we consider the problem of robustness to data corruption, which is a type of training-time attack \citep{mei2015using, xiao2015feature, rakhsha2020policy}. Popular types of defense against such attacks include randomized smoothing \citep{cohen2019certified, wu2021crop}, outlier detection \citep{diakonikolas2019sever} and robust estimator methods \citep{chen2022online, diakonikolas2017being, banihashem2023defense}. In this work, we use the latter methods, for both weight estimation in linear games and mean estimation in tabular ones. Arguably, the closest work to ours is \citep{zhang2022corruption}, which studies the corruption problem in single-agent RL. On the other hand, while our analysis of the lower bounds is inspired by their work, our analysis leads to tighter upper bounds in terms of $\epsilon$ and $H$ for two-player zero-sum Markov games. \cite{yang2022rorl} also study the robustness problem in the offline RL setting. However, their attack model assumes observation perturbations of bounded radius. Our attack model is stronger since it allows for the arbitrary perturbation of the tuples.  Recently, adversarial corruption in the online setting has been studied in linear contextual bandits \citep{he2022nearly} and more generally in MDPs with general function approximation \citep{ye2023corruption}. These works also broadly relate to corruption-robust approaches in distributed RL \citep{chen2022byzantine, fan2021fault}, that focus on MDP settings.

\looseness-1\textbf{Reward perturbation in MARL.} \citet{ma2022game} study the problem of reward design in online systems with no-regret learners. Their goal is to modify the utility function iteratively so that the agents converge to a desired action profile, while maintaining low cost of perturbation. While we also consider multi-agent systems subject to a third-party intervention, our focus is in the offline setting with an adversarial corruption framework, with an emphasis on the defense front. On the other hand, \citet{wu2023reward} is more closely related to ours. They also study an offline corruption model, where the attacker can perturb the reward signal in order to enforce a particular policy, while at the same time maintaining a low perturbation cost. While their focus in on designing cost-efficient attacks of that form, our work studies defenses against a broad class of data poisoning attacks, defined by the Huber contamination model.


\textbf{Learning in Markov games.}  Finally, our work also relates to the research area of learning in Markov games \citep{vrancx2008decentralized,  littman1994markov, littman2001value, tian2021online, wang2002reinforcement, sayin2021decentralized, xie2020learning}. In particular, we consider offline two-player zero-sum games, which have been recently considered in \citep{cui2022offline} for tabular settings and \citep{zhong2022pessimistic} for linear settings. While they solve the offline problem by assuming that the collected data is sampled from a benign behavioral policy, we consider the problem of robustness of their proposed methods to data corruption. More specifically, we assume that an $\epsilon$ fraction of collected data has been corrupted, and our aim is to learn NE strategy pairs under such an assumption. When $\epsilon=0$, we recover the same bounds as theirs on the suboptimality gap.

\section{CONCLUSION}\label{sec:discussion}


We considered the problem of data corruption in offline two-player zero-sum Markov games. Our contribution was to provide an extensive characterization of the problem under various coverage assumptions on both the clean and corrupted data. To the best of our knowledge, we are the first to provide such a characterization for the problem of corruption in offline Markov games. For the hardest setting where minimal coverage is guaranteed only on the clean data, we are able to match the optimal order of $\epsilon$ under mild structural assumptions, thus providing a full picture of this setting. There are many interesting future directions to pursue: i) studying robustness under adversarial corruption in Markov games with general function approximation; ii) extending the two-player zero-sum Markov game setting to online data corruption, where, in each round the reward/transition data is corrupted with probability $\epsilon$; iii) studying robustness to adversarial corruption in larger structured games (e.g. Markov potential games). 



\section*{Acknowledgements}

The authors thank anonymous reviewers for their valuable suggestions and comments. This research was, in part, funded by the Deutsche Forschungsgemeinschaft (DFG, German Research Foundation) – project number 467367360.

\bibliographystyle{plainnat}
\bibliography{bibliography}

\newpage
\onecolumn
\appendix

\addcontentsline{toc}{section}{} 
\part{Appendix} 
\parttoc 

\section{Proofs of Section \ref{sec:linear_setting}}\label{sec:missing_proofs}

In this section, we derive the proofs of the results in Section \ref{sec:linear_setting}.

\subsection{Proof of Theorem \ref{thm:lower_bound}}\label{sec:proof_lower_bound}

We first restate the result.

\begin{statement}
    For every algorithm $L$, there exists a Markov game $\mathcal{G}$, an instance of the corrupted dataset, corruption level $\epsilon$, and a data collecting distribution $\rho$, such that, with probability at least $1/4$, $L$ will find a no-better than $\Omega(Hd\epsilon)$-approximate NE policy pair $(\widetilde{\pi},\widetilde{\nu})$. That is, with a probability of at least $1/4$, we have, for every $s\in\mathcal{S}$:
    \begin{align*}
        \optgap(\widetilde{\pi},\widetilde{\nu},s) = \Omega(Hd\epsilon)~.
    \end{align*}
\end{statement}

\begin{proof}

We will construct an example to prove our statement. Consider the following Markov game $\mathcal{G}=(\mathcal{S},\mathcal{A},\mathcal{B},P,H,r,\gamma,s_0)$, with $\card{\mathcal{S}}=S$, $\card{\mathcal{A}}=A$, $\card{\mathcal{B}}=B$, deterministic transitions, episode length $H$ and initial state $s_0$, with $S\leq (AB)^{H/2}$. Note that, in this case, we have $d=SAB$. Here $r$ denotes the reward with respect to the $\max$ player. Assume that the transition dynamics follow a tree structure, that is, let $\mathcal{T}$ be a tree with nodes represented by states $s\in\mathcal{S}$ and edges represented by action tuples $(a,b)\in \mathcal{A}\times \mathcal{B}$, with root node $s_0$, such that, for every $(s,a,b)\in\mathcal{S}\times\mathcal{A}\times\mathcal{B}$, node $s$ is parent to node $\arg\max_{s'} P(s'\vert s,a,b)$. We denote by $p(s)$ the parent of node $s$. Moreover, assume that all states represented by the leaf nodes are self-absorbing states, i.e. the state does not change, no matter what action is taken. Let $q$ denote the depth of $\mathcal{T}$. Note that we have 
\begin{align*}
    q = O( \ceil{(\log_{AB}(S(AB-1)+1)-1})~.
\end{align*}

Now let us denote by $\mathcal{L}_i$ the subset of $\mathcal{S}$ containing the states represented by nodes in level $i$ of $\mathcal{T}$, for all $i \in \{ 0\} \cup [q]$, and let $s^j_i$ enumerate the states in level $\mathcal{L}_i$, for $j \in [(AB)^{i-1}]$. Let us define the reward function as follows. 
Fix a sequence of states $(s^*_0,s^*_1,\ldots,s^*_q) \in \mathcal{L}_0\times \ldots \times \mathcal{L}_q$, where $s^*_0=s_0$ and $s^*_i$ is a node in the $i$th level of $\mathcal{T}$ such that $P(s^*_i|\; s^*_{i-1},a,b_1)=1$, for all $i\in[q]$ and $a\in\mathcal{A}$. Furthermore, for all $s\neq s^*_{i-1}$ and $(a,b)\in \mathcal{A}\times \mathcal{B}$, we have $P(s^*_i|s,a,b)=0$, for all $i\in [q]$. Let $\alpha \in (0,1/3)$ and assume $(s^*_q,a_1,b_1)$ is the least represented state according to data collecting distribution $d^\rho$.

Then, for all $i \in \{ 0, 1, \ldots, q-1\}$, we define
\begin{equation*}
    r(s^*_i,a,b)=
    \begin{cases}
        \alpha & \text{if}\;\; a=a_1,b=b_1 \\
        2\alpha & \text{if}\;\; a=a_1,b\neq b_1 \\
        0 & \text{otherwise}
    \end{cases} \hspace{1cm}
    \text{and} \hspace{1cm}
     r(s^*_q,a,b)=
    \begin{cases}
        \alpha & \text{if}\;\; a=a_1,b=b_1 \\
        X & \text{if}\;\; a=a_1,b=b_2\\
        3\alpha & \text{if}\;\; a=a_1,b\in \mathcal{B}\setminus \{ b_1,b_2\} \\
        0 & \text{otherwise}
    \end{cases}
\end{equation*}
where $X$ is a Bernoulli random variable with parameter $2\alpha$.
On the other hand, for all $s\in \mathcal{S}\setminus \{ s^*_0,\ldots,s^*_q\}$, let
\begin{equation*}
    r(s,a,b)=
    \begin{cases}
        1 & \text{if}\;\; a=a_1\\
        0 & \text{otherwise}
    \end{cases}
\end{equation*}
Let us determine the value of the game in each state, using the method of backward induction. Set $V^*_{H+1}(s)=0$, for all $s\in\mathcal{S}$. Then, for all $s\in \mathcal{L}_q\setminus \{ s^*_q\}$, we have 
\begin{center}
\begin{tabular}{c|c c c c}
    $Q^*_H(s,\cdot,\cdot)$ & $b_1$ & $b_2$ & \ldots & $b_B$ \\
    \hline
     $a_1$ & $1$ & $1$ & \ldots & $1$ \\
     $a_2$ & $0$ & $0$ & \ldots & $0$ \\
     $\vdots$ & $\vdots$ & $\vdots$ & $\ddots$ & $\vdots$ \\
     $a_A$ & $0$ & $0$ & \ldots & $0$
\end{tabular} \hspace{1cm}
and \hspace{1cm}
\begin{tabular}{c|c c c c c}
    $Q^*_H(s^*_q,\cdot,\cdot)$ & $b_1$ & $b_2$ & $b_3$ & \ldots & $b_B$ \\
    \hline
     $a_1$ & $\alpha$ & $2\alpha$ & $3\alpha$ & \ldots & $3\alpha$ \\
     $a_2$ & $0$ & $0$ & $0$ & \ldots & $0$ \\
     $\vdots$ & $\vdots$ & $\vdots$ & $\vdots$ & $\ddots$ & $\vdots$ \\
     $a_A$ & $0$ & $0$ & $0$ &\ldots & $0$
\end{tabular}
\end{center}
Thus, we obtain $V^*_H(s)=1$, for all $s\in \mathcal{L}_d\setminus \{ s^*_q\}$ and $V^*_H(s^*_q)=\alpha$. Moreover, note that, for all $s\in \mathcal{L}_{d-1}\setminus \{ s^*_{q-1}\}$, we have
\begin{center}
\begin{tabular}{c|c c c c}
    $Q^*_{H-1}(s,\cdot,\cdot)$ & $b_1$ & $b_2$ & \ldots & $b_B$ \\
    \hline
     $a_1$ & $2$ & $2$ & \ldots & $2$ \\
     $a_2$ & $0$ & $0$ & \ldots & $0$ \\
     $\vdots$ & $\vdots$ & $\vdots$ & $\ddots$ & $\vdots$ \\
     $a_A$ & $0$ & $0$ & \ldots & $0$
\end{tabular} \hspace{1cm} 
and \hspace{1cm}
\begin{tabular}{c|c c c c c}
    $Q^*_{H-1}(s^*_q,\cdot,\cdot)$ & $b_1$ & $b_2$ & $b_3$ & \ldots & $b_B$ \\
    \hline
     $a_1$ & $2\alpha$ & $3\alpha$ & $4\alpha$ & \ldots & $4\alpha$ \\
     $a_2$ & $0$ & $0$ & $0$ & \ldots & $0$ \\
     $\vdots$ & $\vdots$ & $\vdots$ & $\vdots$ & $\ddots$ & $\vdots$ \\
     $a_A$ & $0$ & $0$ & $0$ &\ldots & $0$
\end{tabular}
\end{center}
and thus $V^*_{H-1}(s^*_q)=2\alpha$. Continuing in this fashion, we obtain 
\begin{align*}
    V^*_1(s_0) = H\alpha,
\end{align*}
where the first $q$ steps come from the trajectory $(s^*_0,\ldots,s^*_q)$, and the rest of the $H-q$ steps come from staying in state $s^*_q$.

Now let $\mathcal{G}'$ be a Markov game that is identical to $\mathcal{G}$, except for one difference. Let $r'$ denote the reward function of $\mathcal{G}'$. Then $r'(s^*_q,a_1,b_1) = r(s^*_q,a_1,b_1) + Ber(2\alpha)$, and $r'(s,a,b)=r(s,a,b)$, for all other state-action tuples. Then for $\mathcal{G}'$ we have
\begin{center}
\begin{tabular}{c|c c c}
    $\widetilde{Q}^*_H(s,\cdot,\cdot)$ & $b_1$ & \ldots & $b_B$ \\
    \hline
     $a_1$ & $1$ & \ldots & $1$ \\
     $a_2$ & $0$ & \ldots & $0$ \\
     $\vdots$ & $\vdots$ & $\ddots$ & $\vdots$ \\
     $a_A$ & $0$ & \ldots & $0$
\end{tabular} \hspace{1cm}
and \hspace{1cm}
\begin{tabular}{c|c c c c c}
    $\widetilde{Q}^*_H(s^*_q,\cdot,\cdot)$ & $b_1$ & $b_2$ & $b_3$ & \ldots & $b_B$ \\
    \hline
     $a_1$ & $3\alpha$ & $2\alpha$ & $2\alpha$ & \ldots & $2\alpha$ \\
     $a_2$ & $0$ & $0$ & $0$ & \ldots & $0$ \\
     $\vdots$ & $\vdots$ & $\vdots$ & $\vdots$ & $\ddots$ & $\vdots$ \\
     $a_A$ & $0$ & $0$ & $0$ &\ldots & $0$
\end{tabular}
\end{center}
where $\widetilde{Q}(\cdot,\cdot,\cdot)$ denotes the matrices of $Q$-values of NE policies for $\mathcal{G}'$. Note that the trajectory traversed by the NE policy pair in $\mathcal{G}'$ is still $(s^*_0,\ldots,s^*_{q-1},s^*_q)$. However, when at state $s^*_q$, the NE policy is $(a_1,b_2)$, instead of $(a_1,b_1)$. Thus, we have
\begin{align*}
    \widetilde{V}^*_1(s_0) = (2H-d)\alpha,
\end{align*}
since the system will stay in state $s^*_q$ for $H-q$ steps, until the episode ends. Note that no policy pair can be simultaneously optimal in both games. In the worst case, a policy pair which is a NE in $\mathcal{G}'$ will incur a suboptimality gap of
\begin{align*}
    (H-q)\alpha = (H - \ceil{(\log_{AB}(S(AB-1)+1)-1})\alpha \geq \Omega(H\alpha),
\end{align*}
in $\mathcal{G}$, where the second inequality follows from the fact that $S\leq (AB)^{H/2}$.

Now, let $\alpha = SAB\epsilon /2$. Since $(s^*_q,a_1,b_1)$ is the least represented state with respect to $d^\rho$, by pigeon-hole principle, we must have $d^\rho(s^*_d,a_1,b_1)\leq 1/SAB$. Assume the adversary uses all its budget only to perturb the reward of this state-action tuple. Concretely, if the game is $\mathcal{G}$, then the adversary perturbs it into $\mathcal{G}'$ by adding $Ber(SAB\epsilon)$ to $r(s^*_q,a_1,b_1)$. 

With probability at least $1/2$, the number of times $(s^*_q,a_1,b_1)$ is counted in a dataset with $KH$ tuples is no more than $KH/SAB$, since $(s^*_q,a_1,b_1)$ is the least represented tuple. Conditioned on this, with probability at least $1/2$, the reward seen from $(s^*_q,a_1,b_1)$ is $2SAB\epsilon$. Thus, perturbing the reward of $(s^*_q,a_1,b_1)$ at least $KH\epsilon$ times is enough to make one of the games indistinguishable from the other one. Thus, the agent will inevitably incur 
\begin{align*}
    \optgap (\pi,\nu,s) \geq \Omega (HSAB\epsilon).
\end{align*}
\end{proof}

\subsection{Proof of Theorem \ref{thm:linear_uniform_coverage}}

The main result of this section relies on the RLS oracle guarantee stated below.

\begin{thm}\label{thm:rls_guarantee} \citep{zhang2022corruption}
Given an $\epsilon$-corrupted dataset $D=\{ x_i,y_i\}_{i \in [n]}$, where the clean data is generated as $\widetilde{x}_i \sim \beta$, $\mathbb{P}(\norm{\widetilde{x}_i}\leq 1) = 1$, $\widetilde{y}_i = \widetilde{x}_i^\top \omega^*+\xi_i$, where $\xi_i$ is zero-mean $\sigma^2$-variance sub-Gaussian random noise, then a robust least square estimator returns an estimator $\omega$ such that, if $\mathbb{E}_\beta[xx^\top]\succeq \kappa I$, for some strictly positive constant $\kappa$, then with probability at least $1-\delta$, we have 
\begin{itemize}
    \item If $\mathbb{E}_\beta[xx^\top]\succeq \kappa I$, then with probability at least $1-\delta$, we have $$\norm{\omega^* - \omega}_2 \leq c_1(\delta) \cdot \left( \sqrt{\frac{\sigma^2poly(d)}{\kappa^2 n}} + \frac{\sigma}{\kappa}\epsilon \right);$$
    \item With probability at least $1-\delta$, we have $$\mathbb{E}\left[ \norm{\widetilde{x}^\top(\omega^* - \omega)}^2_2\right] \leq c_2(\delta) \cdot \left( \frac{\sigma^2poly(d)}{n} + \sigma^2\epsilon \right),$$
\end{itemize}
where $c_1$ and $c_2$ hide constants and $polylog (1/\delta)$ terms.
\end{thm}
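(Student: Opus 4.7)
The plan is to prove the two RLS guarantees by reducing robust linear regression to standard robust estimation of first and second moments, leveraging the linear structure via the normal equations. First, I would start from the population identity $\omega^* = \Sigma^{-1} \mu$, where $\Sigma = \mathbb{E}_\beta[xx^\top]$ and $\mu = \mathbb{E}[xy] = \Sigma \omega^*$ (the latter using $\mathbb{E}[\xi_i \mid x_i] = 0$). The robust estimator is then naturally defined as $\widehat{\omega} = \widehat{\Sigma}^{-1}\widehat{\mu}$, where $\widehat{\Sigma}$ and $\widehat{\mu}$ are obtained by feeding the $\epsilon$-corrupted sample outer products $\{x_i x_i^\top\}$ and cross-products $\{x_i y_i\}$ into existing robust covariance and mean estimation oracles, such as the filter-based algorithms of \citet{diakonikolas2017being}.

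Second, I would invoke the guarantees of these oracles. Since $\norm{x_i}_2 \leq 1$, each $x_i x_i^\top$ has bounded operator norm, and a standard robust covariance estimator yields, with probability at least $1 - \delta/2$,
\begin{align*}
    \norm{\widehat{\Sigma} - \Sigma}_{\mathrm{op}} \leq O\!\left(\epsilon + \sqrt{\mathrm{poly}(d)/n}\right).
\end{align*}
For the cross-moment, observe that $x_i y_i = x_i x_i^\top \omega^* + x_i \xi_i$, which is sub-Gaussian with proxy controlled by $\sigma$ (using $\norm{x_i}_2 \leq 1$, $\xi_i$ being $\sigma^2$-sub-Gaussian, and bounded $\norm{\omega^*}_2$ in the linear Markov game regime). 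Robust mean estimation then yields, with probability at least $1-\delta/2$,
\begin{align*}
    \norm{\widehat{\mu} - \mu}_2 \leq O\!\left(\sigma \epsilon + \sigma\sqrt{\mathrm{poly}(d)/n}\right).
\end{align*}

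Third, I would combine these estimates through the standard perturbation identity
\begin{align*}
    \widehat{\omega} - \omega^* = \widehat{\Sigma}^{-1}(\widehat{\mu} - \mu) + \widehat{\Sigma}^{-1}(\Sigma - \widehat{\Sigma})\omega^*.
\end{align*}
On the high-probability event that $\norm{\widehat{\Sigma} - \Sigma}_{\mathrm{op}} \leq \kappa/2$ (valid once $\epsilon$ and $\sqrt{\mathrm{poly}(d)/n}$ are small relative to $\kappa$), Weyl's inequality gives $\norm{\widehat{\Sigma}^{-1}}_{\mathrm{op}} \leq 2/\kappa$; plugging in the two oracle bounds and using $\norm{\omega^*}_2 = O(1)$ produces the first claim on $\norm{\widehat{\omega} - \omega^*}_2$. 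For the second (Mahalanobis) bound, the crude conversion $\norm{\widehat{\omega} - \omega^*}_\Sigma^2 \leq \norm{\Sigma}_{\mathrm{op}} \norm{\widehat{\omega} - \omega^*}_2^2$ only yields $\sigma^2\epsilon^2$, so I would instead analyze the residual variance directly: for a clean sample, $r_i = \xi_i - x_i^\top(\widehat{\omega} - \omega^*)$, so $\mathbb{E}[r_i^2] = \sigma^2 + \norm{\widehat{\omega} - \omega^*}_\Sigma^2$; the robust routine certifies that the empirical second moment of the residuals is close to $\sigma^2$ up to an $O(\sigma^2\epsilon)$ slack, producing the sharper linear-in-$\epsilon$ prediction-error rate that does not carry a $1/\kappa$ factor.

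The main obstacle will be extracting the linear (rather than $\sqrt{\epsilon}$ or $\epsilon\sqrt{\log(1/\epsilon)}$) dependence on $\epsilon$ in the mean-estimation step: this requires invoking robust estimators specifically tuned for sub-Gaussian data, and carefully tracking the sub-Gaussian proxy of $x_i y_i = x_i x_i^\top \omega^* + x_i \xi_i$ so that it scales as $\sigma$ and not as $\sigma \cdot \norm{\omega^*}_2$, which would otherwise pollute the final rate. Equally delicate is pinning down the correct polynomial in $d$: second-moment concentration of $x_i x_i^\top$ around $\Sigma$ under bounded features typically incurs a factor of $d^2$ rather than $d$, which ultimately drives the statistical rate appearing inside $\mathrm{poly}(d)$ and feeds back into the sample-complexity threshold required for the $\norm{\widehat{\Sigma} - \Sigma}_{\mathrm{op}} \leq \kappa/2$ event to hold.
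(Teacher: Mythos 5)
First, note that the paper does not prove this statement at all: Theorem \ref{thm:rls_guarantee} is imported verbatim from \citet{zhang2022corruption} (who in turn rely on the robust least-squares analyses of \citet{bakshi2021robust} and \citet{pensia2020robust}), and is used purely as a black-box oracle in the proof of Theorem \ref{thm:linear_uniform_coverage}. So there is no in-paper argument to compare against; what follows is an assessment of your proposal on its own terms.

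Your plug-in route $\widehat{\omega}=\widehat{\Sigma}^{-1}\widehat{\mu}$ has a genuine gap that you flag in your final paragraph but do not resolve, and that cannot be resolved within a one-shot moment-estimation scheme. In the decomposition $\widehat{\omega}-\omega^*=\widehat{\Sigma}^{-1}(\widehat{\mu}-\mu)+\widehat{\Sigma}^{-1}(\Sigma-\widehat{\Sigma})\omega^*$, the second term is of order $\epsilon\norm{\omega^*}_2/\kappa$: robust covariance estimation of bounded covariates gives at best $\norm{\widehat{\Sigma}-\Sigma}_{\mathrm{op}}=O(\epsilon\sqrt{\log(1/\epsilon)})$, and this error is multiplied by $\norm{\omega^*}_2$, not by $\sigma$. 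Likewise, the clean cross-moments $x_iy_i=x_ix_i^\top\omega^*+x_i\xi_i$ fluctuate around their mean with a component of magnitude governed by $\norm{\omega^*}_2$ (the term $x_ix_i^\top\omega^*-\Sigma\omega^*$), so any robust mean estimator applied to $\{x_iy_i\}$ incurs error $\Omega(\epsilon\norm{\omega^*}_2)$ in the worst case, since the adversary can plant corruptions aligned with this signal-scale fluctuation. The theorem's bound $\frac{\sigma}{\kappa}\epsilon$ contains no $\norm{\omega^*}_2$ whatsoever; in the intended application $\norm{\omega^*}_2$ can be as large as $H\sqrt{d}$ while $\sigma\leq H+\gamma$, so the loss is a genuine extra $\sqrt{d}$ factor, not a constant. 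The analyses behind the cited result sidestep this by operating on the residuals $y_i-x_i^\top\omega$ at the current iterate (filtering, or sum-of-squares certification of the residual second moment), so that on clean samples the only stochastic quantity remaining at the optimum is $x_i\xi_i$, whose scale is $\sigma$; this self-referential, iterative structure --- not a better choice of moment oracle --- is the missing idea. Relatedly, your argument for the second bullet assumes the algorithm ``certifies that the empirical second moment of the residuals is close to $\sigma^2$,'' but the plug-in estimator you actually defined performs no such certification: that property belongs precisely to the residual-filtering algorithms you would need to adopt to make the first bullet go through as well.
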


Using the above guarantee, we are now ready to prove Theorem \ref{thm:linear_uniform_coverage}. First, we recall its statement.

\begin{statement}
        Suppose that Assumption \ref{asn:uniform-phi-coverage} holds on an $\epsilon$-corrupted dataset $D$ corresponding to a linear Markov game. Then, given $\delta > 0$, with probability at least $1-\delta$, RLS-PMVI with bonus term $\Gamma_h(s,a,b)=0$ achieves suboptimality gap upper bounded by
    \begin{align*}
         \widetilde{O} \left( \sqrt{\frac{H(H+\gamma)^2poly(d)}{\kappa^2 K}} + \frac{H(H+\gamma)}{\kappa}\epsilon \right)~.
    \end{align*}
\end{statement}

\begin{proof}
Let us define the error in Theorem \ref{thm:rls_guarantee} as
\begin{equation}
    \mathcal{E}_1(\epsilon,K,D,H,\sigma) = c_1(\delta) \cdot \left( \sqrt{\frac{\sigma^2poly(d)}{\kappa^2 K}} + \frac{\sigma}{\kappa}\epsilon \right)~.
\end{equation}
First, we provide upper and lower bounds on the Bellman error. 
\begin{lem}\label{lem:uniform_coverage_bellman_error}
    Given the tuple $(s,a,b)$, let us define the Bellman error as
    \begin{align*}
        \underline{\iota}_h (s,a,b) & = (\mathbb{B}_h\underline{V}_{h+1})(s,a,b) - \underline{Q}_h(s,a,b)~, \\
        \overline{\iota}_h (s,a,b) & = (\mathbb{B}_h\overline{V}_{h+1})(s,a,b) - \overline{Q}_h(s,a,b)~,
    \end{align*}
    for all $h\in[H]$. Then, with probability at least $1-\delta$, we have
\begin{align*}
    -\mathcal{E}_1(\epsilon,K,D,H,\sigma) & \leq \underline{\iota}_h(s,a,b) \leq \mathcal{E}_1(\epsilon,K,D,H,\sigma)~, \\
    -\mathcal{E}_1(\epsilon,K,D,H,\sigma) & \leq -\overline{\iota}_h(s,a,b) \leq \mathcal{E}_1(\epsilon,K,D,H,\sigma)~,
\end{align*}
for all $(s,a,b)\in\mathcal{S}\times\mathcal{A}\times\mathcal{B}$ and $h\in[H]$.
\end{lem}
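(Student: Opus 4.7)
The plan is to reduce the lemma to a direct application of the RLS guarantee (Theorem \ref{thm:rls_guarantee}). By the linearity assumption (Definition \ref{asmp_linearity}), the Bellman operator acts linearly in the features: for any bounded $V:\mathcal{S}\to\mathbb{R}$, one has $(\mathbb{B}_h V)(s,a,b) = \phi(s,a,b)^\top w_h^V$, where $w_h^V = \theta_h + \int V(s')\,\mu_h(\mathrm{d}s')$, and $\|w_h^V\|_2$ is bounded by $\sqrt{d}(1+\|V\|_\infty)$. Hence it is enough to show that the RLS estimator $\underline{\omega}_h$ is close in $\ell_2$ to $w_h^{\underline{V}_{h+1}}$ (and analogously $\overline{\omega}_h$ to $w_h^{\overline{V}_{h+1}}$), and then to convert this parameter bound into a uniform bound on $(s,a,b)$ via $\|\phi\|_2\le 1$.

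To invoke the RLS guarantee, I would verify that the regression at step $h$ fits its template. In the clean dataset, the target is $\widetilde{y}_h^\tau = \widetilde{r}_h^\tau + \underline{V}_{h+1}(\widetilde{s}_{h+1}^\tau) = \phi(\widetilde{s}_h^\tau,\widetilde{a}_h^\tau,\widetilde{b}_h^\tau)^\top w_h^{\underline{V}_{h+1}} + \xi_h^\tau$, where the residual $\xi_h^\tau$ is the sum of the reward noise (which is $\gamma^2$-subGaussian) and the centered next-state contribution $\underline{V}_{h+1}(\widetilde{s}_{h+1}^\tau) - (\mathbb{B}_h\underline{V}_{h+1} - \widetilde{r}_h^\tau)$, which is bounded in magnitude by $H-h$ and hence subGaussian with proxy of order $(H+\gamma)^2$. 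The crucial point is that $\underline{V}_{h+1}$ must be independent of the covariates at step $h$ so that the conditional subGaussian property holds; this is ensured by the random split of $D$ into disjoint subsets $D_h$ in Algorithm \ref{alg:generic_PMVI}, so I would condition on the data used to build $\underline{V}_{h+1}$ and then apply the oracle. Under Assumption \ref{asn:uniform-phi-coverage}, the population covariance satisfies $\mathbb{E}_\rho[\phi\phi^\top]\succeq\kappa I$, so Theorem \ref{thm:rls_guarantee} applies to the $\epsilon$-corrupted regression and yields $\|\underline{\omega}_h - w_h^{\underline{V}_{h+1}}\|_2 \le \mathcal{E}_1(\epsilon,K,D,H,H+\gamma)$ with probability at least $1-\delta/(2H)$. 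The same argument handles $\overline{\omega}_h$, and a union bound over $h\in[H]$ and the two estimators absorbs into the polylog factors hidden in $\widetilde{O}$.

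From the parameter bound, the conclusion follows from two elementary steps. First, Cauchy–Schwarz together with $\|\phi(s,a,b)\|_2\le 1$ gives $|\phi(s,a,b)^\top\underline{\omega}_h - (\mathbb{B}_h\underline{V}_{h+1})(s,a,b)| \le \mathcal{E}_1$ for every $(s,a,b)$. Second, since rewards lie in a bounded range and $\underline{V}_{h+1}\in[0,H-h]$, one has $(\mathbb{B}_h\underline{V}_{h+1})(s,a,b)\in[0,H-h+1]$, so applying the truncation $\Pi_{H-h+1}$ to $\phi^\top\underline{\omega}_h$ can only move the estimate toward the true Bellman value, preserving the $\mathcal{E}_1$ bound. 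Since $\underline{Q}_h = \Pi_{H-h+1}(\phi^\top\underline{\omega}_h - \Gamma_h)$ with $\Gamma_h \equiv 0$ here, we get $|\underline{\iota}_h(s,a,b)| \le \mathcal{E}_1$, as desired; the argument for $-\overline{\iota}_h$ is identical.

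The main obstacle, as I see it, is the noise characterization: one has to be careful that after conditioning on the data used for later steps the residual $\xi_h^\tau$ is genuinely conditionally zero-mean and subGaussian with the appropriate variance proxy, and that the covariance lower bound $\kappa I$ survives this conditioning so that the RLS assumptions are met on the actual sample. The remainder is essentially stitching together the RLS guarantee, Cauchy–Schwarz, the uniform boundedness of the Bellman image, and a union bound over $h$.
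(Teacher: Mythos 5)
Your proposal is correct and follows essentially the same route as the paper: invoke the RLS guarantee of Theorem \ref{thm:rls_guarantee} to bound $\|\underline{\omega}_h - \underline{\omega}^*_h\|_2$, convert this to a uniform bound on $|\phi(s,a,b)^\top\underline{\omega}_h - (\mathbb{B}_h\underline{V}_{h+1})(s,a,b)|$ via Cauchy--Schwarz and $\|\phi\|_2\le 1$, and then handle the truncation. If anything, your write-up is slightly more careful than the paper's: you explicitly verify the oracle's preconditions (conditional subGaussianity of the residual with proxy of order $(H+\gamma)^2$, independence of $\underline{V}_{h+1}$ from the step-$h$ covariates via the data split, survival of the $\kappa I$ covariance lower bound), and your observation that $\Pi_{H-h+1}$ is a non-expansive projection onto an interval containing the true Bellman value cleanly covers the upper clipping at $H-h+1$, which the paper's case analysis (which writes $\underline{Q}_h=\max\{0,\phi^\top\underline{\omega}_h\}$) silently omits.
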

\begin{proof}
We start by deriving upper bounds on the Bellman error. Note that, with probability at least $1-\delta$, Theorem \ref{thm:rls_guarantee} and Assumption \ref{asmp_linearity} imply
\begin{align}\label{eq:uniform_coverage_01}
    \vert \phi^\top \underline{\omega}_h - (\mathbb{B}_h\underline{V}_{h+1})(s,a,b)\vert \leq \norm{\phi(s,a,b)}_2 \norm{\underline{\omega}_h - \underline{\omega}^*_h}_2 \leq \mathcal{E}_1(\epsilon,K,D,H,\sigma)
\end{align}
 We show the first case. 
Note that we have
\begin{equation*}
    \underline{Q}_h(s,a,b) = \max\{ 0, \phi^\top \underline{\omega}_h \} \geq \phi^\top \underline{\omega}_h , 
\end{equation*}
which, together with Equation \eqref{eq:uniform_coverage_01}, imply 
\begin{align*}
    \underline{\iota}_h(s,a,b) & = (\mathbb{B}_h\underline{V}_{h+1})(s,a,b) -  \underline{Q}_h(s,a,b) \\
            & \leq (\mathbb{B}_h\underline{V}_{h+1})(s,a,b) - \phi^\top \underline{\omega}_h \\
            & \leq \mathcal{E}_1(\epsilon,K,D,H,\sigma)~.
\end{align*}
For the lower bound, consider two cases. First, if $\phi^\top \underline{\omega}_h  \leq 0$, then we have
\begin{align*}
    \underline{\iota}_h(s,a,b) & = (\mathbb{B}_h\underline{V}_{h+1})(s,a,b) -  \underline{Q}_h(s,a,b) \\
            & = (\mathbb{B}_h\underline{V}_{h+1})(s,a,b) - 0 \\
            & \geq 0.
\end{align*}
On the other hand, if $\phi^\top \underline{\omega}_h  \geq 0$, then we have
\begin{align*}
    \underline{\iota}_h(s,a,b) & = (\mathbb{B}_h\underline{V}_{h+1})(s,a,b) -  \underline{Q}_h(s,a,b) \\
            & = (\mathbb{B}_h\underline{V}_{h+1})(s,a,b) - \phi^\top \underline{\omega}_h \\
            & \geq -\mathcal{E}_1(\epsilon,K,D,H,\sigma) ~.
\end{align*}
\end{proof}

Next, we consider the relationship between the estimated value function and the true ones based on the best responses.
\begin{lem}\label{lem:uniform_coverage_value_function}
If the bounds on the Bellman error given above hold, then, for any $s\in\mathcal{S}$, we have
\begin{align*}
    \underline{V}_h(s) \leq V^{\widehat{\pi},*}_h(s) + \mathcal{E}_1 \;\; \text{and} \;\; V^{*,\widehat{\nu}}_h(s) - \mathcal{E}_1 \leq \overline{V}_h(s)~,
\end{align*}
    where we omit the dependence of $\mathcal{E}_1$ on the relevant variables for brevity.
\end{lem}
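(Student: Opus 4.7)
The plan is to prove both inequalities simultaneously by backward induction on $h$. For the base case $h = H+1$, the statement is trivial since $\underline{V}_{H+1} = \overline{V}_{H+1} = V^{\widehat{\pi},*}_{H+1} = V^{*,\widehat{\nu}}_{H+1} = 0$ by initialization. The inductive hypothesis will assert that the two bounds hold at level $h+1$ with some error $E_{h+1}$, and we will show that they hold at level $h$ with $E_h \leq E_{h+1} + \mathcal{E}_1$ (telescoping into at most $(H-h+1)\mathcal{E}_1$, which is what $\mathcal{E}_1$ implicitly absorbs in the lemma statement).

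For the first inequality, the key observation is that $(\widehat{\pi}_h, \nu'_h)$ is constructed as a Nash equilibrium of the payoff matrix $\underline{Q}_h$, so $\nu'_h$ is a best response to $\widehat{\pi}_h$ within that game. Hence for any $\nu_h$,
$$\underline{V}_h(s) \;=\; \mathbb{E}_{a\sim\widehat{\pi}_h, b\sim\nu'_h}[\underline{Q}_h(s,a,b)] \;\leq\; \mathbb{E}_{a\sim\widehat{\pi}_h, b\sim\nu_h}[\underline{Q}_h(s,a,b)].$$
I would instantiate this with $\nu_h = \mathrm{br}(\widehat{\pi})_h$, then invoke Lemma \ref{lem:uniform_coverage_bellman_error} to replace $\underline{Q}_h$ by $(\mathbb{B}_h\underline{V}_{h+1}) + \mathcal{E}_1$. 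Expanding $\mathbb{B}_h\underline{V}_{h+1}(s,a,b) = r(s,a,b) + \mathbb{E}_{s'\sim p_h(\cdot|s,a,b)}[\underline{V}_{h+1}(s')]$ and applying the inductive hypothesis $\underline{V}_{h+1}(s') \leq V^{\widehat{\pi},*}_{h+1}(s') + E_{h+1}$ reduces the right-hand side to $\mathbb{E}_{a\sim\widehat{\pi}_h, b\sim\mathrm{br}(\widehat{\pi})_h}[Q^{\widehat{\pi},*}_h(s,a,b)] + E_{h+1} + \mathcal{E}_1$, which is precisely $V^{\widehat{\pi},*}_h(s) + E_{h+1} + \mathcal{E}_1$ by the Bellman equation for the best-response MDP.

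For the second inequality, the argument is fully symmetric: now $\widehat{\nu}_h$ is best-responded to by $\pi'_h$ in the NE of $\overline{Q}_h$, so $\overline{V}_h(s) \geq \mathbb{E}_{a\sim\mathrm{br}(\widehat{\nu})_h, b\sim\widehat{\nu}_h}[\overline{Q}_h(s,a,b)]$. Using the lower Bellman error bound $\overline{Q}_h \geq (\mathbb{B}_h\overline{V}_{h+1}) - \mathcal{E}_1$ from Lemma \ref{lem:uniform_coverage_bellman_error} and the inductive hypothesis $\overline{V}_{h+1}(s') \geq V^{*,\widehat{\nu}}_{h+1}(s') - E_{h+1}$ yields $\overline{V}_h(s) \geq V^{*,\widehat{\nu}}_h(s) - E_{h+1} - \mathcal{E}_1$, closing the induction.

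The main obstacle is handling the interaction between the Nash equilibrium structure and the (unbiased in both directions) corruption-induced Bellman error. Because the bonus $\Gamma_h$ is zero in this section, pessimism is not built into $\underline{Q}_h$ explicitly; instead the slack $\mathcal{E}_1$ from the robust estimator must do that work, and one has to be careful to use the Bellman error bound in the correct direction for $\underline{V}_h$ versus $\overline{V}_h$, and to swap $\nu'_h$ (resp.\ $\pi'_h$) for the true best response of the output policy (not the other way around) so that the inductive step attaches to $V^{\widehat{\pi},*}$ (resp.\ $V^{*,\widehat{\nu}}$) rather than to some intermediate value function.
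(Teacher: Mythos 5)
Your proposal is correct and follows essentially the same route as the paper's proof: backward induction combined with the Nash-equilibrium property of the computed pair (so that $\nu'_h$, resp.\ $\pi'_h$, can be swapped for the true best response to $\widehat{\pi}$, resp.\ $\widehat{\nu}$) and the two-sided Bellman-error bounds from Lemma \ref{lem:uniform_coverage_bellman_error}. Your explicit tracking of the per-step accumulation $E_h \le E_{h+1} + \mathcal{E}_1$ is in fact slightly more careful bookkeeping than the paper's inductive step, but the underlying argument is identical.
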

\begin{proof}  
We prove the left inequality. The right follows similar arguments. We use backward induction. For $h=H+1$, we have $\underline{V}_h(s)=V^{\hat{\pi},*}_h(s)=0$. We assume the inequality hold for $h+1$ and prove it for $h$. We have
\begin{align*}
    V^{\hat{\pi},*}_h(s) - \underline{V}_h(s) & = \mathbb{E}_{a\sim \hat{\pi}, b\sim *}\left[ Q^{\hat{\pi},*}_h(s,a,b)\right] - \mathbb{E}_{a\sim\hat{\pi}, b\sim\hat{\nu}} \left[ \underline{Q}_h(s,a,b)\right] \\
        & = \mathbb{E}_{a\sim \hat{\pi}, b\sim *}\left[ Q^{\hat{\pi},*}_h(s,a,b) - \underline{Q}_h(s,a,b)\right] + \left( \mathbb{E}_{a\sim \hat{\pi}, b\sim *}\left[ \underline{Q}_h(s,a,b) \right] - \mathbb{E}_{a\sim\hat{\pi}, b\sim\hat{\nu}} \left[ \underline{Q}_h(s,a,b)\right] \right)~.
\end{align*}
First, note that
\begin{align*}
    Q^{\hat{\pi},*}_h(s,a,b) - \underline{Q}_h(s,a,b) = \mathbb{B}_h\left( V^{\hat{\pi},*}_{h+1}(s,a,b) - \underline{V}_{h+1}(s,a,b)\right) + \underline{\iota}_h(s,a,b) \geq - \mathcal{E}_1~,
\end{align*}
where the second inequality follows from Lemma \ref{lem:uniform_coverage_bellman_error} and the induction assumption. Furthermore, by the NE value property, we have
\begin{align*}
    \mathbb{E}_{a\sim \hat{\pi}, b\sim *}\left[ \underline{Q}_h(s,a,b) \right] - \mathbb{E}_{a\sim\hat{\pi}, b\sim\hat{\nu}} \left[ \underline{Q}_h(s,a,b)\right] \geq 0~.
\end{align*}
Thus, we obtain $\underline{V}_h(s)  \leq V^{\hat{\pi},*}_h(s) + \mathcal{E}_1$.
\end{proof}
Next, we prove a result that gives us an upper bound on $\sigma^2$ in terms of the reward variance and $H$. 
\begin{lem}\label{lem:variance_bound}
    We have $Var (\xi_h^\tau | s^\tau_h,a^\tau_h,b^\tau_h) = \sigma^2  \leq (H+\gamma)^2$. 
\end{lem}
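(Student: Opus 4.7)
The plan is to decompose the regression noise $\xi_h^\tau$ into two independent-looking pieces and then apply the triangle inequality for the $L^2$ norm (Minkowski's inequality) to combine their standard deviations. Concretely, in the robust least-squares subroutine the target at time $h$ is $y_h^\tau = r_h^\tau + V_{h+1}(s_{h+1}^\tau)$, where $V$ is either $\underline{V}_{h+1}$ or $\overline{V}_{h+1}$, and by the linear Markov game assumption (Definition \ref{asmp_linearity}) its conditional expectation equals $\phi(s_h^\tau,a_h^\tau,b_h^\tau)^\top \omega^*$. Thus the noise is
\begin{align*}
\xi_h^\tau = \bigl(r_h^\tau - \mathbb{E}[r_h^\tau \mid s_h^\tau,a_h^\tau,b_h^\tau]\bigr) + \bigl(V_{h+1}(s_{h+1}^\tau) - \mathbb{E}[V_{h+1}(s_{h+1}^\tau) \mid s_h^\tau,a_h^\tau,b_h^\tau]\bigr).
\end{align*}

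The first step is to control the two components separately. The reward noise term has conditional variance at most $\gamma^2$ directly from the $\gamma^2$-subGaussian assumption on $r_h$ in the Markov game definition. For the transition noise term, I would use that $V_{h+1}$ is obtained by the truncation operator $\Pi_{H-h}$ in Algorithm \ref{alg:generic_PMVI}, so $V_{h+1}(s) \in [0, H-h] \subseteq [0,H]$ deterministically. A bounded random variable on $[0,H]$ has standard deviation at most $H$ (trivially, since its second moment is at most $H^2$; one could even sharpen to $H/2$ by Popoviciu, but this is unnecessary).

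The second step is to combine. Since $\xi_h^\tau$ is conditionally a sum of the two centered random variables above, conditional Minkowski's inequality gives
\begin{align*}
\sqrt{\mathrm{Var}(\xi_h^\tau \mid s_h^\tau,a_h^\tau,b_h^\tau)} \leq \sqrt{\mathrm{Var}(r_h^\tau - \mathbb{E}[r_h^\tau\mid\cdot] \mid \cdot)} + \sqrt{\mathrm{Var}(V_{h+1}(s_{h+1}^\tau) - \mathbb{E}[\cdot\mid\cdot] \mid \cdot)} \leq \gamma + H.
\end{align*}
Squaring yields $\sigma^2 \leq (H+\gamma)^2$, as required.

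There is no real obstacle here; the only subtlety worth flagging is to be careful that $V_{h+1}$ used in the target is the pessimistic/optimistic estimate constructed by the algorithm and not the true value function, but this does not affect the argument because the boundedness $V_{h+1}\in[0,H]$ holds by construction of the truncation $\Pi_{H-h}$ in Algorithm \ref{alg:generic_PMVI}, independently of how $V_{h+1}$ was estimated. One should also note that $V_{h+1}$ is measurable with respect to the data split used for previous time steps and is therefore independent of the noise sources appearing at step $h$, which is what justifies conditioning the variance on $(s_h^\tau,a_h^\tau,b_h^\tau)$ without picking up cross terms between the reward and transition components.
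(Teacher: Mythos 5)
Your proof is correct and essentially matches the paper's: the paper expands the conditional variance of $r_h^\tau + \underline{V}_{h+1}(s_{h+1}^\tau)$ into the two marginal variances plus a cross term, bounds the cross term via Cauchy--Schwarz, and arrives at $\bigl(\sqrt{\mathrm{Var}(r_h^\tau\mid\cdot)} + \sqrt{\mathrm{Var}(\underline{V}_{h+1}(s_{h+1}^\tau)\mid\cdot)}\bigr)^2 \leq (\gamma+H)^2$, which is exactly your Minkowski step written out by hand. The same two ingredients are used in both arguments, namely the $\gamma^2$-subGaussian reward noise and the deterministic bound $\underline{V}_{h+1}\in[0,H]$ from the truncation in Algorithm \ref{alg:generic_PMVI}.
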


\begin{proof}
    We consider $\underline{V}_{h+1}$. The case for $\overline{V}_{h+1}$ is similar. We have
    \begin{align*}
        Var (\xi_h^\tau | s^\tau_h,a^\tau_h,b^\tau_h) & = \mathbb{E} \left[ (r^\tau_h + \underline{V}_{h+1}(s^\tau_{h+1}) - (\mathbb{B}_h\underline{V}_{h+1})(s^\tau_h,a^\tau_h,b^\tau_h) )^2\vert s^\tau_h,a^\tau_h,b^\tau_h \right] \\
            & = \mathbb{E}\left[ (r^\tau_h + \underline{V}_{h+1}(s^\tau_{h+1}) - \mathbb{E}[r^\tau_h + \underline{V}_{h+1}(s^\tau_{h+1}) \vert s^\tau_h,a^\tau_h,b^\tau_h ] )^2 \vert s^\tau_h,a^\tau_h,b^\tau_h  \right] \\
            & = \mathbb{E}\left[ (r^\tau_h - \mathbb{E}[r^\tau_h\vert s^\tau_h,a^\tau_h,b^\tau_h])^2 \vert s^\tau_h,a^\tau_h,b^\tau_h\right]
             + \mathbb{E}\left[ (\underline{V}_{h+1}(s^\tau_{h+1}) - \mathbb{E}[\underline{V}_{h+1}(s^\tau_{h+1})\vert s^\tau_h,a^\tau_h,b^\tau_h])^2 \vert s^\tau_h,a^\tau_h,b^\tau_h \right] \\
             & \quad + 2\mathbb{E}\left[ (r^\tau_h - \mathbb{E}[r^\tau_h\vert s^\tau_h,a^\tau_h,b^\tau_h])(\underline{V}_{h+1}(s^\tau_{h+1}) - \mathbb{E}[\underline{V}_{h+1}(s^\tau_{h+1})\vert s^\tau_h,a^\tau_h,b^\tau_h]) \right] \\
             & = Var(r^\tau_h\vert s^\tau_h,a^\tau_h,b^\tau_h) + Var(\underline{V}_{h+1}(s^\tau_{h+1})\vert s^\tau_h,a^\tau_h,b^\tau_h) \\
             & \quad + 2\sqrt{\mathbb{E}\left[ (r^\tau_h - \mathbb{E}[r^\tau_h\vert s^\tau_h,a^\tau_h,b^\tau_h])^2\right] \mathbb{E}\left[(\underline{V}_{h+1}(s^\tau_{h+1}) - \mathbb{E}[\underline{V}_{h+1}(s^\tau_{h+1})\vert s^\tau_h,a^\tau_h,b^\tau_h])^2 \right] } \\
             & \leq Var(r^\tau_h\vert s^\tau_h,a^\tau_h,b^\tau_h) + Var(\underline{V}_{h+1}(s^\tau_{h+1})\vert s^\tau_h,a^\tau_h,b^\tau_h) \\ & \quad + 2\sqrt{Var(r^\tau_h\vert s^\tau_h,a^\tau_h,b^\tau_h) Var(\underline{V}_{h+1}(s^\tau_{h+1})\vert s^\tau_h,a^\tau_h,b^\tau_h) }\\
             & = \left( \sqrt{Var(r^\tau_h\vert s^\tau_h,a^\tau_h,b^\tau_h)} + \sqrt{Var(\underline{V}_{h+1}(s^\tau_{h+1})\vert s^\tau_h,a^\tau_h,b^\tau_h)} \right)^2 \leq (\gamma + H)^2~,
    \end{align*}
    where the fourth equality uses Cauchy-Schwarz and the last one uses the fact that $0\leq \underline{V}_{h+1}(s)\leq H$, for all $h\in[H]$, $s\in\mathcal{S}$.
\end{proof}

Next, we state the following well-known result which will help us express the suboptimality gap in terms of quantities that we can control. For a proof, see \citep{cai2020provably}.

\begin{lem}\label{lem:value_difference}
    (Value difference lemma) Given an MG $(\mathcal{S},\mathcal{A},\mathcal{B},r,H)$, let $\hat{\pi}\otimes\hat{\nu} = \{\hat{\pi}_h\otimes\hat{\nu}_h: \mathcal{S} \rightarrow \Delta(\mathcal{A})\times\Delta(\mathcal{B})\}_{h\in[H]}$ be a product policy, $(\pi,\nu)$ be a policy pair, and $(\widehat{Q}_h)_{h\in[H]}$ be any estimated $Q$-functions. For any $h\in[H]$ we define the estimated value function $\widehat{V}_h:\mathcal{S}\rightarrow\mathbb{R}$ by setting $\widehat{V}_h(s) = \langle\widehat{Q}_h(s,\cdot,\cdot),\hat{\pi}_h(\cdot|s)\otimes\hat{\nu}(\cdot|s)\rangle$, for all $s\in\mathcal{S}$. Then, for all $s\in\mathcal{S}$, we have
    \begin{align*}
        \widehat{V}_1(s)-V^{\pi,\nu}_1(s) & = \sum^H_{h=1} \mathbb{E}_{\pi,\nu}\left[ \langle\widehat{Q}_h(s,\cdot,\cdot),\hat{\pi}_h(\cdot|s)\otimes\hat{\nu}(\cdot|s) - \pi(\cdot|s)\otimes\nu(\cdot|s)\rangle \vert s_1=s\right] \\
            & \quad + \sum^H_{h=1}\mathbb{E}_{\pi,\nu}\left[ \widehat{Q}_h(s_h,a_h,b_h) - (\mathbb{B}_h V_{h+1})(s_h,a_h,b_h)\vert s_1=s\right]~.
    \end{align*} 
\end{lem}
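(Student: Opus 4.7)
The plan is a standard backward telescoping decomposition of the value-function gap. Define $e_h(s):=\widehat{V}_h(s)-V^{\pi,\nu}_h(s)$ for every $h\in[H+1]$, with the terminal convention $\widehat{V}_{H+1}\equiv V^{\pi,\nu}_{H+1}\equiv 0$. The identity will follow once I establish a pointwise one-step recursion for $e_h$ that isolates (i) a policy-mismatch contribution at step $h$, (ii) a Bellman-error contribution at step $h$, and (iii) the propagated next-step gap $e_{h+1}$ under the true transition kernel; unrolling this recursion from $h=1$ down to $h=H$ under $\mathbb{E}_{\pi,\nu}[\,\cdot\,\vert\,s_1=s]$ then produces exactly the two sums in the statement.

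To derive the one-step recursion, I would expand $\widehat{V}_h(s)$ and $V^{\pi,\nu}_h(s)$ from their definitions and add and subtract $\langle\widehat{Q}_h(s,\cdot,\cdot),\pi_h(\cdot|s)\otimes\nu_h(\cdot|s)\rangle$. This immediately splits $e_h(s)$ into the policy-mismatch term $\langle\widehat{Q}_h(s,\cdot,\cdot),\hat{\pi}_h\otimes\hat{\nu}_h-\pi_h\otimes\nu_h\rangle$, which is exactly the first summand in the target identity, plus a residual $\langle\widehat{Q}_h-Q^{\pi,\nu}_h,\pi_h\otimes\nu_h\rangle$. Using the Bellman consistency relation $Q^{\pi,\nu}_h=\mathbb{B}_hV^{\pi,\nu}_{h+1}$ and then adding and subtracting $\mathbb{B}_h\widehat{V}_{h+1}$ inside this residual splits it further into a Bellman-error term $\widehat{Q}_h(s_h,a_h,b_h)-(\mathbb{B}_h\widehat{V}_{h+1})(s_h,a_h,b_h)$ (evaluated after sampling $(a_h,b_h)\sim\pi_h\otimes\nu_h$) and a propagation piece that, by the definition of $\mathbb{B}_h$, equals $\mathbb{E}_{s_{h+1}\sim p_h(\cdot|s_h,a_h,b_h)}[e_{h+1}(s_{h+1})]$.

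To conclude, I would take $\mathbb{E}_{\pi,\nu}[\,\cdot\,\vert\,s_1=s]$ on both sides of the recursion and iterate over $h=1,\ldots,H$, using the tower property to collapse the nested conditional expectations into a single expectation conditioned on $s_1=s$. The telescope is clean because the propagation piece at step $h$ is precisely the quantity being decomposed at step $h+1$ once the transition is averaged out, and the boundary $e_{H+1}\equiv 0$ terminates the recursion. There is no substantive obstacle here; the main point to keep in mind is that, after the decomposition, both sums must be evaluated along the trajectory generated by the comparator $(\pi,\nu)$ rather than by $(\hat{\pi},\hat{\nu})$, which is precisely what the tower-property unrolling enforces automatically.
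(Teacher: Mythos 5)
Your proposal is correct and is essentially the standard telescoping argument that the paper itself does not reproduce but defers to via the citation of \citet{cai2020provably}: the add-and-subtract of $\langle\widehat{Q}_h,\pi_h\otimes\nu_h\rangle$, the Bellman consistency $Q^{\pi,\nu}_h=\mathbb{B}_hV^{\pi,\nu}_{h+1}$, and the unrolling under $\mathbb{E}_{\pi,\nu}[\,\cdot\,\vert s_1=s]$ are exactly the steps of that proof. The only point worth making explicit is that the symbol $V_{h+1}$ in the lemma's second sum must be read as the estimated value $\widehat{V}_{h+1}$ (consistent with the paper's later definition of the Bellman error $\underline{\iota}_h$), since that is what makes your propagation term $\mathbb{E}_{s_{h+1}\sim p_h}[e_{h+1}(s_{h+1})]$ telescope cleanly; you assume this implicitly and correctly.
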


Now, the first term in the lemma above can be bounded by $0$ (see Lemma A.3 of \citep{zhong2022pessimistic}). Thus, we obtain
\begin{align}\label{eq:sub_opt_gap_alternative}
    \widehat{V}_1(s)-V^{\pi,\nu}_1(s)  \leq \sum^H_{h=1}\mathbb{E}_{\pi,\nu}\left[ \widehat{Q}_h(s_h,a_h,b_h) - (\mathbb{B}_h V_{h+1})(s_h,a_h,b_h)\vert s_1=s\right]~.
\end{align}

Now we write the suboptimality gap as:
\begin{align}\label{eq:sub_opt_gap_decomposition}
    \optgap (\widehat{\pi},\widehat{\nu},s) = V^{*,\widehat{\nu}}_1(s) - V^{\widehat{\pi},*}_1(s) = \left( V^{*,\widehat{\nu}}_1(s) - V^*_1(s)\right) + \left( V^*_1(s) - V^{\widehat{\pi},*}_1(s)\right) ~.
\end{align}
For the first term, we have
\begin{equation*}
    V^{*,\widehat{\nu}}_1(s) - V^*_1(s) \leq \overline{V}_1(s) - V^*_1(s) + \mathcal{E}_1 \leq \overline{V}_1(s) - V^{\pi',\nu^*}_1(s) + \mathcal{E}_1~,
\end{equation*}
for some $\pi'$, where the first inequality follows from Lemma \ref{lem:uniform_coverage_value_function} and the second inequality follows from the fact that $(\pi^*,\nu^*)$ is an NE policy. Similarly, we have that 
$V^*_1(s) - V^{\widehat{\pi},*}_1(s) \leq V^{\pi^*,\nu'}_1(s) - \underline{V}_1(s) + \mathcal{E}_1$, for some $\nu'$. 
Then, Equation \eqref{eq:sub_opt_gap_alternative}, Lemma \ref{lem:uniform_coverage_bellman_error} and Lemma \ref{lem:value_difference} imply
\begin{align*}
    V^{*,\widehat{\nu}}_1(s) - V^*_1(s) & \leq \overline{V}_1(s) - V^{\pi',\widehat{\nu}}_1(s) + \mathcal{E}_1(\epsilon,K,H,d,\sigma) \\
    & \leq \sum^H_{h=1} \mathbb{E}_{\pi',\nu^*}\left[ -\overline{\iota}_h(s_h,a_h,b_h) \right] + H \mathcal{E}_1(\epsilon,K,H,d,\sigma)  \\
        & \leq 2H \mathbb{E}_{\pi',\nu^*}\left[ \mathcal{E}_1(\epsilon,K,H,d,\sigma)  \vert s_1=s \right] \\
        & \leq 2H c_1(\delta) \cdot \left( \sqrt{\frac{\sigma^2poly(d)}{\kappa^2 K}} + \frac{\sigma}{\kappa}\epsilon \right) \\
        & \leq 2H c_1(\delta) \cdot \left( \sqrt{\frac{(H+\gamma)poly(d)}{\kappa^2 K}} + \frac{H+\gamma}{\kappa}\epsilon \right)~,
\end{align*}
where the last inequality follows from Lemma \ref{lem:variance_bound}. Similarly, we have
\begin{align*}
    V^*_1(s) - V^{\widehat{\pi},*}_1(s) & \leq 2H c_1(\delta) \cdot \left( \sqrt{\frac{(H+\gamma)poly(d)}{\kappa^2 K}} + \frac{H+\gamma}{\kappa}\epsilon \right)~.
\end{align*}
Finally, we obtain
\begin{align*}
    \optgap (\widehat{\pi},\widehat{\nu},s) & \leq O \left( \sqrt{\frac{H(H+\gamma)^2poly(d)}{\kappa^2 K}} + \frac{H(H+\gamma)}{\kappa}\epsilon \right)~.
\end{align*}
\end{proof}

\subsection{Proof of Theorem \ref{thm:linear_case}}

Given clean dataset $\widetilde{D} = \{ (\tilde{x}_1,\tilde{y}_1),\ldots,(\tilde{x}_K,\tilde{y}_K)\}$, consider the observation model 
\begin{align*}
    \tilde{y}_i = \langle \omega^*, \tilde{x}_i \rangle + \xi_i~,
\end{align*}
where $\xi_i$ are $\gamma^2$-sub-Gaussian independent noise variables and $\omega^*$ is the true regressor with $\norm{\omega^*}\leq R$, for some $R < \infty$, and $\norm{\tilde{x}_i}_2\leq 1$, for $i\in[K]$. Furthermore, given $\epsilon < 1/2$, assume that $\epsilon$-fraction of the outcomes in $\widetilde{D}$ are corrupted and let $D = \{ (x_1,y_1),\ldots,(x_K,y_K)\}$ denote the corrupted dataset. Formally, we assume that the covariates remain clean, that is, $x_i=\tilde{x}_i$, for $i\in [K]$, and that, for any $i\in [K]$, a coin is flipped with success rate $\epsilon$ to determine whether $\tilde{y}_i$ is corrupted into $y_i$ or not. 

Furthermore, let 
\begin{align*}
    \Sigma = \frac{1}{K}\sum^K_{k=1}x_k x_k^T
\end{align*}
denote the covariance matrix of $D$. The following result \citep{chen2022online} provides error norm bounds of the regressor estimated using the SCRAM method on the corrupted dataset $D$.


\begin{thm}\label{thm:scram_guarantee}
Let $0<\epsilon<0.5$ be an upper bound on the contamination level, and suppose $K$ satisfies $K\geq  O(\log (\min \{ K,d\} )/\epsilon)$. Then, given $\delta \in (0,1)$, there exists a $poly(K,d)$ algorithm which takes as input the dataset $D$ and, with probability at least $1-\delta$, outputs a vector $\omega$ that satisfies

\begin{align*}
    \norm{\omega^* - \omega}_{\Sigma} \leq O\left( \epsilon \gamma \log(1/\epsilon) + \min \left\{ \gamma \sqrt{\frac{d+\log(1/\delta)}{K}}, (R\gamma)^{1/2}\sqrt[4]{\frac{\log(1/\delta)}{K}} \right\} \right)~.
\end{align*}

\end{thm}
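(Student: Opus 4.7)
The plan is to adapt the SCRAM framework of \citep{chen2022online} to the regression model at hand. The algorithm jointly optimizes a regressor $\omega \in \mathbb{R}^d$ and a weighting vector $z \in [0,1]^K$ subject to the combinatorial constraint $\sum_i z_i \geq (1-\epsilon)K$ (so at most an $\epsilon$-fraction is downweighted) together with a spectral constraint on the weighted empirical second moment of residuals, which prevents the adversary from concentrating mass on a few high-leverage points. The loss is a weighted squared error $\sum_i z_i (y_i - \langle \omega, x_i\rangle)^2$ regularized by the spectral norm of a matrix built from $z$, and one minimizes it by alternating between closed-form weighted-least-squares updates for $\omega$ and soft-thresholding updates for $z$.

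First, I would characterize the population target. For any clean sample the residual equals $\xi_i$ and is $\gamma^2$-sub-Gaussian, so the clean-sample weighted residual norm concentrates at rate $\gamma^2$ times the spectral mass of $z$ restricted to clean indices. Second, I would show that at any approximate first-order stationary point $(\omega, z)$ of the alternating minimization the Karush--Kuhn--Tucker conditions force the fitted residuals to be \emph{spectrally consistent} with this concentration, so the adversary cannot place weight on large-residual corrupt points without violating stationarity. This is where the spectral regularizer is crucial: it converts a pointwise outlier-robustness argument into a matrix-concentration argument, which tolerates adversarial corruption on an $\epsilon$-fraction.

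Third, I would decompose the error $\omega - \omega^*$ into two pieces. The statistical piece, coming from clean-sample noise, is controlled by matrix Bernstein applied to $\Sigma$ together with standard sub-Gaussian tail bounds, yielding the $\gamma\sqrt{(d+\log(1/\delta))/K}$ term. The alternative rate $(R\gamma)^{1/2}(\log(1/\delta)/K)^{1/4}$ is obtained via a Rademacher complexity / covering-number bound that pays in $R$ but not in $d$, and one takes the minimum of the two. The corruption piece contributes the $O(\epsilon\gamma\log(1/\epsilon))$ term, which comes from inverting a Gaussian tail at level $\epsilon$ to argue that the residuals the algorithm necessarily misclassifies are of order $\gamma\sqrt{\log(1/\epsilon)}$, and aggregating $\epsilon K$ of them in the $\Sigma$-norm.

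The main obstacle will be proving that the alternating minimization actually produces an approximate stationary point with the desired structural property, since the objective is nonconvex and generic stationary points could be poor. The remedy is to combine (i) a warm start (e.g., a Huber-regression initialization) with (ii) a monotone decrease argument for the regularized loss, and (iii) an induction showing that along the iterate trajectory the effective spectral gap between clean and corrupt contributions is preserved. The sample-size condition $K \gtrsim \log(\min\{K,d\})/\epsilon$ enters precisely in the matrix concentration step, ensuring enough samples for the empirical weighted covariance to be close to its population counterpart uniformly over admissible weight vectors, and the regularization parameter must be tuned as a function of $\epsilon$, $d$, and $\gamma$ so that the corruption and statistical terms trade off as stated.
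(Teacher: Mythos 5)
The paper does not prove this statement at all: it is imported verbatim as an oracle guarantee from \citet{chen2022online} (the text introduces it with ``The following result \citep{chen2022online} provides error norm bounds \dots''), so there is no in-paper argument to compare yours against. What you have written is an attempt to reconstruct the proof of that external theorem, and it correctly names the SCRAM ingredients: joint optimization over a regressor $\omega$ and a weight vector $z$ with $\sum_i z_i \geq (1-\epsilon)K$, a spectral regularizer on the weighted residual second moment, a structural lemma about approximate first-order stationary points, and a decomposition of the error into a statistical term (two branches, matrix concentration versus a covering/Rademacher bound paying in $R$) and a corruption term. As a roadmap this is faithful to how the cited result is actually proved.

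However, the one place where you commit to a quantitative mechanism is also where the argument breaks. You derive the corruption term by ``aggregating $\epsilon K$ residuals of order $\gamma\sqrt{\log(1/\epsilon)}$ in the $\Sigma$-norm.'' That is a second-moment computation: it yields $\bigl(\tfrac{1}{K}\cdot \epsilon K \cdot \gamma^2\log(1/\epsilon)\bigr)^{1/2} = \gamma\sqrt{\epsilon\log(1/\epsilon)}$, which is a $\sqrt{\epsilon}$ rate, strictly weaker than the claimed $O(\epsilon\gamma\log(1/\epsilon))$. Obtaining the rate \emph{linear} in $\epsilon$ is the entire technical content of the SCRAM analysis: the spectral/stationarity condition must first be used to bound $\sum_{\text{corrupt}} r_i^2 \lesssim \epsilon K\gamma^2\log(1/\epsilon)$, and this bound is then inserted via Cauchy--Schwarz into the first-order optimality (normal-equation) identity $\tfrac{1}{K}\bigl\lVert\sum_{\text{corrupt}} z_i x_i r_i\bigr\rVert \leq \tfrac{1}{K}\sqrt{\epsilon K}\bigl(\sum_{\text{corrupt}} r_i^2\bigr)^{1/2}$, which is what converts $\sqrt{\epsilon}$ into $\epsilon$. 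Your sketch never performs this conversion, so as written it proves a weaker theorem. A secondary issue: your stated ``main obstacle'' (that generic stationary points could be poor, remedied by a warm start) is in tension with your own second step; the actual argument shows that \emph{every} approximate first-order stationary point satisfies the guarantee, so one only needs alternating minimization to reach some such point, which follows from standard monotone descent of a bounded-below objective and requires no warm start.
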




For every $h\in[H]$, we define the datasets 
\begin{align*}
    \widetilde{D}_{min}(h) = \{ \underbrace{\phi (\tilde{s}^\tau_h,\tilde{a}^\tau_h,\tilde{b}^\tau_h)}_\text{covariates}, \underbrace{\tilde{r}^\tau_h + \underline{V}_{h+1}(\tilde{s}^\tau_{h+1})}_\text{clean obs.} \}^K_{\tau=1}
\end{align*}
 and 
\begin{align*}
    \widetilde{D}_{max}(h) = \{ \phi (\tilde{s}^\tau_h,\tilde{a}^\tau_h,\tilde{b}^\tau_h), \tilde{r}^\tau_h + \overline{V}_{h+1}(\tilde{s}^\tau_{h+1}) )\}^K_{\tau=1}~.
\end{align*}
Similarly, the partitions of the corrupted data are defined as
\begin{align*}
     D_{min}(h) = \{ \underbrace{\phi (s^\tau_h,a^\tau_h,b^\tau_h)}_\text{covariates}, \underbrace{r^\tau_h + \underline{V}_{h+1}(s^\tau_{h+1})}_\text{$\epsilon$-corrupted obs.} )\}^K_{\tau=1}
\end{align*}
and
\begin{align*}
     D_{max}(h) = \{ \phi (s^\tau_h,a^\tau_h,b^\tau_h), r^\tau_h + \overline{V}_{h+1}(s^\tau_{h+1}) )\}^K_{\tau=1}~.
\end{align*}
Note that we assume $\phi (\tilde{s}^\tau_h,\tilde{a}^\tau_h,\tilde{b}^\tau_h) = \phi (s^\tau_h,a^\tau_h,b^\tau_h)$, for all $\tau \in [K]$. Assumption \ref{asmp_linearity} implies that there exist weight vectors $\underline{\omega}^*_h, \overline{\omega}^*_h \in \mathbb{R}^d$ such that we have
\begin{align}
    \phi (s^\tau_h,a^\tau_h,b^\tau_h)^\top \underline{\omega}^*_h + \xi^\tau_h  = (\mathbb{B}_h\underline{V}_{h+1})(s^\tau_h,a^\tau_h,b^\tau_h) + \xi^\tau_h  = \left( \tilde{r}^\tau_h + \underline{V}_{h+1}(\tilde{s}^\tau_{h+1})\right) \label{eq:weights_of_bellman_1}
\end{align}
and 
\begin{align}
    \phi (s^\tau_h,a^\tau_h,b^\tau_h)^\top \overline{\omega}^*_h + \xi^\tau_h = (\mathbb{B}_h\overline{V}_{h+1})(s^\tau_h,a^\tau_h,b^\tau_h) + \xi^\tau_h = \left( \tilde{r}^\tau_h + \overline{V}_{h+1}(\tilde{s}^\tau_{h+1})\right)  ~, \label{eq:weights_of_bellman_2}
\end{align}
where $\xi^\tau_h$ are zero-mean $\gamma^2$-subGaussian random variables. 

Now, let us define the covariance matrices, for all $h\in[H]$, as:
\begin{align*}
    \Sigma_h = \frac{1}{K}\sum^K_{\tau=1}\phi(s^\tau_h,a^\tau_h,b^\tau_h)\phi (s^\tau_h,a^\tau_h,b^\tau_h)^\top ~.
\end{align*}
Note that $\Sigma_h$ depend on the triples $(s^\tau_h,a^\tau_h,b^\tau_h)$, $\tau \in [K]$, which are not changed under corruption. Thus, the covariance matrices are clean. However, the observations on both $\widetilde{D}_{min}(h)$ and $\widetilde{D}_{max}(h)$, for all $h\in[H]$, are $\epsilon$-corrupted. Then, Theorem \ref{thm:scram_guarantee} implies the following result.
\begin{cor}\label{cor:scram_guarantee}
    Under the conditions of Theorem \ref{thm:scram_guarantee}, there exist a $poly(K,d)$ algorithm that returns a sequence of estimators $(\underline{\omega}_h,\overline{\omega}_h)^H_{h=1}$ such that, given $\delta >0$, the following inequalities are satisfied, for all $h\in[H]$, with probability at least $1-\delta/2$:
    \begin{align}
        \norm{\underline{\omega}^*_h - \underline{\omega}_h}_{\Sigma_h} & \leq O\left( \gamma \epsilon \log(1/\epsilon) + \min \left\{ \gamma \sqrt{\frac{d+\log (8H/\delta)}{K}}, \sqrt{H\gamma}\sqrt[4]{\frac{d\log(8H/\delta)}{K}} \right\} \right)~, \label{eq:corruption_error_bound_1}\\
        \norm{\overline{\omega}^*_h - \overline{\omega}_h}_{\Sigma_h} & \leq O\left( \gamma \epsilon \log(1/\epsilon) + \min \left\{ \gamma \sqrt{\frac{d+\log (8H/\delta)}{K}}, \sqrt{H\gamma}\sqrt[4]{\frac{d\log(8H/\delta)}{K}} \right\} \right)~. \label{eq:corruption_error_bound_2}
    \end{align}
    We will denote the right-hand side bound on the errors of norms by $\mathcal{E}(\epsilon,K,H,d)$, as shorthand notation. 
\end{cor}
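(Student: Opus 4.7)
The plan is to obtain the corollary as a direct application of Theorem \ref{thm:scram_guarantee} to each of the $2H$ regression problems defined by the datasets $\widetilde{D}_{min}(h)$ and $\widetilde{D}_{max}(h)$, $h\in[H]$, followed by a union bound. First I would verify the preconditions of Theorem \ref{thm:scram_guarantee} on a generic such dataset. By the clean-covariates hypothesis of Theorem \ref{thm:linear_case}, we have $\phi(\widetilde{s}^\tau_h,\widetilde{a}^\tau_h,\widetilde{b}^\tau_h)=\phi(s^\tau_h,a^\tau_h,b^\tau_h)$, so only the observation part of each sample is $\epsilon$-corrupted. By Definition \ref{asmp_linearity}, $\norm{\phi(\cdot,\cdot,\cdot)}_2\leq 1$, and Equations \eqref{eq:weights_of_bellman_1}–\eqref{eq:weights_of_bellman_2} identify the clean observations as $\langle\underline{\omega}^*_h,\phi^\tau_h\rangle+\xi^\tau_h$ and $\langle\overline{\omega}^*_h,\phi^\tau_h\rangle+\xi^\tau_h$, where $\xi^\tau_h$ is conditionally zero-mean and $\gamma^2$-subGaussian by the noise assumption on the rewards. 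Finally, by the discussion following Definition \ref{asmp_linearity}, $\norm{\underline{\omega}^*_h}_2,\norm{\overline{\omega}^*_h}_2\leq H\sqrt{d}$, so we may take $R=H\sqrt{d}$ in Theorem \ref{thm:scram_guarantee}.

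Second, since $K\geq O(\log(\min\{K,d\})/\epsilon)$ by assumption and $\epsilon<1/2$, the sample-complexity requirement of Theorem \ref{thm:scram_guarantee} is satisfied. I would invoke the theorem with confidence parameter $\delta/(4H)$ on each of the $2H$ regression problems, yielding estimators $\underline{\omega}_h$ and $\overline{\omega}_h$ with
\[
\norm{\underline{\omega}^*_h-\underline{\omega}_h}_{\Sigma_h}\leq O\!\left(\gamma\epsilon\log(1/\epsilon)+\min\!\left\{\gamma\sqrt{\tfrac{d+\log(8H/\delta)}{K}},\,(R\gamma)^{1/2}\sqrt[4]{\tfrac{\log(8H/\delta)}{K}}\right\}\right),
\]
and analogously for $\overline{\omega}_h$. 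Substituting $R=H\sqrt{d}$ into the second term inside the minimum produces the $\sqrt{H\gamma}\cdot\sqrt[4]{d\log(8H/\delta)/K}$ expression as stated, up to the absorbed $\log$ factors. A union bound over the $2H$ events then yields the simultaneous guarantee with failure probability at most $2H\cdot\delta/(4H)=\delta/2$.

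The one subtlety worth flagging is the measurability of the observations $\tilde{r}^\tau_h+\underline{V}_{h+1}(\tilde{s}^\tau_{h+1})$: here $\underline{V}_{h+1}$ is itself estimated from data, so one must argue that $\xi^\tau_h$ is still conditionally $\gamma^2$-subGaussian given $\underline{V}_{h+1}$. This is exactly why Algorithm \ref{alg:generic_PMVI} randomly splits the dataset into $H$ independent batches: when SCRAM is invoked at level $h$, the function $\underline{V}_{h+1}$ is a fixed (random but independent) function of samples disjoint from the $h$-th batch, so the conditional subGaussianity of $\xi^\tau_h$ reduces to that of the reward noise in Definition \ref{asmp_linearity}. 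Beyond this standard decoupling argument, the proof is bookkeeping, and the corollary follows.
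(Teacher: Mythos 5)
Your proposal is correct and follows essentially the same route as the paper: a direct application of Theorem \ref{thm:scram_guarantee} to each of the $2H$ regression problems with $R=H\sqrt{d}$ (the paper cites Lemma E.1 of \citet{zhong2022pessimistic} for this norm bound), followed by a union bound. Your additional remark on why the $H$-way data split guarantees the conditional subGaussianity of $\xi^\tau_h$ given the estimated $\underline{V}_{h+1}$ is a genuine subtlety that the paper's one-line proof leaves implicit, and you handle it correctly.
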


\begin{proof}
The result follows immediately from Theorem \ref{thm:scram_guarantee} by applying the union bound over $2H$ events, and also by noting that $\norm{\underline{\omega}^*_h}, \norm{\overline{\omega}^*_h} \leq H\sqrt{d}$, for all $h\in[H]$, by Lemma E.1 of \citep{zhong2022pessimistic}.
\end{proof}

Next, we prove the upper bounds on the Bellman error in terms of corruption level and bonus term. Given $(s,a,b)\in \mathcal{S}\times \mathcal{A}\times \mathcal{B}$ and $h\in[H]$, recall that the bonus term is defined as 
\begin{align*}
    \Gamma_h(s,a,b) =\left(\sqrt{K} \mathcal{E}(\epsilon,K,H,d) + 2H\sqrt{d} \right) \norm{\phi(s,a,b)}_{\Lambda_h^{-1}}~,
\end{align*}
where 
\begin{align*}
    \Lambda_h = \sum^K_{\tau=1}\phi(s^\tau_h,a^\tau_h,b^\tau_h)\phi (s^\tau_h,a^\tau_h,b^\tau_h)^\top + I~.
\end{align*}
Note that $\Lambda_h$ is positive definite, and hence, invertible, since $\Sigma_h$ is positive semi-definite.

From here on, let us use the following notation for ease of presentation. For a given $(s,a,b)$ and $h\in[H]$, let
\begin{align*}
    \phi := \phi(s,a,b), \;\; \text{and} \;\; \phi_h := \phi(s_h,a_h,b_h)~.
\end{align*}

\begin{lem}\label{lem:bellman_error}
    Given tuple $(s,a,b)$, let $\underline{\iota}_h(s,a,b)$ and $\overline{\iota}_h(s,a,b)$ be defined as in Lemma \ref{lem:uniform_coverage_bellman_error}. Then, with probability at least $1-\delta$, we have 
    \begin{align}
        0 & \leq \underline{\iota}_h(s,a,b) \leq 2\Gamma_h(s,a,b) ~,\label{eq:bellman_error_1}\\
        0 & \leq -\overline{\iota}_h(s,a,b) \leq 2\Gamma_h (s,a,b) \label{eq:bellman_error_2}~,
    \end{align}
    for all $(s,a,b) \in \mathcal{S}\times \mathcal{A}\times \mathcal{B}$.
\end{lem}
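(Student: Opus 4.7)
The plan is to derive both inequalities from a single pointwise bound on the linear approximation error $|\phi(s,a,b)^\top(\underline{\omega}_h - \underline{\omega}^*_h)|$ (and its $\overline{\omega}$ analogue), and then use the clipping operator $\Pi_{H-h+1}$ together with the fact that $(\mathbb{B}_h\underline{V}_{h+1})(s,a,b)$ and $(\mathbb{B}_h\overline{V}_{h+1})(s,a,b)$ both lie in $[0,H-h+1]$ to transfer the bound on the linear predictor to a bound on the Bellman residual.

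First I would work out the key pointwise inequality. By Equations \eqref{eq:weights_of_bellman_1}--\eqref{eq:weights_of_bellman_2}, $(\mathbb{B}_h\underline{V}_{h+1})(s,a,b)=\phi(s,a,b)^\top \underline{\omega}^*_h$ and similarly for $\overline{\omega}^*_h$. By Cauchy--Schwarz in the Mahalanobis norm,
\begin{equation*}
\bigl|\phi(s,a,b)^\top(\underline{\omega}_h - \underline{\omega}^*_h)\bigr| \;\le\; \|\phi(s,a,b)\|_{\Lambda_h^{-1}}\, \|\underline{\omega}_h - \underline{\omega}^*_h\|_{\Lambda_h}.
\end{equation*}
Since $\Lambda_h = K\Sigma_h + I$, we can split $\|\underline{\omega}_h - \underline{\omega}^*_h\|_{\Lambda_h}^2 = K\|\underline{\omega}_h - \underline{\omega}^*_h\|_{\Sigma_h}^2 + \|\underline{\omega}_h - \underline{\omega}^*_h\|_2^2$. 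Corollary \ref{cor:scram_guarantee} controls the first summand by $K\,\mathcal{E}(\epsilon,K,H,d)^2$ with probability $1-\delta/2$, while the triangle inequality combined with $\|\underline{\omega}^*_h\|_2\le H\sqrt{d}$ (Lemma E.1 of \citealp{zhong2022pessimistic}) and the analogous norm bound on $\underline{\omega}_h$ (inherited from SCRAM's output constraint) yields $\|\underline{\omega}_h - \underline{\omega}^*_h\|_2 \le 2H\sqrt{d}$. Taking a square root and using $\sqrt{a+b}\le \sqrt{a}+\sqrt{b}$ gives $\|\underline{\omega}_h - \underline{\omega}^*_h\|_{\Lambda_h} \le \sqrt{K}\,\mathcal{E}(\epsilon,K,H,d) + 2H\sqrt{d}$, which matches exactly the scalar in $\Gamma_h$, so
\begin{equation*}
\bigl|\phi(s,a,b)^\top(\underline{\omega}_h - \underline{\omega}^*_h)\bigr| \;\le\; \Gamma_h(s,a,b),
\end{equation*}
and analogously for $\overline{\omega}_h$.

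Next I would translate this into the Bellman-error bounds by case-splitting on the clipping operator. Write $x := \phi^\top \underline{\omega}_h - \Gamma_h$ and $y := \phi^\top \underline{\omega}^*_h$. The pointwise bound gives $y-2\Gamma_h \le x \le y$. Since $y\in[0,H-h+1]$, the monotone clip $\Pi_{H-h+1}$ preserves $\Pi(x)\le \Pi(y)=y$, hence $\underline{Q}_h(s,a,b)\le (\mathbb{B}_h\underline{V}_{h+1})(s,a,b)$, giving $\underline{\iota}_h\ge 0$. For the upper bound, in the three subcases $x\le 0$, $x\in(0,H-h+1)$, $x\ge H-h+1$, one checks directly that $y-\Pi(x)\le 2\Gamma_h$ using $y\le x+2\Gamma_h\le 2\Gamma_h$ in the first case, $y-x\le 2\Gamma_h$ in the middle, and $y-\Pi(x)\le 0$ in the last. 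The argument for $\overline{\iota}_h$ is symmetric: using $\phi^\top \overline{\omega}^*_h \le \phi^\top \overline{\omega}_h + \Gamma_h$ and monotonicity of $\Pi$, one gets $\overline{Q}_h\ge (\mathbb{B}_h\overline{V}_{h+1})$, while $\overline{Q}_h\le \phi^\top\overline{\omega}_h + \Gamma_h \le (\mathbb{B}_h\overline{V}_{h+1}) + 2\Gamma_h$.

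The only mildly delicate step is the bound $\|\underline{\omega}_h - \underline{\omega}^*_h\|_2 \le 2H\sqrt{d}$, which implicitly requires that the SCRAM output $\underline{\omega}_h$ is projected onto the ball of radius $H\sqrt{d}$ (or that such a bound follows from the estimator's construction); this is a standard preprocessing step but should be stated explicitly when invoking the theorem. Everything else is a clean combination of generalized Cauchy--Schwarz, the concentration in Corollary \ref{cor:scram_guarantee}, and the elementary properties of the truncation $\Pi_{H-h+1}$, with the failure probability $\delta$ absorbed entirely into the high-probability event of Corollary \ref{cor:scram_guarantee}.
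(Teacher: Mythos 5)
Your proof is correct and follows essentially the same route as the paper's: generalized Cauchy--Schwarz against $\|\cdot\|_{\Lambda_h}$, splitting $\Lambda_h = K\Sigma_h + I$ to combine the SCRAM guarantee with the $2H\sqrt{d}$ norm bound, and then a case analysis on the truncation. If anything you are slightly more careful than the paper, which only treats the lower clip $\max\{0,\cdot\}$ and swaps its two non-negativity cases, whereas your monotonicity argument for $\Pi_{H-h+1}$ handles the upper truncation as well.
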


\begin{proof}
We will prove the first inequality, and the second will follow by symmetry of argument.
Let $\underline{\omega}^*_h$ be defined as in Equation \ref{eq:weights_of_bellman_1}.
First, note that Corollary \ref{cor:scram_guarantee} implies
\begin{align*}
    \norm{\underline{\omega}_h - \underline{\omega}^*_h}^2_{\Lambda_h} & = (\underline{\omega}_h - \underline{\omega}^*_h)^\top \Lambda_h (\underline{\omega}_h - \underline{\omega}^*_h) \\
            & = (\underline{\omega}_h - \underline{\omega}^*_h)^\top (K\Sigma_h + I)(\underline{\omega}_h - \underline{\omega}^*_h) \\
            & = K(\underline{\omega}_h - \underline{\omega}^*_h)^\top \Sigma_h (\underline{\omega}_h - \underline{\omega}^*_h) + (\underline{\omega}_h - \underline{\omega}^*_h)^\top I (\underline{\omega}_h - \underline{\omega}^*_h) \\
            & \leq K\norm{\underline{\omega}_h - \underline{\omega}^*_h}^2_{\Sigma_h} + 4H^2d \\
            & \leq K\mathcal{E}(\epsilon,K,H,d)^2 + 4H^2d
\end{align*}
where the first inequality comes from the fact that $\norm{\underline{\omega}^*_h}_2 \leq H\sqrt{d}$, from Lemma E.1 of \citep{zhong2022pessimistic} and also $\norm{\underline{\omega}_h}_2 \leq H\sqrt{d}$, by design of SCRAM \citep{chen2022online}. Thus, taking the square roots of both sides, we obtain
\begin{align}
    \norm{\underline{\omega}_h - \underline{\omega}^*_h}_{\Lambda_h} & \leq \sqrt{K\mathcal{E}(\epsilon,K,H,d)^2 + 4H^2d} 
     \leq \sqrt{K}\mathcal{E}(\epsilon,K,H,d) + 2H\sqrt{d}~. \label{eq:bellman_error_001}
\end{align}

Then, with probability at least $1-\delta$, we have
\begin{align}
   \vert \phi^\top\underline{\omega}_h - (\mathbb{B}_h\underline{V}_{h+1})(s,a,b)\vert & = \vert \phi^\top (\underline{\omega}_h - \underline{\omega}^*_h)\vert \nonumber 
         \leq \norm{\underline{\omega}_h - \underline{\omega}^*_h}_{\Lambda_h} \norm{\phi}_{\Lambda^{-1}_h}  
         \leq \left(\sqrt{K} \mathcal{E}(\epsilon,K,H,d) + 2H\sqrt{d} \right) \norm{\phi}_{\Lambda^{-1}_h}~,
\end{align}
where the first inequality follows from extended Cauchy-Schwarz.

Note that each of the bounds on the right-hand side of the last inequality holds with probability at least $1-\delta/2$, thus, by union bound, the inequality above holds with probability at least $1-\delta$. Therefore, we obtain
\begin{align*}
    \phi^\top \underline{\omega}_h - \Gamma_h(s,a,b) \leq \mathbb{B}_h\underline{V}_{h+1} \leq H-h+1~,
\end{align*}
where in the last inequality we have used the fact that $|r_h|\leq 1$ and $|\underline{V}_{h+1}(s)|\leq H-h$. Furthermore,
\begin{align*}
    \underline{Q}_h(s,a,b) = \max \{ 0, \phi^\top\underline{\omega}_h - \Gamma_h(s,a,b)\} \geq \phi^\top \underline{\omega}_h - \Gamma_h(s,a,b) ~.
\end{align*}
Thus, we have
\begin{align*}
    \underline{\iota}_h(s,a,b) & = (\mathbb{B}_h\underline{V}_{h+1})(s,a,b) - \underline{Q}_h(s,a,b) \\
                    & \leq (\mathbb{B}_h\underline{V}_{h+1})(s,a,b) - \phi^\top \underline{\omega}_h + \Gamma_h(s,a,b)  \\
                    & \leq 2\Gamma_h(s,a,b)~.
\end{align*}
To prove the non-negativity of $\underline{\iota}_h(s,a,b)$, we consider two cases. First, if $\phi^\top\underline{\omega}_h - \Gamma_h(s,a,b) \geq 0$, then we have 
\begin{align*}
    \underline{\iota}_h(s,a,b) & = (\mathbb{B}_h\underline{V}_{h+1})(s,a,b) - \underline{Q}_h(s,a,b) \\
            & = (\mathbb{B}_h\underline{V}_{h+1})(s,a,b) -0  \geq 0~,
\end{align*}
since $\underline{V}_{h+1}(s) \in [0,H]$ and $r(s,a,b)\in[0,1]$. On the other hand, if $\phi^\top\underline{\omega}_h - \Gamma_h(s,a,b) \leq 0$, we have
\begin{align*}
     \underline{\iota}_h(s,a,b) & = (\mathbb{B}_h\underline{V}_{h+1})(s,a,b) - \underline{Q}_h(s,a,b) \\
            & = (\mathbb{B}_h\underline{V}_{h+1})(s,a,b) - \phi^\top\underline{\omega}_h + \Gamma_h(s,a,b)  \geq  0~.
\end{align*}
\end{proof}

Next, we give bounds on the best response values in terms of the estimated value functions.

\begin{lem}\label{lem:value_function_bounds}
    If Equations \eqref{eq:bellman_error_1} and \eqref{eq:bellman_error_2} in Lemma \ref{lem:bellman_error} hold, then, for any $s\in\mathcal{S}$, we have
    \begin{align*}
        \underline{V}_h(s) \leq V^{\hat{\pi},*}_h(s), \;\; \text{and} \;\; V^{*,\hat{\nu}}_h(s) \leq \overline{V}_h(s) ~.
    \end{align*}
\end{lem}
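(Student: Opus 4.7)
The plan is to prove both inequalities by backward induction on $h$, exploiting the two key properties we already have: (i) the one-sided Bellman error bounds from Lemma~\ref{lem:bellman_error}, namely $\underline{\iota}_h \ge 0$ and $-\overline{\iota}_h \ge 0$, which encode the pessimism/optimism of the estimated $Q$-functions; and (ii) the fact that $(\widehat{\pi}_h,\nu'_h)$ is a Nash equilibrium of $\underline{Q}_h$ and $(\pi'_h,\widehat{\nu}_h)$ is a Nash equilibrium of $\overline{Q}_h$. I will only write out the lower bound $\underline{V}_h(s) \le V^{\widehat{\pi},*}_h(s)$ in detail; the upper bound $V^{*,\widehat{\nu}}_h(s) \le \overline{V}_h(s)$ is symmetric.

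The base case $h=H+1$ is immediate since both sides are $0$. For the inductive step, instead of comparing $\underline{V}_h$ directly to $V^{\widehat{\pi},*}_h$, I would show the stronger claim that $\underline{V}_h(s) \le V^{\widehat{\pi},\nu}_h(s)$ for \emph{every} opponent strategy $\nu$, and then take the infimum over $\nu$ on the right-hand side. Fix $\nu$ and decompose
\begin{align*}
V^{\widehat{\pi},\nu}_h(s) - \underline{V}_h(s)
&= \bigl\langle Q^{\widehat{\pi},\nu}_h(s,\cdot,\cdot) - \underline{Q}_h(s,\cdot,\cdot),\; \widehat{\pi}_h(\cdot|s)\otimes \nu_h(\cdot|s)\bigr\rangle \\
&\quad + \bigl\langle \underline{Q}_h(s,\cdot,\cdot),\; \widehat{\pi}_h(\cdot|s)\otimes \nu_h(\cdot|s) - \widehat{\pi}_h(\cdot|s)\otimes \nu'_h(\cdot|s)\bigr\rangle.
\end{align*}
The second inner product is non-negative because $\nu'_h$ is, by construction, a best response (a minimizer) against $\widehat{\pi}_h$ under the payoff $\underline{Q}_h$.

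For the first inner product, I would use the Bellman equation $Q^{\widehat{\pi},\nu}_h = \mathbb{B}_h V^{\widehat{\pi},\nu}_{h+1}$ together with the definition $\underline{Q}_h = \mathbb{B}_h \underline{V}_{h+1} - \underline{\iota}_h$ (restricted to states where the $\max\{0,\cdot\}$ clipping is inactive; the clipping case only makes $\underline{Q}_h$ smaller, strengthening the bound) to obtain
\begin{align*}
Q^{\widehat{\pi},\nu}_h(s,a,b) - \underline{Q}_h(s,a,b)
= \mathbb{E}_{s'\sim p_h(\cdot|s,a,b)}\!\bigl[V^{\widehat{\pi},\nu}_{h+1}(s') - \underline{V}_{h+1}(s')\bigr] + \underline{\iota}_h(s,a,b).
\end{align*}
The term $\underline{\iota}_h(s,a,b)$ is non-negative by Lemma~\ref{lem:bellman_error}. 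For the expectation, I chain the weak-duality inequality $V^{\widehat{\pi},\nu}_{h+1} \ge V^{\widehat{\pi},*}_{h+1}$ with the induction hypothesis $V^{\widehat{\pi},*}_{h+1} \ge \underline{V}_{h+1}$ to conclude it is non-negative as well. Combining, the whole first inner product is $\ge 0$, so $V^{\widehat{\pi},\nu}_h(s) \ge \underline{V}_h(s)$; taking $\inf_\nu$ gives the desired bound.

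The main subtlety, rather than the main obstacle, is the careful choice to compare against a free $\nu$ and only pass to the best response at the end; this is what lets me invoke the Nash-equilibrium property of $\nu'_h$ (which is defined with respect to $\underline{Q}_h$, not $Q^{\widehat{\pi},*}_h$). A minor technicality to handle is the clipping $\Pi_{H-h+1}$ in the definition of $\underline{Q}_h$: in the case $\phi^\top\underline{\omega}_h - \Gamma_h < 0$ the equality $\underline{Q}_h = \mathbb{B}_h\underline{V}_{h+1} - \underline{\iota}_h$ must be replaced by the inequality $\underline{Q}_h \le \mathbb{B}_h\underline{V}_{h+1} - \underline{\iota}_h$, which only helps the direction we want. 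The upper bound $V^{*,\widehat{\nu}}_h \le \overline{V}_h$ follows by the same argument with the roles of the two players swapped and with $-\overline{\iota}_h\ge 0$ used in place of $\underline{\iota}_h\ge 0$.
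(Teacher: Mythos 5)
Your proof is correct and follows essentially the same route the paper takes: the paper disposes of this lemma by citing Lemma A.2 of \citet{zhong2022pessimistic}, whose argument --- and the paper's own proof of the analogous Lemma \ref{lem:uniform_coverage_value_function} --- is exactly your backward induction combining the one-sided Bellman-error bounds $\underline{\iota}_h \ge 0$, $-\overline{\iota}_h \ge 0$ with the Nash-equilibrium property of $\nu'_h$ (resp.\ $\pi'_h$) under $\underline{Q}_h$ (resp.\ $\overline{Q}_h$). The only cosmetic difference is that you bound $V^{\widehat{\pi},\nu}_h$ uniformly over $\nu$ and then infimize, whereas the paper compares directly against the best response; both are sound.
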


\begin{proof}
    This is an immediate implication of Lemma A.2 of \citep{zhong2022pessimistic}.
\end{proof}


The next result provides upper bounds on the expected values of the feature norms, for all the trajectories followed by the policy pairs in the unilateral set of the NE policy pair. This result is based on the Low Relative Uncertainty assumption. 

\begin{lem}\label{lem:low_relative_uncertainty}
    Assume that Assumption \ref{asm:low_relative_uncertainty} holds, that is, assume that there exists a constant $c_1>0$ such that, for all $h\in[H]$, we have
    \begin{align*}
        \Lambda_h \succeq I + c_1 K\max \left\{ \sup_\nu \; \mathbb{E}_{\pi^*,\nu}[\phi_h\phi_h^\top \vert s_1=x], \;  \sup_\pi \; \mathbb{E}_{\pi,\nu^*}[\phi_h\phi_h^\top \vert s_1=x] \right\}~.
    \end{align*}
    Then, for all $h\in[H]$, with probability at least $1-\delta$, we have
    \begin{align*}
        \mathbb{E}_{\pi^*,\nu'}\left[  \norm{\phi(s,a,b)}_{\Lambda_h^{-1}}\right] +
        \mathbb{E}_{\pi',\nu^*}\left[  \norm{\phi(s,a,b)}_{\Lambda_h^{-1}}\right] \leq 2\sqrt{\frac{d}{c_1K}}~.
    \end{align*}
\end{lem}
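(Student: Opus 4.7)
The plan is to reduce each expectation $\mathbb{E}_{\pi^*,\nu'}[\|\phi(s,a,b)\|_{\Lambda_h^{-1}}]$ to a trace computation, and then exploit the Loewner order inequality provided by the Low Relative Uncertainty condition to turn the resulting trace into something that telescopes against $\Lambda_h^{-1}$. The expectation in the bound uses a trajectory generated by $(\pi^*,\nu')$, so for notational clarity let $\phi_h := \phi(s_h,a_h,b_h)$ when the trajectory is implicitly understood.

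First, I would apply Jensen's inequality and the identity $\|\phi_h\|_{\Lambda_h^{-1}}^2 = \phi_h^\top \Lambda_h^{-1} \phi_h = \mathrm{tr}(\phi_h\phi_h^\top \Lambda_h^{-1})$ to obtain
\begin{align*}
\mathbb{E}_{\pi^*,\nu'}\!\left[\|\phi_h\|_{\Lambda_h^{-1}}\right]
&\leq \sqrt{\mathbb{E}_{\pi^*,\nu'}\!\left[\|\phi_h\|_{\Lambda_h^{-1}}^2\right]}
= \sqrt{\mathrm{tr}\!\left(\mathbb{E}_{\pi^*,\nu'}[\phi_h\phi_h^\top\mid s_1 = s]\,\Lambda_h^{-1}\right)},
\end{align*}
where the last equality uses linearity of expectation and trace together with the fact that $\Lambda_h$ is data-dependent but not trajectory-dependent (and conditionally deterministic given the offline dataset).

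The key step is then to invoke Assumption \ref{asm:low_relative_uncertainty}, which states $\Lambda_h \succeq c_1 K\, \mathbb{E}_{\pi^*,\nu'}[\phi_h\phi_h^\top\mid s_1=s]$ (since $\nu'$ is one of the strategies over which the supremum is taken). Hence $\mathbb{E}_{\pi^*,\nu'}[\phi_h\phi_h^\top\mid s_1=s] \preceq (c_1K)^{-1}\Lambda_h$. I would combine this with the monotonicity fact that if $0 \preceq A \preceq B$ and $C \succeq 0$, then $\mathrm{tr}(AC) \leq \mathrm{tr}(BC)$ (proved by writing $\mathrm{tr}(AC) = \mathrm{tr}(C^{1/2} A C^{1/2})$ and comparing quadratic forms). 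Applying this with $C = \Lambda_h^{-1} \succeq 0$ gives
\begin{align*}
\mathrm{tr}\!\left(\mathbb{E}_{\pi^*,\nu'}[\phi_h\phi_h^\top \mid s_1=s]\,\Lambda_h^{-1}\right)
\;\leq\; \mathrm{tr}\!\left(\tfrac{1}{c_1K}\,\Lambda_h\,\Lambda_h^{-1}\right)
\;=\; \frac{d}{c_1K}.
\end{align*}
Taking square roots yields $\mathbb{E}_{\pi^*,\nu'}[\|\phi_h\|_{\Lambda_h^{-1}}] \leq \sqrt{d/(c_1K)}$. A symmetric argument, using the second term of the max in Assumption \ref{asm:low_relative_uncertainty}, bounds $\mathbb{E}_{\pi',\nu^*}[\|\phi_h\|_{\Lambda_h^{-1}}]$ by the same quantity. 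Summing the two gives the claimed $2\sqrt{d/(c_1K)}$ bound.

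There is no real obstacle here: the statement is essentially a deterministic consequence of the Loewner domination asserted by the LRU assumption, and the $(1-\delta)$ qualifier only needs to be propagated from the high-probability events of Corollary \ref{cor:scram_guarantee} invoked earlier in the analysis, i.e., we simply work on the same good event on which the LRU condition is certified to hold. The only minor care point is to make sure that the trace-monotonicity step is applied to the version of LRU with the supremum taken over precisely the strategy $\nu'$ (respectively $\pi'$) appearing in the expectation, which is immediate since the assumption is uniform over all such strategies.
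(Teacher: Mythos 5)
Your proof is correct and follows essentially the same route as the paper's: Jensen's inequality, conversion to a trace, and the Loewner domination supplied by the LRU assumption, applied separately to each of the two expectations to get $\sqrt{d/(c_1K)}$ per term. The only cosmetic difference is that you apply the domination in the form $\mathbb{E}[\phi_h\phi_h^\top]\preceq \tfrac{1}{c_1K}\Lambda_h$ and use trace monotonicity against $\Lambda_h^{-1}$ (so the trace collapses to $\mathrm{tr}(I)/(c_1K)$ immediately), whereas the paper first replaces $\Lambda_h^{-1}$ by $\bigl(c_1K\overline{\Sigma}_h(x)+I\bigr)^{-1}$ via anti-monotonicity of the matrix inverse and then evaluates $\mathrm{Tr}\bigl(\bigl(c_1K\overline{\Sigma}_h(x)+I\bigr)^{-1}\overline{\Sigma}_h(x)\bigr)\leq d/(c_1K)$ by a telescoping identity; both yield the same bound.
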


\begin{proof}
    We derive an upper bound for the first expectation. The argument is identical to the second one. Let
    \begin{align}\label{eq:sigma_x_definition}
        \overline{\Sigma}_h(x) = \mathbb{E}_{\pi^*,\nu'} \left[ \phi(s_h,a_h,b_h)^\top\phi(s_h,a_h,b_h) \vert s_1=x \right]~.
    \end{align}
    First, note that, given two matrices $A, B \in \mathbb{R}^{n\times n}$ such that $A \preceq B$, then, for any $x\in\mathbb{R}^n_+$, we have $\norm{x}_A \leq \norm{x}_B$. This follows immediately by observing that, since $A-B \succeq 0$, which means that $A-B$ is positive semi-definite, we obtain $x^T(A-B)x \geq 0$, which implies the desired result. 
    Now, for any $h\in[H]$, we have
    \begin{align*}
        \mathbb{E}_{\pi^*,\nu'}\left[  \norm{\phi(s_h,a_h,b_h)}_{\Lambda_h^{-1}}\right] & = \mathbb{E}_{\pi^*,\nu'}\left[ \sqrt{\phi^\top_h \Lambda_h^{-1}\phi_h}  \right]\\
            & \leq \mathbb{E}_{\pi^*,\nu'}\left[ \sqrt{\phi^\top_h \left(c_1K\overline{\Sigma}_h(x) + I\right)^{-1}\phi_h} \big\vert s_1 = x \right]  \\
            & \leq \sqrt{ \mathbb{E}_{\pi^*,\nu'}\left[ Tr\left( \left(c_1K\overline{\Sigma}_h(x) + I\right)^{-1}\phi_h\phi^\top_h\right) \bigg\vert s_1 = x\right]} \\
            & = \sqrt{  Tr\left( \left(c_1K\overline{\Sigma}_h(x) + I\right)^{-1}\mathbb{E}_{\pi^*,\nu'}\left[\phi_h\phi^\top_h\big\vert s_1 = x\right]\right) } \\
            & =  \sqrt{ Tr\left( \left(c_1K\overline{\Sigma}_h(x) + I\right)^{-1}\overline{\Sigma}_h(x)\right)}\\
            & =  \sqrt{\frac{1}{c_1K} Tr\left( \left(c_1K\overline{\Sigma}_h(x) + I\right)^{-1} \left(\left(c_1K\overline{\Sigma}_h(x) + I\right) - I\right)\right)} \\
            & = \sqrt{\frac{1}{c_1K}} \sqrt{Tr\left( I - \left( c_1K\overline{\Sigma}_h(x) + I \right)^{-1} \right)} \\
            & \leq \sqrt{\frac{d}{c_1K}}~,
    \end{align*}
    where the first inequality follows by assumption; the second inequality follows from Jensen; the second equality follows from the fact that $Tr(\mathbb{E}X)=\mathbb{E}[Tr(X)]$, for a given random matrix $X$; for the fourth equality, we have used the observation that $Tr(I-A) \leq \dim (A)$, for a positive definite matrix $A$. 
\end{proof}

Now we are ready to prove Theorem \ref{thm:linear_case}. We restate it below for convenience.
\begin{statement}
    Suppose that Assumption \ref{asm:low_relative_uncertainty} holds with given constant $c_1$. Let $\delta >0$, $\epsilon < 1/2$ and let $D$ be the $\epsilon$-corrupted version of the dataset $\widetilde{D}$ comprised of $K$ trajectories of length $H$, where $K \geq \log (\min (K,d))/\epsilon$ and $(\widetilde{s}^\tau_h,\widetilde{a}^\tau_h,\widetilde{b}^\tau_h) = (s^\tau_h,a^\tau_h,b^\tau_h)$, for all $\tau \in [K]$, $h\in[H]$. Then, with probability at least $1-\delta$, SCRAM-PMVI outputs $(\widehat{\pi},\widehat{\nu})$ that satisfy, for every $s\in\mathcal{S}$:
    \begin{align*}
        \optgap (\widehat{\pi},\widehat{\nu},s) & \leq \widetilde{O}\left( \frac{1}{\sqrt{c_1}}(\gamma+H)H\sqrt{d}\epsilon + \frac{H^2d}{\sqrt{c_1K}}\right)~.
    \end{align*}
\end{statement}
We decompose the suboptimality gap as in Equation \eqref{eq:sub_opt_gap_decomposition}:
\begin{align*}
    \optgap (\widehat{\pi},\widehat{\nu},s) = V^{*,\widehat{\nu}}_1(s) - V^{\widehat{\pi},*}_1(s) = \left( V^{*,\widehat{\nu}}_1(s) - V^*_1(s)\right) + \left( V^*_1(s) - V^{\widehat{\pi},*}_1(s)\right) ~.
\end{align*}
First, we bound the left difference.
\begin{align*}
    V^{*,\widehat{\nu}}_1(s) - V^*_1(s) \leq \overline{V}_1(s) - V^*_1(s)  \leq \overline{V}_1(s) - V^{\pi',\nu^*}_1(s)~,
\end{align*}
for some $\pi'$, where the first inequality follows from Lemma \ref{lem:value_function_bounds} and the second inequality follows from the fact that $(\pi^*,\nu^*)$ is an NE policy. Similarly, we have that 
$V^*_1(s) - V^{\widehat{\pi},*}_1(s) \leq V^{\pi^*,\nu'}_1(s) - \underline{V}_1(s)$, for some $\nu'$. 
Then, similar to the proof of Theorem \ref{thm:linear_uniform_coverage}, we have
\begin{align*}
    V^{*,\widehat{\nu}}_1(s) - V^*_1(s) & \leq \overline{V}_1(s) - V^{\pi',\widehat{\nu}}_1(s) \\
    & \leq \sum^H_{h=1} \mathbb{E}_{\pi',\nu^*}\left[ -\overline{\iota}_h(s_h,a_h,b_h) \right] \\
        & \leq 2\sum^H_{h=1} \mathbb{E}_{\pi',\nu^*}\left[ \Gamma_h(s_h,a_h,b_h) \vert s_1=s \right] \\
        & \leq 2\left(\sqrt{K} \mathcal{E}(\epsilon,K,H,d) + 2H\sqrt{d} \right) \sum^H_{h=1} \mathbb{E}_{\pi',\nu^*}\left[  \norm{\phi(s,a,b)}_{\Lambda_h^{-1}}\right] 
\end{align*}
Similarly, we have
\begin{align*}
    V^*_1(s) - V^{\widehat{\pi},*}_1(s) & \leq 2\left(\sqrt{K} \mathcal{E}(\epsilon,K,H,d) + 2H\sqrt{d}  \right) \sum^H_{h=1} \mathbb{E}_{\pi^*,\nu'}\left[  \norm{\phi(s,a,b)}_{\Lambda_h^{-1}}\right] ~.
\end{align*}
Finally, applying Lemma \ref{lem:low_relative_uncertainty}, we obtain
\begin{align*}
    \optgap (\widehat{\pi},\widehat{\nu},s) & \leq 2\left( \sqrt{K} \mathcal{E}(\epsilon,K,H,d) + 2H\sqrt{d}  \right) \sum^H_{h=1} \left(  \mathbb{E}_{\pi',\nu^*}\left[  \norm{\phi(s,a,b)}_{\Lambda_h^{-1}}\right] + \mathbb{E}_{\pi^*,\nu'}\left[  \norm{\phi(s,a,b)}_{\Lambda_h^{-1}}\right] \right) \\
        & \leq 4\left( \sqrt{K} \mathcal{E}(\epsilon,K,H,d) + 2H\sqrt{d}  \right)  \sum^H_{h=1} \sqrt{\frac{d}{c_1K}} \\
        & = O \left( \frac{\gamma}{c_1}H\sqrt{d}\epsilon + H^2K^{-1/2}d\right) \\
        & \leq O\left( \frac{1}{c_1}(\gamma+H)H\sqrt{d}\epsilon + \frac{H^2d}{c_1\sqrt{K}}\right)~,
\end{align*}
where the last inequality follows from Lemma \ref{lem:variance_bound}.

\section{Proofs of Section \ref{sec:no_coverage_bounds}}
In this section, we derive the proofs of the results in Section \ref{sec:no_coverage_bounds}.

\subsection{Proof of Theorem \ref{thm:general_result}}

For this section, we prove a similar result which gives us general bounds when no coverage is guaranteed on the corrupted dataset, but the underlying clean data has LRU coverage.

First, since we assume corrupted covariates, we rewrite the overall covariance as
\begin{align}\label{eq:general_covariance}
    \Lambda_h = \widetilde{\Lambda}_h + \widehat{\Lambda}_h = \sum_{\text{clean} \; \tau} \widetilde{\phi}^\tau_h \left( \widetilde{\phi}^\tau_h\right)^\top + \sum_{\text{corrupted} \; \tau} \widehat{\phi}^\tau_h \left( \widehat{\phi}^\tau_h\right)^\top + I = (1-\epsilon)\left( K \widetilde{\Sigma}_h + I\right) +  \epsilon \left( K\widehat{\Sigma}_h + I\right)
\end{align}

We start by deriving upper bounds on the Bellman error. 

\begin{lem}\label{lem:bellman_error_no_coverage}
    With probability at least $1-\delta$, we have
    \begin{align*}
        0 & \leq \underline{\iota}_h(s,a,b) \leq 2 \widehat{\Gamma}_h(s,a,b)~, \\
        0 & \leq - \overline{\iota}_h(s,a,b) \leq 2 \widehat{\Gamma}_h(s,a,b)~,
    \end{align*}
    where 
    \begin{align*}
        \widehat{\Gamma}_h(s,a,b) = \left(\sqrt{(1-\epsilon)K}\mathcal{E} + (\sqrt{\epsilon K} + 2) H\sqrt{d} \right) \norm{\phi(s,a,b)}_{\Lambda^{-1}_h}
    \end{align*}
\end{lem}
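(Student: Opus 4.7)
The plan is to mirror the structure of Lemma \ref{lem:bellman_error}, but with a refined bound on $\|\underline{\omega}_h - \underline{\omega}_h^*\|_{\Lambda_h}$ that exploits the decomposition $\Lambda_h = \widetilde{\Lambda}_h + \widehat{\Lambda}_h + I$ from Equation \eqref{eq:general_covariance}. The decomposition is essential because SCRAM's guarantee in Corollary \ref{cor:scram_guarantee} only controls error in the clean sample covariance, while the algorithm now runs on data with both corrupted observations and corrupted covariates.

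First, I would split the squared Mahalanobis norm as
\begin{align*}
\|\underline{\omega}_h - \underline{\omega}_h^*\|^2_{\Lambda_h}
 = \|\underline{\omega}_h - \underline{\omega}_h^*\|^2_{\widetilde{\Lambda}_h}
 + \|\underline{\omega}_h - \underline{\omega}_h^*\|^2_{\widehat{\Lambda}_h}
 + \|\underline{\omega}_h - \underline{\omega}_h^*\|^2_{2}~.
\end{align*}
The first (clean) summand is controlled by invoking SCRAM's guarantee on the $(1-\epsilon)K$ clean samples, yielding a bound of order $(1-\epsilon)K\,\mathcal{E}^2$. For the second (corrupted) summand I would use the operator-norm bound $\|\widehat{\Lambda}_h\|_{\mathrm{op}} \leq \epsilon K$ (since $\|\phi\|_2\leq 1$ and there are at most $\epsilon K$ corrupted tuples), combined with $\|\underline{\omega}_h - \underline{\omega}_h^*\|_2 \leq 2H\sqrt{d}$ from Lemma E.1 of \citep{zhong2022pessimistic}, giving $O(\epsilon K H^2 d)$. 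The regularization term is $O(H^2 d)$. Taking square roots and using $\sqrt{a+b+c}\leq \sqrt{a}+\sqrt{b}+\sqrt{c}$, absorbing multiplicative constants, yields
\begin{align*}
\|\underline{\omega}_h - \underline{\omega}_h^*\|_{\Lambda_h}
 \leq \sqrt{(1-\epsilon)K}\,\mathcal{E} + (\sqrt{\epsilon K}+2)H\sqrt{d}~.
\end{align*}

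From here the argument is entirely parallel to Lemma \ref{lem:bellman_error}. Extended Cauchy--Schwarz gives $|\phi^\top(\underline{\omega}_h - \underline{\omega}_h^*)| \leq \|\phi\|_{\Lambda_h^{-1}} \|\underline{\omega}_h - \underline{\omega}_h^*\|_{\Lambda_h} \leq \widehat{\Gamma}_h(s,a,b)$, so that $\phi^\top\underline{\omega}_h - \widehat{\Gamma}_h \leq (\mathbb{B}_h\underline{V}_{h+1})(s,a,b) \leq H-h+1$. The pessimism construction $\underline{Q}_h = \Pi_{H-h+1}(\phi^\top\underline{\omega}_h - \widehat{\Gamma}_h)$ then yields $\underline{Q}_h \geq \phi^\top\underline{\omega}_h - \widehat{\Gamma}_h$, which upper-bounds $\underline{\iota}_h$ by $2\widehat{\Gamma}_h$; the two-case argument on whether $\phi^\top\underline{\omega}_h - \widehat{\Gamma}_h$ is non-negative delivers non-negativity of $\underline{\iota}_h$. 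The same analysis applied to $\overline{\omega}_h$ and $\overline{Q}_h$ produces the second pair of inequalities, and a union bound over the two SCRAM guarantees in Corollary \ref{cor:scram_guarantee} preserves overall probability $1-\delta$.

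The main obstacle will be the first step: properly justifying that SCRAM, when run on the pool with corrupted covariates, still produces an estimator whose error against $\underline{\omega}_h^*$ in the \emph{clean} covariance $\widetilde{\Sigma}_h$ is bounded by $\mathcal{E}$. Since the learner cannot identify which covariates are corrupted, one must appeal to the filtering/robust-estimation structure underlying SCRAM, treating the corrupted covariates together with their observations as part of the $\epsilon$-fraction contamination; this is where the effective sample size $(1-\epsilon)K$ enters the first term of $\widehat{\Gamma}_h$. Everything downstream (Cauchy--Schwarz and the two-sided clipping argument) is mechanical once that estimation bound is in place.
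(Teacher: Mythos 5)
Your proposal is correct and follows essentially the same route as the paper's proof: the same decomposition of $\norm{\underline{\omega}_h-\underline{\omega}^*_h}^2_{\Lambda_h}$ into a clean part bounded via the SCRAM guarantee of Corollary \ref{cor:scram_guarantee}, a corrupted part bounded via $\norm{\widehat{\Lambda}_h}_2\leq \epsilon K$ together with $\norm{\underline{\omega}_h-\underline{\omega}^*_h}_2\leq 2H\sqrt{d}$, and an $O(H^2d)$ regularization term, followed by the same Cauchy--Schwarz and two-case clipping argument as in Lemma \ref{lem:bellman_error}. The subtlety you flag about transferring SCRAM's guarantee to the clean covariance when covariates are corrupted is real, but the paper resolves it exactly as you propose, by absorbing the corrupted covariates into the $\epsilon$-fraction contamination.
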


\begin{proof}
    Let us consider the first part. The second part will follow by a similar argument. Similar to the argument of Lemma \ref{lem:bellman_error}, we have
    \begin{align*}
        \norm{\underline{\omega}_h - \underline{\omega}^*_h}^2_{\Lambda_h} & = (\underline{\omega}_h - \underline{\omega}^*_h)^\top \Lambda_h (\underline{\omega}_h - \underline{\omega}^*_h) \\
            & = (\underline{\omega}_h - \underline{\omega}^*_h)^\top \left( (1-\epsilon)\left( K \widetilde{\Sigma}_h + I\right) +  \epsilon \left( K\widehat{\Sigma}_h + I\right) \right) (\underline{\omega}_h - \underline{\omega}^*_h) \\
        & = (1-\epsilon)K \norm{\underline{\omega}_h - \underline{\omega}^*_h}^2_{\widetilde{\Sigma}_h} + \epsilon K \norm{\underline{\omega}_h - \underline{\omega}^*_h}^2_{\widehat{\Sigma}_h} + H^2 d \\
            & \leq (1-\epsilon)K \mathcal{E}^2 + \epsilon K \norm{\widehat{\Sigma}_h}_2 H^2d + 4H^2 d \\
        & \leq (1-\epsilon)K \mathcal{E}^2 + \epsilon K H^2d + 4H^2d~.
    \end{align*}

    Thus, we obtain
    \begin{align*}
        \norm{\underline{\omega}_h - \underline{\omega}^*_h}_{\Lambda_h} \leq \sqrt{(1-\epsilon)K}\mathcal{E} + (\sqrt{\epsilon K} + 2) H\sqrt{d}~.
    \end{align*}

Then, similar to Lemma \ref{lem:bellman_error}, with probability at least $1-\delta$, we have
\begin{align*}
   \vert \phi^\top\underline{\omega}_h - (\mathbb{B}_h\underline{V}_{h+1})(s,a,b)\vert & = \vert \phi^\top (\underline{\omega}_h - \underline{\omega}^*_h)\vert \nonumber 
         \leq \norm{\underline{\omega}_h - \underline{\omega}^*_h}_{\Lambda_h} \norm{\phi}_{\Lambda^{-1}_h}  
         \leq \left(\sqrt{(1-\epsilon)K}\mathcal{E} + (\sqrt{\epsilon K} + 2) H\sqrt{d} \right) \norm{\phi}_{\Lambda^{-1}_h}~.
\end{align*}
The rest of the proof follows similar arguments to Lemma \ref{lem:bellman_error}.
\end{proof}
Next, we derive upper bounds on the additional error term coming from the damage on coverage. 

\begin{lem}\label{lem:no_coverage_result}
    Assume that the condition of Lemma \ref{lem:low_relative_uncertainty} holds only for the underlying clean data, but not for the corrupted dataset. Then, we have
    \begin{align*}
        \mathbb{E}_{\pi^*,\nu'}\left[  \norm{\phi(s,a,b)}_{\Lambda_h^{-1}}\right] +
        \mathbb{E}_{\pi',\nu^*}\left[  \norm{\phi(s,a,b)}_{\Lambda_h^{-1}}\right] \leq  2\left( \sqrt{\frac{d}{c_1K}}+ \sqrt{\frac{d\epsilon}{c_1K(1-\epsilon)}} \right)~.
    \end{align*}
\end{lem}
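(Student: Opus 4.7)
The plan is to adapt the argument of Lemma \ref{lem:low_relative_uncertainty} by bounding $\Lambda_h^{-1}$ from above using the clean portion of the observed covariance matrix and then invoking LRU on that clean piece. Starting from the decomposition in Equation \eqref{eq:general_covariance}, I would first drop the PSD contribution of the corrupted block to obtain
\begin{align*}
\Lambda_h \;\succeq\; \sum_{\text{clean}\;\tau} \widetilde{\phi}^\tau_h\,(\widetilde{\phi}^\tau_h)^\top + I ,
\end{align*}
so that $\Lambda_h^{-1} \preceq \bigl(\sum_{\text{clean}\;\tau} \widetilde{\phi}^\tau_h\,(\widetilde{\phi}^\tau_h)^\top + I\bigr)^{-1}$. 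The next step is to apply the LRU condition at the level of this clean submatrix, i.e., to write
\begin{align*}
\sum_{\text{clean}\;\tau} \widetilde{\phi}^\tau_h\,(\widetilde{\phi}^\tau_h)^\top + I \;\succeq\; I + c_1(1-\epsilon)K \sup_\nu \mathbb{E}_{\pi^*,\nu}\!\left[\phi_h \phi_h^\top \mid s_1 = x\right] ,
\end{align*}
which is simply LRU applied to the $(1-\epsilon)K$ uncorrupted samples that actually survive in $\Lambda_h$.

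Once this lower bound is in place, the remainder of the analysis is essentially a transcription of Lemma \ref{lem:low_relative_uncertainty}. Passing through Jensen's inequality to push the square root inside the expectation, permuting the trace via $\mathbb{E}[\mathrm{Tr}(X)] = \mathrm{Tr}(\mathbb{E}[X])$, using the algebraic identity $(c_1(1-\epsilon)K\overline{\Sigma}_h(x) + I)^{-1}\overline{\Sigma}_h(x) = \tfrac{1}{c_1(1-\epsilon)K}\bigl(I - (c_1(1-\epsilon)K\overline{\Sigma}_h(x) + I)^{-1}\bigr)$, and concluding with $\mathrm{Tr}(I - (M+I)^{-1}) \leq d$ for any PSD $M\in\mathbb{R}^{d\times d}$, I get
\begin{align*}
\mathbb{E}_{\pi^*,\nu'}\!\left[\norm{\phi(s_h,a_h,b_h)}_{\Lambda_h^{-1}}\right] \;\leq\; \sqrt{\frac{d}{c_1(1-\epsilon)K}} ,
\end{align*}
and by a symmetric argument the same bound holds for $\mathbb{E}_{\pi',\nu^*}[\,\cdot\,]$. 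To match the form stated in the lemma, I would then use $\sqrt{x+y}\leq \sqrt{x}+\sqrt{y}$ together with the identity $\tfrac{1}{c_1(1-\epsilon)K} = \tfrac{1}{c_1 K} + \tfrac{\epsilon}{c_1 K(1-\epsilon)}$, and add the two expectations to produce the factor of $2$.

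The step that needs the most care is justifying the $c_1(1-\epsilon)K$ LRU-style lower bound on the clean submatrix. The hypothesis in Theorem \ref{thm:general_result} is literally written for the full pre-corruption matrix $\sum_{\tau=1}^K \widetilde{\phi}^\tau_h(\widetilde{\phi}^\tau_h)^\top + I$, whereas only the $(1-\epsilon)K$ surviving rows show up in $\Lambda_h$; since the adversary may inspect the data before choosing which $\epsilon K$ indices to replace, the clean submatrix need not inherit LRU in a worst-case reading. The cleanest resolution is to read the hypothesis as applying to the clean part that actually contributes to $\Lambda_h$, consistent with the notation $\widetilde{\Lambda}_h$ already introduced in Equation \eqref{eq:general_covariance}; alternatively one could invoke a matrix-concentration argument using independence of the uncorrupted samples. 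With either reading, the remaining computation is routine.
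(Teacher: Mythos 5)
You have correctly isolated the crux, but the step you flag as needing care is in fact a genuine gap, and neither of your proposed resolutions closes it. The hypothesis of Theorem \ref{thm:general_result} gives LRU for the \emph{full} pre-corruption covariance $S_h = \sum_{\tau=1}^K \widetilde{\phi}^\tau_h(\widetilde{\phi}^\tau_h)^\top + I \succeq I + c_1 K\,\overline{\Sigma}_h(x)$. After the adversary adaptively removes $\epsilon K$ samples, the best worst-case statement about the surviving clean block is $\sum_{\text{clean}\;\tau}\widetilde{\phi}^\tau_h(\widetilde{\phi}^\tau_h)^\top \succeq c_1 K\,\overline{\Sigma}_h(x) - \epsilon K I$, since each removed rank-one term has operator norm at most $1$. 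This is \emph{not} $\succeq c_1(1-\epsilon)K\,\overline{\Sigma}_h(x)$: that would require $c_1\epsilon K\,\overline{\Sigma}_h(x)\succeq \epsilon K I$, i.e.\ $\overline{\Sigma}_h(x)\succeq I/c_1$, which fails whenever $\overline{\Sigma}_h(x)$ has a small eigenvalue (and is essentially never true since $\mathrm{Tr}(\overline{\Sigma}_h(x))\le 1$ while $c_1\le 1$). Concretely, the adversary can delete exactly the few samples that cover a thinly covered direction of $\overline{\Sigma}_h(x)$, destroying relative coverage in that direction. Your fix (a) silently replaces the paper's hypothesis with a different one (LRU on $\widetilde{\Lambda}_h$), so it proves a different lemma; your fix (b) does not apply because the surviving samples are not a random subset --- the adversary inspects the data before choosing which indices to corrupt, so concentration over the uncorrupted subsample is unavailable.

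The paper's proof is structured precisely to avoid this trap. Rather than discarding the corrupted block and re-deriving LRU on the clean remainder, it writes $\Lambda_h = S_h - \Delta_h$ with $\Delta_h = \sum_{\text{corrupted}\;\tau}\bigl(\widetilde{\phi}^\tau_h(\widetilde{\phi}^\tau_h)^\top - \phi^\tau_h(\phi^\tau_h)^\top\bigr)$, applies the inversion identity $\Lambda_h^{-1} = S_h^{-1} + S_h^{-1}M_hS_h^{-1}$, and splits the expectation into a main term governed by $S_h^{-1}$ --- where Lemma \ref{lem:low_relative_uncertainty} applies verbatim because $S_h$ is the full clean covariance on which LRU is actually assumed, yielding $\sqrt{d/(c_1K)}$ --- plus a correction term $\mathbb{E}\bigl[\sqrt{|\phi_h^\top S_h^{-1}M_hS_h^{-1}\phi_h|}\bigr]$ controlled by Lemma \ref{lem:auxiliary_matrix_result} via the eigenvalue bound $\norm{E_h}_2\le\epsilon K$, yielding $\sqrt{d\epsilon/(c_1K(1-\epsilon))}$. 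The additive form of the bound in the lemma statement is a direct artifact of this decomposition; your multiplicative bound $\sqrt{d/(c_1(1-\epsilon)K)}$ happens to reassemble into the same expression algebraically, but it rests on a coverage claim that the hypotheses do not support. To repair your argument you would need to either adopt the paper's perturbation-of-inverse route or add an explicit assumption that LRU survives adversarial deletion of $\epsilon K$ samples.
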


\begin{proof}
    First, using the definition of $\Lambda_h$, we can equivalently write $\Lambda_h = S_h - \Delta_h$, where 
    \begin{align}\label{eq:S_h_def}
        S_h = \sum^K_{\tau =1} \widetilde{\phi}^\tau_h\left(\widetilde{\phi}^\tau_h\right)^\top + I
    \end{align}
    is the clean sample covariance matrix coming from $\widetilde{D}$, and $\Delta_h$ is defined as
    \begin{align}\label{eq:Delta_h_def}
        \Delta_h = \sum_{\textrm{corrupted } \tau} \tilde{\phi}_h^\tau \tilde{\phi}_h^{\tau^\top} - {\phi}_h^\tau {\phi}_h^{\tau^\top}~.
    \end{align}
    Since $\Delta_h$ is a symmetric matrix, we can write $\Delta_h = U_h E_h U_h^\top$, where $E_h$ is a diagonal matrix composed of the eigenvalues of $\Delta_h$. Note that the absolute value of each entry of $E_h$ is at most $\epsilon K$. 
    We will use the following formula for the inverse of a sum of two matrices \citep{henderson1981deriving}:
    $$
    (A + UBV)^{-1} = A^{-1} - A^{-1}U( I + BVA^{-1}U)^{-1} BVA^{-1}~,
    $$
    provided that $ I + BVA^{-1}U$ is non-singular. In contrast to the Woodbury matrix identity, the matrix $B$ here can be non-invertible. Using $A = S_h$ and $UBV = - U_h E_h U_h^\top$ we get the following identity.
    \begin{align}\label{eq:inversion_identity}
        \Lambda_h^{-1} = \left( S_h - \Delta_h\right)^{-1} = \left( S_h - U_hE_hU_h^\top\right)^{-1} =S_h^{-1} + S_h^{-1} \underbrace{U_h \left(I - E_h U_h^\top S_h^{-1} U_h \right)^{-1} E_h U_h^\top}_{:= M_h} S_h^{-1}~.
    \end{align}
This identity gives us the following bound on the bonus term with respect to $\pi^*$ and $\nu'$.
\begin{align}
        \mathbb{E}_{\pi^*,\nu'}\left[  \norm{\phi(s_h,a_h,b_h)}_{\Lambda_h^{-1}}\right] & = \mathbb{E}_{\pi^*,\nu'}\left[ \sqrt{\phi^\top_h \Lambda_h^{-1}\phi_h}  \right] \le \mathbb{E}_{\pi^*,\nu'}\left[ \sqrt{\phi^\top_h S_h^{-1}\phi_h}  \right] + \mathbb{E}_{\pi^*,\nu'}\left[ \sqrt{\left|\phi^\top_h S_h^{-1} M_hS_h^{-1}\phi_h\right|}  \right]
        \label{eq:decomposition-bonus-term-02}
\end{align}
The first term can be bounded by $\sqrt{\frac{d}{c_1 K}}$ by Lemma~\ref{lem:low_relative_uncertainty}, since it depends on the clean covariance matrix. For the second term, we first upper bound the maximum eigenvalue of the inverse of the middle expression in $M_h$. For that, we make use of the following Lemma.

\begin{lem}\label{lem:auxiliary_matrix_result}
    Given $\Sigma_h(x)$ as defined in Equation \eqref{eq:sigma_x_definition}, $S_h$ as defined in Equation \eqref{eq:S_h_def}, $\Delta_h=U_hS_hU_h^\top$ as defined in Equation \eqref{eq:Delta_h_def}, and $M_h$ as in Equation \eqref{eq:inversion_identity}, we have 
    \begin{align*}
        Tr\left( S^{-1}_hM_hS^{-1}_h\Sigma_h(x)\right) \leq \frac{d\epsilon}{c_1K(1-\epsilon)}~.
    \end{align*}
\end{lem}

\begin{proof}
 First, note that, since $U_h$ is an orthonormal matrix, we can write 
    \begin{align*}
        E_h U_h^\top S_h^{-1} U_h = E_h U_h^{-1} S_h^{-1} U_h = E_h \left(S_h U_h \right)^{-1} U_h~.
    \end{align*}
    Next, observe that
    \begin{align*}
        \norm{S_h U_h}_2 &= \max_{x\neq 0} \frac{\norm{U_h x + \sum^K_{\tau=1} \widetilde{\phi}_h^\tau (\widetilde{\phi}_h^\tau)^\top U_h x}_2}{\norm{x}_2} \\
            & \le \max_{x\neq 0} \frac{\norm{U_h x}_2 +  \norm{\sum^K_{\tau=1} \widetilde{\phi}_h^\tau (\widetilde{\phi}_h^\tau)^\top}_2\norm{U_h x}_2}{\norm{x}_2}\\
            & \leq \max_{x\neq 0}\frac{\norm{x}_2 + K \norm{x}_2}{\norm{x}_2}\\
        & = 1 + K~,
    \end{align*}
    where the first inequality follows by the triangle inequality and matrix norm properties, and for the second inequality we have used the fact that the maximum eigenvalue of the clean covariance matrix does not exceed $1+K$. This implies that $$\left(S_h U_h \right)^{-1} \succcurlyeq \frac{1}{1+K}I_d$$
    and, consequently, 
    \begin{align}\label{eq:M_h_lower_bound}
        I - E_h U_h^\top S_h^{-1} U_h \succcurlyeq \left(1 -\frac{\epsilon K}{1+K} \right) I_d~,
    \end{align}
    since the minimum eigenvalue of $-E_h$ is at least $-\epsilon K$. Thus, we obtain 
    \begin{align*}
        \left(I - E_h U_h^\top S_h^{-1} U_h\right)^{-1} \preceq \frac{1}{1 -\frac{\epsilon K}{1+K}}  I_d~ \preceq \frac{1}{1-\epsilon} I_d~.
    \end{align*}
    Using the argument above, we have
    \begin{align*}
        Tr\left( S^{-1}_h M_hS^{-1}_h\Sigma_h(x)\right) & = Tr \left(  S^{-1}_hU_h \left( I - E_hU_h^\top S^{-1}_h U_h\right)^{-1}E_hU_h^\top S^{-1}_hU_hU_h^\top \Sigma_h(x)\right)  \\
    & = Tr \left( S^{-1}_h U_h \left( \underbrace{\left( I - E_hU_h^\top S^{-1}_h U_h\right)^{-1}}_{A} - I \right) U_h^\top \Sigma_h(x) \right) \\
        & \leq \frac{1}{c_1K} Tr\left( U_h \left( A-I\right) U_h^\top \left( c_1K\Sigma_h(x) + I - I\right)\underbrace{\left(c_1K\Sigma_h(x) +I\right)^{-1}}_{B} \right) \\
    & = \frac{1}{c_1K} Tr\left( U_h \left( A-I\right) U_h^\top \left( I - B\right) \right) \\
        & \leq \frac{d}{c_1K}\norm{ U_h \left( A-I\right) U_h^\top \left( I - B\right)}_2 \\
    & \leq \frac{d}{c_1K}\norm{A-I}_2\norm{I-B}_2\\
        & \leq \frac{d}{c_1K} \left( \frac{1}{1-\epsilon} - 1\right)\\
    & = \frac{d\epsilon}{c_1K(1-\epsilon)}~,
    \end{align*}
    where the first inequality uses Assumption \ref{asm:low_relative_uncertainty} and the fact that $S^{-1}_h$ is symmetric; the second inequality uses the fact that $Tr(A)\leq d\norm{A}_2$; the third inequality uses the property $\norm{AB}_2\leq \norm{A}_2\norm{B}_2$, and the fact that, for orthonormal matrices, we have $\norm{U_h}_2\leq 1$.  Thus, by putting everything together, we obtain the desired bounds.
\end{proof}
Now, note that
\begin{align*}
    \expp_{\pi^*,\nu'}\left[ \sqrt{\phi^\top S^{-1}_hM_hS^{-1}_h\phi_h} \right]& \leq \sqrt{Tr\left( S^{-1}_hM_hS^{-1}_h\Sigma_h(x)\right)} \leq \sqrt{\frac{d\epsilon}{c_1K(1-\epsilon)}}~,
\end{align*}
by Lemma \ref{lem:auxiliary_matrix_result}. Thus, by putting everything together, we obtain
\begin{align*} \expp_{\pi^*,\nu'}\left[\norm{\phi(s_h,a_h,b_h)}_{\Lambda^{-1}_h}\right] \leq \sqrt{\frac{d}{c_1K}} +  \sqrt{\frac{d\epsilon}{c_1K(1-\epsilon)}}~.
\end{align*}
\end{proof}

Now we are ready to prove the main result of this section. We restate the result for convenience.
\begin{statement}
    Suppose that the condition of Assumption \ref{asm:low_relative_uncertainty} is satisfied only on the clean dataset $\widetilde{D}$, for a given constant $c_1$, and that an $\epsilon$-fraction of tuples in $D$ is arbitrarily corrupted. Furthermore, let $\delta > 0$ and $\epsilon \in (0,1/2)$, and $K \geq \log (\min \{ K,d\}) /\epsilon$. Then, with probability at least $1-\delta$, $S-PMVI$ returns $(\widehat{\pi},\widehat{\nu})$ that satisfy, for any $s\in\mathcal{S}$
    \begin{align*}
        \optgap(\widehat{\pi},\widehat{\nu},s) \leq O \left( \underbrace{\frac{1}{\sqrt{c_1K}}H^2d}_{\substack{\text{clean signal} \\ \text{\& clean covariates} \\ \text{\& LRU coverage}}} + \underbrace{\frac{1}{\sqrt{c_1}}H^2\sqrt{d}\epsilon}_{\substack{\text{corrupted signal}\\ \text{\& clean covariates} \\ \text{\& LRU coverage}}} + \underbrace{\frac{1}{\sqrt{c_1}}H^2d\sqrt{\epsilon}}_{\substack{\text{corrupted signal} \\ \text{\& corrupted covariates} \\ \text{\& LRU coverage}}} + \underbrace{\frac{1}{\sqrt{c_1(1-\epsilon)}}H^2d^{3/2}\epsilon}_{\substack{\text{corrupted signal} \\ \text{\& corrupted covariates}\\ \text{\& corrupted coverage}}} \right)
    \end{align*}
\end{statement}

\begin{proof}
    As in the proof of Theorem~\ref{thm:linear_case}, we have, for some policies $\pi'$ and $\nu'$:
    \begin{align}
    \optgap (\widehat{\pi},\widehat{\nu},s) & = V^{*,\widehat{\nu}}_1(s) - V^{\widehat{\pi},*}_1(s) = \left( V^{*,\widehat{\nu}}_1(s) - V^*_1(s)\right) + \left( V^*_1(s) - V^{\widehat{\pi},*}_1(s)\right) \nonumber \\
        & \leq \left( \overline{V}_1(s) - V^{\pi',\nu^*}_1(s)\right) + \left( V^{\pi^*,\nu'}_1(s) - \underline{V}_1(s)\right) \label{eq:no_coverage_proof_01}\\
    & \leq \sum^H_{h=1} \bigg( \mathbb{E}_{\pi',\nu^*}\left[ -\overline{\iota}_h(s_h,a_h,b_h)\right] + \mathbb{E}_{\pi^*,\nu'}\left[ \underline{\iota}_h(s_h,a_h,b_h) \right]  \bigg) \label{eq:no_coverage_proof_02}\\
        & \leq 2 \sum^H_{h=1} \bigg( \mathbb{E}_{\pi',\nu^*}\left[ \widehat{\Gamma}_h(s,a,b)|s_1=s\right] + \mathbb{E}_{\pi^*,\nu'}\left[ \widehat{\Gamma}_h(s,a,b)|s_1=s \right] \label{eq:no_coverage_proof_03} \bigg)\\
    & \leq 2 \left(\sqrt{(1-\epsilon)K}\mathcal{E} + (\sqrt{\epsilon K} + 2) H\sqrt{d} \right)\sum^H_{h=1} \left( \mathbb{E}_{\pi',\nu^*}\left[ \norm{\phi(s,a,b)}_{\Lambda^{-1}_h}\right] + \mathbb{E}_{\pi^*,\nu'}\left[ \norm{\phi(s,a,b)}_{\Lambda^{-1}_h} \right] \right) \nonumber \\
        & \leq 4H \left(\sqrt{(1-\epsilon)K}\mathcal{E} + \sqrt{\epsilon Kd}H + 2H\sqrt{d} \right)  \left( \sqrt{\frac{d}{c_1K}}+ \sqrt{\frac{d\epsilon}{c_1K(1-\epsilon)}} \right) \label{eq:no_coverage_proof_05}\\
    & \leq O \left( \underbrace{\frac{1}{\sqrt{c_1K}}H^2d}_{\substack{\text{clean signal} \\ \text{\& clean covariates} \\ \text{\& LRU coverage}}} + \underbrace{\frac{1}{\sqrt{c_1}}H^2\sqrt{d}\epsilon}_{\substack{\text{corrupted signal}\\ \text{\& clean covariates} \\ \text{\& LRU coverage}}} + \underbrace{\frac{1}{\sqrt{c_1}}H^2d\sqrt{\epsilon}}_{\substack{\text{corrupted signal} \\ \text{\& corrupted covariates} \\ \text{\& LRU coverage}}} + \underbrace{\frac{1}{\sqrt{c_1(1-\epsilon)}}H^2d\epsilon}_{\substack{\text{corrupted signal} \\ \text{\& corrupted covariates}\\ \text{\& corrupted coverage}}} \right) \nonumber \\
        & \leq O \left(  H^2d^{3/2}K^{-1/2} + H^2d^{3/2}\sqrt{\epsilon} \right) \label{eq:no_coverage_proof_06} ~,
\end{align}
where \eqref{eq:no_coverage_proof_01} follows from Lemma \ref{lem:uniform_coverage_value_function}; \eqref{eq:no_coverage_proof_01} follows from Equation \eqref{eq:sub_opt_gap_alternative}; \eqref{eq:no_coverage_proof_03} follows from Lemma \ref{lem:bellman_error}; \eqref{eq:no_coverage_proof_05} follows from Lemma \ref{lem:no_coverage_result} and \eqref{eq:no_coverage_proof_06} follows from the fact that, for $\epsilon \in (0,1/2)$, we have that $\epsilon / \sqrt{1-\epsilon} \leq \sqrt{\epsilon}$.
\end{proof}

\subsection{Proof of Theorem \ref{thm:second_general_result}}

In this section, we improve the order of $\epsilon$ in Theorem \ref{thm:general_result} under an additional assumption on the feature space (Assumption \ref{asm:features_lower_bound}.
We start by proving upper bounds on the Bellman error in terms of the bonus term. 

\begin{lem}\label{lem:improved_bellman_error}
    With probability at least $1-\delta$, we have
    \begin{align*}
        0 & \leq \underline{\iota}_h(s,a,b) \leq 2 \widehat{\Gamma}_h(s,a,b)~, \\
        0 & \leq -\overline{\iota}_h(s,a,b) \leq 2 \widehat{\Gamma}_h(s,a,b)~,
    \end{align*}
    for all $(s,a,b)\in \mathcal{S}\times \mathcal{A}\times \mathcal{B}$ and $h\in [H]$, where 
    \begin{align*}
        \widehat{\Gamma}_h(s,a,b) = \left( 2(1-\epsilon)K\mathcal{E} + \epsilon KH\sqrt{d} + H\sqrt{Kd}\right) \norm{\phi(s,a,b)^\top\Lambda^{-1}_h}_2~.
    \end{align*}
\end{lem}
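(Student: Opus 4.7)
The plan is to follow the two-part structure of Lemma~\ref{lem:bellman_error_no_coverage} (non-negativity plus an upper bound on the Bellman error), but to replace the Mahalanobis Cauchy--Schwarz used there with a standard Euclidean one; this single substitution is what produces a bonus involving $\|\phi(s,a,b)^\top \Lambda_h^{-1}\|_2$ rather than the Mahalanobis norm $\|\phi(s,a,b)\|_{\Lambda_h^{-1}}$, and is the foundation for the $O(\epsilon)$ (as opposed to $O(\sqrt{\epsilon})$) rate in Theorem~\ref{thm:second_general_result}. The non-negativity claims $\underline{\iota}_h(s,a,b)\ge 0$ and $-\overline{\iota}_h(s,a,b)\ge 0$ reduce, as in Lemma~\ref{lem:bellman_error}, to a case analysis on the sign of $\phi^\top \underline{\omega}_h - \widehat{\Gamma}_h$: in the non-negative case they follow from the upper bound below, and in the non-positive case the truncation at $0$ in the definition of $\underline{Q}_h$ together with $(\mathbb{B}_h\underline{V}_{h+1})(s,a,b)\in [0,H-h+1]$ gives non-negativity immediately.

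The main work is the upper bound, which by Assumption~\ref{asmp_linearity} reduces to showing $|\phi^\top(\underline{\omega}_h - \underline{\omega}^*_h)|\le \widehat{\Gamma}_h(s,a,b)$. The key step is to apply standard Cauchy--Schwarz in the Euclidean norm,
\[
    |\phi^\top(\underline{\omega}_h - \underline{\omega}^*_h)|
    = \bigl|(\phi^\top \Lambda_h^{-1})\cdot \Lambda_h(\underline{\omega}_h - \underline{\omega}^*_h)\bigr|
    \le \|\phi^\top \Lambda_h^{-1}\|_2 \cdot \|\Lambda_h(\underline{\omega}_h - \underline{\omega}^*_h)\|_2,
\]
so it suffices to show that $\|\Lambda_h(\underline{\omega}_h - \underline{\omega}^*_h)\|_2$ is bounded by the scalar prefactor $2(1-\epsilon)K\mathcal{E} + \epsilon K H\sqrt{d} + H\sqrt{Kd}$ of the bonus. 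I would then decompose $\Lambda_h = C_h + D_h + I$, where $C_h = \sum_{\tau\in\mathrm{clean}} \phi_h^\tau(\phi_h^\tau)^\top$ and $D_h = \sum_{\tau\in\mathrm{corrupt}} \phi_h^\tau(\phi_h^\tau)^\top$, and apply the triangle inequality to treat each piece independently.

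For the clean block, combining the triangle inequality with Cauchy--Schwarz over the $(1-\epsilon)K$ clean indices gives $\|C_h(\underline{\omega}_h - \underline{\omega}^*_h)\|_2 \le (1-\epsilon)K\,\|\underline{\omega}_h - \underline{\omega}^*_h\|_{\widetilde{\Sigma}_h}$, and then invoking the SCRAM guarantee from Corollary~\ref{cor:scram_guarantee} on the clean sample covariance (absorbing a constant of $2$) controls this by $2(1-\epsilon)K\mathcal{E}$. For the corrupted block, the crude estimates $\|D_h\|_2 \le \epsilon K$ and $\|\underline{\omega}_h - \underline{\omega}^*_h\|_2 \le H\sqrt{d}$, where the latter uses Lemma~E.1 of \cite{zhong2022pessimistic} together with the SCRAM parameter constraint, give the $\epsilon K H\sqrt{d}$ contribution. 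The regularization $I$ plus a residual noise contribution then yield the $H\sqrt{Kd}$ term, which I expect to come from a self-normalized concentration bound on the aggregated sub-Gaussian noise $\sum_{\tau\in\mathrm{clean}}\phi_h^\tau \xi_h^\tau$, followed by a union bound over $h\in[H]$ (via the data-splitting step in Algorithm~\ref{alg:generic_PMVI}) to reach overall probability $1-\delta$.

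The main obstacle is precisely the last, $H\sqrt{Kd}$, summand: because the identity in the decomposition of $\Lambda_h$ alone only contributes $O(H\sqrt{d})$ to $\|\Lambda_h(\underline{\omega}_h - \underline{\omega}^*_h)\|_2$, the extra factor of $\sqrt{K}$ cannot come from the regularization by itself. It must instead arise from a sharper coupling between the sub-Gaussian noise and the full covariance $\Lambda_h$ -- either a self-normalized Hoeffding-type bound over the clean samples or a direct control of $\Lambda_h(\underline{\omega}_h - \underline{\omega}^*_h)$ via the (perturbed) normal equations satisfied by the SCRAM output. Making the constants and probability tracking line up so that a single $1-\delta$ event suffices for both $\underline{\omega}_h$ and $\overline{\omega}_h$ and for all $h\in[H]$ simultaneously will be the bulk of the technical bookkeeping.
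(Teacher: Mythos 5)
Your skeleton matches the paper's: non-negativity of $\underline{\iota}_h$ via the same sign case analysis on $\phi^\top\underline{\omega}_h-\widehat{\Gamma}_h$, and the upper bound via the insertion $|\phi^\top(\underline{\omega}_h-\underline{\omega}^*_h)|=|\phi^\top\Lambda_h^{-1}\Lambda_h(\underline{\omega}_h-\underline{\omega}^*_h)|\leq \norm{\phi^\top\Lambda_h^{-1}}_2\,\norm{\Lambda_h(\underline{\omega}_h-\underline{\omega}^*_h)}_2$, which is exactly the paper's key step. Where you diverge is in how $\norm{\Lambda_h(\underline{\omega}_h-\underline{\omega}^*_h)}_2$ is controlled: you split $\Lambda_h$ additively and apply the triangle inequality to the clean block, the corrupted block, and the regularizer, whereas the paper expands the square $(\widetilde{\Lambda}_h+\widehat{\Lambda}_h)^2$ into three terms $P_1,P_2,P_3$ and bounds each quadratic form. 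Both routes use the same two ingredients per block (the SCRAM guarantee $\norm{\underline{\omega}_h-\underline{\omega}^*_h}_{\widetilde{\Sigma}_h}\leq\mathcal{E}$ for the clean part, and $\norm{\widehat{\Lambda}_h}_2\leq\epsilon K$ together with $\norm{\underline{\omega}_h-\underline{\omega}^*_h}_2\leq 2H\sqrt{d}$ for the corrupted part), so your decomposition is a legitimate, arguably cleaner, alternative.

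The genuine problem is the "main obstacle" you flag and leave unresolved. There is no self-normalized concentration bound, no Hoeffding argument on $\sum_\tau\phi_h^\tau\xi_h^\tau$, and no appeal to SCRAM's normal equations anywhere in this step: the $H\sqrt{Kd}$ summand in the paper is pure algebra. It is the square root of the $KH^2d$ terms that appear in $P_1+P_2+P_3$ when the operator norm of a covariance block (of order $K$) multiplies the crude Euclidean bound $\norm{\underline{\omega}_h-\underline{\omega}^*_h}_2^2=O(H^2d)$ contributed by the identity inside $\widetilde{\Lambda}_h=(1-\epsilon)(K\widetilde{\Sigma}_h+I)$ and $\widehat{\Lambda}_h$; concretely, $\sqrt{K\cdot H^2d}=H\sqrt{Kd}$. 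Under your additive decomposition this term never arises at all --- the regularizer contributes only $\norm{\underline{\omega}_h-\underline{\omega}^*_h}_2\leq 2H\sqrt{d}$ --- so the bound you obtain, roughly $(1-\epsilon)K\mathcal{E}+2\epsilon KH\sqrt{d}+2H\sqrt{d}$, is (up to a factor of $2$ on the middle term, which the paper's own accounting also drops) at least as strong as the stated prefactor, and a stronger bound trivially implies the lemma. So the fix is to delete the speculation about noise coupling and simply observe that your route needs no $H\sqrt{Kd}$ term; alternatively, reproduce the paper's quadratic expansion if you want the constants to match the statement literally. The probability bookkeeping you describe (union bound over $h$ and over the two estimators, enabled by the data splitting) is the same as the paper's and is fine.
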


\begin{proof}
    First, we express the Bellman error of any given $(s,a,b)$ tuple at time-step $h$ as
\begin{align*}
    \left| \overline{Q}_h(s,a,b) - (\mathbb{B}_h\overline{V}_{h+1})(s,a,b)\right| = \left| \phi^\top_h \left( \underline{\omega}_h - \underline{\omega}^*_h\right) \right| =  \left| \phi^\top_h \Lambda^{-1}_h \Lambda_h \left( \underline{\omega}_h - \underline{\omega}^*_h\right) \right| \leq \norm{\phi^\top_h \Lambda^{-1}_h}_2 \norm{\Lambda_h \left( \underline{\omega}_h - \underline{\omega}^*_h\right)}_2~.
\end{align*}
The error coming from the second player is similarly bounded. Note that we have
\begin{align*}
    & \norm{\Lambda_h \left( \underline{\omega}_h - \underline{\omega}^*_h\right)}^2_2  = \left|  \left( \underline{\omega}_h - \underline{\omega}^*_h\right)^\top \left( \widetilde{\Lambda}_h + \widehat{\Lambda}_h\right)^2  \left( \underline{\omega}_h - \underline{\omega}^*_h\right) \right| \\
        & = \left|  \left( \underline{\omega}_h - \underline{\omega}^*_h\right)^\top \left( \widetilde{\Lambda}^2_h + \widetilde{\Lambda}_h\widehat{\Lambda}_h +  \widehat{\Lambda}_h\widetilde{\Lambda}_h + \widehat{\Lambda}_h^2\right)  \left( \underline{\omega}_h - \underline{\omega}^*_h\right) \right| \\
    & \leq \underbrace{\left|  \left( \underline{\omega}_h - \underline{\omega}^*_h\right)^\top \widetilde{\Lambda}^2_h \left( \underline{\omega}_h - \underline{\omega}^*_h\right)\right|}_{P_1} + \underbrace{\left| \left( \underline{\omega}_h - \underline{\omega}^*_h\right)^\top \left( \widetilde{\Lambda}_h\widehat{\Lambda}_h +  \widehat{\Lambda}_h\widetilde{\Lambda}_h\right)  \left( \underline{\omega}_h - \underline{\omega}^*_h\right)\right|}_{P_2} + \underbrace{\left| \left( \underline{\omega}_h - \underline{\omega}^*_h\right)^\top \widehat{\Lambda}^2_h\left( \underline{\omega}_h - \underline{\omega}^*_h\right)\right|}_{P_3}
\end{align*}
We derive upper bounds for the three terms above separately. First, we have
\begin{align*}
    P_1 & = (1-\epsilon)\left| \left( \underline{\omega}_h - \underline{\omega}^*_h\right)^\top \left( K\widetilde{\Sigma}_h + I\right)\widetilde{\Lambda}_h \left( \underline{\omega}_h - \underline{\omega}^*_h\right) \right| \\
        & \leq (1-\epsilon) \norm{\widetilde{\Lambda}_h}_2 \left( K\norm{\underline{\omega}_h - \underline{\omega}^*_h}^2_{\widetilde{\Sigma}_h} + H^2d\right) \\
    & \leq (1-\epsilon)^2K^2 \mathcal{E}^2 + (1-\epsilon)^2K H^2d~,
\end{align*}
where the first equality follows from Equation \eqref{eq:general_covariance}, the first inequality follows from the fact that $\langle x, Ax\rangle \leq \norm{A}_2 \norm{x}^2_2$ and the last inequality follows from the fact that the maximum eigenvalue of $\widetilde{\Lambda}_h$ is at most $(1-\epsilon)K$, the fact that $\norm{\underline{\omega}^*_h}_2\leq H\sqrt{d}$ (Lemma E.1 of \cite{zhong2022pessimistic}) and Corollary \ref{cor:scram_guarantee}, where we deliberately omit the dependencies of $\mathcal{E}$ for brevity. 

Next, we have
\begin{align*}
    P_2 & \leq \left|  \left( \underline{\omega}_h - \underline{\omega}^*_h\right)^\top \widetilde{\Lambda}_h\widehat{\Lambda}_h  \left( \underline{\omega}_h - \underline{\omega}^*_h\right)\right| + \left|  \left( \underline{\omega}_h - \underline{\omega}^*_h\right)^\top \widehat{\Lambda}_h \widetilde{\Lambda}_h  \left( \underline{\omega}_h - \underline{\omega}^*_h\right)^\top\right|\\
        & \leq 2\norm{\widehat{\Lambda}_h}_2  \left( \underline{\omega}_h - \underline{\omega}^*_h\right)^\top\widetilde{\Lambda}_h  \left( \underline{\omega}_h - \underline{\omega}^*_h\right) \\
    & \leq 2\epsilon K \left(K \norm{\underline{\omega}_h - \underline{\omega}^*_h}^2_{\widetilde{\Sigma}_h} + H^2d\right) \\
        & = 2\epsilon(1-\epsilon) K^2 \mathcal{E}^2 + 2\epsilon(1-\epsilon)KH^2d~,
\end{align*}
where we have used the fact that the maximum eigenvalue of $\widehat{\Lambda}_h$ is at most $\epsilon K$ and applied similar arguments as for $P_1$. For the last term, we have
\begin{align*}
    P_3 & = \epsilon \left|  \left( \underline{\omega}_h - \underline{\omega}^*_h\right)^\top \left( K\widehat{\Sigma}_h +I\right)\widehat{\Lambda}_h  \left( \underline{\omega}_h - \underline{\omega}^*_h\right)\right| \\
        & \leq \epsilon\norm{\widehat{\Lambda}_h}_2  \left( K\norm{\underline{\omega}_h - \underline{\omega}^*_h}^2_2\norm{\widehat{\Sigma}_h}_2 + H^2d\right)\\
    & \leq \epsilon^2K \left( KH^2d + H^2d\right)\\
        & = \epsilon^2K^2H^2d + \epsilon^2KH^2d~,
\end{align*}
by applying the same arguments as above. Putting everything together, we obtain
\begin{align*}
    \norm{\Lambda_h \left( \underline{\omega}_h - \underline{\omega}^*_h\right)}^2_2 & \leq P_1 + P_2 + P_3\\
        & = (1-\epsilon)^2K^2 \mathcal{E}^2 + (1-\epsilon)^2K H^2d + 2\epsilon(1-\epsilon) K^2 \mathcal{E}^2 + 2\epsilon(1-\epsilon)KH^2d + \epsilon^2K^2H^2d + \epsilon^2KH^2d \\
    & =  (1-\epsilon)^2K^2 \mathcal{E}^2 + 2\epsilon(1-\epsilon) K^2 \mathcal{E}^2 + \epsilon^2K^2H^2d + \left( (1-\epsilon) H\sqrt{Kd} + \epsilon H\sqrt{Kd}\right)^2 \\
        & \leq 3(1-\epsilon)^2K^2 \mathcal{E}^2 +  \epsilon^2K^2H^2d + KH^2d~,
\end{align*}
where the last inequality follows from the fact that $\epsilon < 1/2$. Taking the square root of both sides and using the triangle inequality, we finally obtain
\begin{align*}
    \norm{\Lambda_h \left( \underline{\omega}_h - \underline{\omega}^*_h\right)}_2 \leq 2(1-\epsilon)K\mathcal{E} + \epsilon KH\sqrt{d} + H\sqrt{Kd}~.
\end{align*}
\end{proof}
Next, we derive an upper bound on the expected value of the feature norms with respect to policy pairs lying in the LRU set. 

\begin{lem}\label{lem:general_trace_bound}
    Assume that Assumption \ref{asm:low_relative_uncertainty} holds on the clean dataset $\widetilde{D}$ only, with given constant $c_1>0$. Then, for all $h\in[H]$, with probability at least $1-\delta$, we have
    \begin{align*}
        \expp_{\pi^*,\nu'}\left[ \norm{\phi(s_h,a_h,b_h)^\top\Lambda^{-1}_h}_2 \right] + \expp_{\pi',\nu^*}\left[ \norm{\phi(s_h,a_h,b_h)^\top\Lambda^{-1}_h}_2 \right]  \leq 2\left( \frac{d}{c_1c_2K} + \frac{d\epsilon}{c_1c_2K(1-\epsilon)}\right)~,
    \end{align*}
    where $c_2 \leq \min_{\phi \in \Phi} \norm{\phi}_2$. 
\end{lem}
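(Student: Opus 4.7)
The plan is to adapt the Woodbury-style decomposition from Lemma~\ref{lem:no_coverage_result} to the Euclidean norm $\norm{\phi(s_h,a_h,b_h)^\top \Lambda_h^{-1}}_2$, exploiting Assumption~\ref{asm:features_lower_bound} to sharpen the resulting bound. First, I would decompose $\Lambda_h = S_h - \Delta_h$ exactly as before, with $S_h$ the regularized clean sample covariance (satisfying LRU by hypothesis) and $\Delta_h = U_h E_h U_h^\top$ the symmetric corruption perturbation with $\norm{E_h}_2 \leq \epsilon K$, then apply the identity $\Lambda_h^{-1} = S_h^{-1} + S_h^{-1} M_h S_h^{-1}$ from \eqref{eq:inversion_identity} together with the triangle inequality for $\norm{\cdot}_2$ to split the target expectation into a clean piece and a corruption piece, in parallel with \eqref{eq:decomposition-bonus-term-02}.

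The key step is then to upper bound each piece by a trace of the form $Tr(\Lambda_h^{-1}\overline{\Sigma}_h(x))$. Here Assumption~\ref{asm:features_lower_bound} plays a critical role: since $\norm{\phi}_2 \geq c_2$, one can insert the factor $\norm{\phi}_2/c_2 \geq 1$ into the expectation to get $\expp_{\pi^*,\nu'}[\norm{\phi^\top S_h^{-1}}_2] \leq (1/c_2)\,\expp_{\pi^*,\nu'}[\norm{\phi^\top S_h^{-1}}_2 \cdot \norm{\phi}_2]$, and the right-hand side can then be controlled by $(1/c_2)\,Tr(S_h^{-1}\overline{\Sigma}_h(x))$ after invoking Cauchy--Schwarz together with the matrix inequality $S_h^{-2} \preceq S_h^{-1}$ (valid because $S_h \succeq I$). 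The clean LRU condition, applied as in Lemma~\ref{lem:low_relative_uncertainty}, gives $Tr(S_h^{-1}\overline{\Sigma}_h(x)) \leq d/(c_1 K)$, contributing the leading $d/(c_1 c_2 K)$ term. For the corruption piece, the analogous argument with $S_h^{-1}$ replaced by $S_h^{-1} M_h S_h^{-1}$, combined with the spectral bound $(I - E_h U_h^\top S_h^{-1} U_h)^{-1} \preceq (1-\epsilon)^{-1}I_d$ from \eqref{eq:M_h_lower_bound} in Lemma~\ref{lem:auxiliary_matrix_result}, produces $d\epsilon/(c_1 c_2 K(1-\epsilon))$. A symmetric argument under $(\pi',\nu^*)$ yields an identical bound, and adding the two gives the overall factor of $2$.

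The main obstacle is the passage from the Euclidean norm $\norm{\phi^\top \Lambda_h^{-1}}_2 = \sqrt{\phi^\top \Lambda_h^{-2}\phi}$ to a trace of the form $Tr(\Lambda_h^{-1}\overline{\Sigma}_h(x))$: a naive Jensen application would leave a residual square root and yield only $\sqrt{d/(c_1K)}/c_2$ rather than the required $d/(c_1c_2K)$. Assumption~\ref{asm:features_lower_bound} is precisely what removes this square root, by allowing one to weight each Euclidean-norm factor by $\norm{\phi}_2/c_2$ and thereby effectively reduce the quantity being bounded to the Mahalanobis-style trace that already appeared in Lemma~\ref{lem:no_coverage_result}. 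Making this reduction fully rigorous under only the expected-value LRU assumption is the delicate part of the argument.
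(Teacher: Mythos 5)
Your overall route coincides with the paper's: decompose $\Lambda_h^{-1}=S_h^{-1}+S_h^{-1}M_hS_h^{-1}$ via the inversion identity \eqref{eq:inversion_identity}, insert the factor $\norm{\phi}_2/c_2\ge 1$ using Assumption \ref{asm:features_lower_bound}, reduce to the two traces $Tr(S_h^{-1}\overline{\Sigma}_h(x))$ and $Tr(S_h^{-1}M_hS_h^{-1}\overline{\Sigma}_h(x))$, and finish with Lemmas \ref{lem:low_relative_uncertainty} and \ref{lem:auxiliary_matrix_result}. However, the one step you yourself flag as delicate --- removing the square root --- is exactly where your argument does not go through as stated. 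Write $\norm{\phi^\top S_h^{-1}}_2\,\norm{\phi}_2=\sqrt{\phi^\top S_h^{-2}\phi}\cdot\norm{\phi}_2$. Cauchy--Schwarz gives $\norm{S_h^{-1}\phi}_2\norm{\phi}_2\ge\phi^\top S_h^{-1}\phi$, i.e.\ it bounds this product from \emph{below} by $Tr(S_h^{-1}\phi\phi^\top)$, so it cannot be used to upper bound it by that trace. Combining instead with $S_h^{-2}\preceq S_h^{-1}$ and $\norm{\phi}_2\le 1$ only yields $\sqrt{\phi^\top S_h^{-1}\phi}$, and Jensen then gives $\expp\bigl[\sqrt{\phi^\top S_h^{-1}\phi}\bigr]\le\sqrt{Tr(S_h^{-1}\overline{\Sigma}_h(x))}\le\sqrt{d/(c_1K)}$. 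Your chain therefore produces a bound of order $\tfrac{1}{c_2}\sqrt{d/(c_1K)}$, which for large $K$ is far larger than the claimed $\tfrac{d}{c_1c_2K}$; the entire purpose of this lemma (and of Theorem \ref{thm:second_general_result}) is to upgrade the $K^{-1/2}$ rate of Lemma \ref{lem:low_relative_uncertainty} to a $K^{-1}$ rate, and your mechanism does not deliver that.

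For reference, the paper handles this step by rewriting $\tfrac{1}{c_2}\sqrt{\phi^\top\Lambda_h^{-2}\phi\cdot\phi^\top\phi}$ as $\tfrac{1}{c_2}\sqrt{Tr\bigl(S_h^{-2}(I+M_hS_h^{-1})^2(\phi\phi^\top)^2\bigr)}$ and then invoking an inequality of the form $\sqrt{Tr(A^2B^2)}\le Tr(AB)$, after which the expectation passes inside a square-root-free trace and the two trace lemmas apply. That inequality is the load-bearing ingredient your proposal is missing, and it is genuinely delicate: for rank-one $B=\phi\phi^\top$ it reads $\norm{A\phi}_2\norm{\phi}_2\le\phi^\top A\phi$, which is the \emph{reverse} of Cauchy--Schwarz, so citing Cauchy--Schwarz, as you do, points in the wrong direction. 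Without either reproducing this trace inequality or supplying a substitute argument, the reduction to $Tr(\Lambda_h^{-1}\overline{\Sigma}_h(x))$ is not established and the stated $d/(c_1c_2K)$ bound does not follow from your outline.
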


\begin{proof} 
        Using the short-hand notation $\phi_h$ instead of $\phi(s_h,a_h,b_h)$, we have
    \begin{align*}
        \expp_{\pi^*,\nu'}\left[ \norm{\phi_h^\top\Lambda^{-1}_h}_2 \right]  & \leq \frac{1}{c_2} \expp_{\pi^*,\nu'}\left[ \sqrt{\phi^\top_h \left( S^{-1}_h + S^{-1}_h M_h S^{-1}_h \right)^2 \phi_h \phi^\top_h\phi_h}\right]\\
            & = \frac{1}{c_2} \expp_{\pi^*,\nu'}\left[\sqrt{  Tr \left( S^{-2}_h\left( I + M_hS^{-1}_h\right)^2 \left( \phi_h\phi^\top_h\right)^2\right)} \right] \\
        & \leq \frac{1}{c_2} \expp_{\pi^*,\nu'}\left[ Tr \left( S^{-1}_h\left( I + M_hS^{-1}_h\right) \phi_h\phi^\top_h\right) \right]\\
        & = \frac{1}{c_2} Tr\left(S^{-1}_h\left( I + M_hS^{-1}_h\right) \left( \Sigma_h(x) \right)\right) \\
            & \leq \frac{1}{c_2} \left(Tr\left( S^{-1}_h \Sigma_h(x)\right)+ Tr\left( S^{-1}_hM_hS^{-1}_h  \Sigma_h(x)\right)\right)~.
    \end{align*}
    The first factor is bounded as
    \begin{align*}
        Tr\left( S^{-1}_h\Sigma_h(x) \right) \leq \frac{d}{c_1K}
    \end{align*}
    by Lemma \ref{lem:low_relative_uncertainty}, while the second term is bounded as
    \begin{align*}
        Tr\left( S^{-1}_hM_hS^{-1}_h  \Sigma_h(x)\right) \leq \frac{d\epsilon}{c_1K(1-\epsilon)}~,
    \end{align*}
    by Lemma \ref{lem:auxiliary_matrix_result}. The result follows.
\end{proof}

\begin{statement}
    Suppose that the conditions of Theorem \ref{thm:general_result} and Assumption \ref{asm:features_lower_bound} hold. Then, with probability at least $1-\delta$, $S-PMVI$ returns $(\widehat{\pi},\widehat{\nu})$ that satisfy, for any $s\in\mathcal{S}$:
    \begin{align*}
        \optgap(\widehat{\pi},\widehat{\nu},s) \leq O \left( \frac{1}{c_1c_2}H^2d^{3/2}K^{-1/2} + \frac{1}{c_1c_2}H^2d^{3/2}\epsilon \right)~.
    \end{align*}
\end{statement}

\begin{proof}
Again, as in Theorem~\ref{thm:linear_case}, we have
    \begin{align}
    & \optgap (\widehat{\pi},\widehat{\nu},s) = V^{*,\widehat{\nu}}_1(s) - V^{\widehat{\pi},*}_1(s) = \left( V^{*,\widehat{\nu}}_1(s) - V^*_1(s)\right) + \left( V^*_1(s) - V^{\widehat{\pi},*}_1(s)\right) \nonumber \\
    & \leq \sum^H_{h=1} \bigg( \mathbb{E}_{\pi',\nu^*}\left[ -\overline{\iota}_h(s_h,a_h,b_h)\right] + \mathbb{E}_{\pi^*,\nu'}\left[ \underline{\iota}_h(s_h,a_h,b_h) \right]  \bigg) \nonumber\\
        & \leq 2 \sum^H_{h=1} \bigg( \mathbb{E}_{\pi',\nu^*}\left[ \widehat{\Gamma}_h(s,a,b)|s_1=s\right] + \mathbb{E}_{\pi^*,\nu'}\left[ \widehat{\Gamma}_h(s,a,b)|s_1=s \right] \bigg) \nonumber \\
    & \leq 2 \left( 2(1-\epsilon)K\mathcal{E} + \epsilon KH\sqrt{d} + H\sqrt{Kd}  \right)\sum^H_{h=1} \left( \mathbb{E}_{\pi',\nu^*}\left[ \norm{\phi(s,a,b)^\top\Lambda^{-1}_h}_2\right] + \mathbb{E}_{\pi^*,\nu'}\left[ \norm{\phi(s,a,b)^\top\Lambda^{-1}_h}_2 \right] \right) \label{eq:no_coverage_proof_11}\\
        & \leq 4H \left( 2(1-\epsilon)K\mathcal{E} + \epsilon KH\sqrt{d} + H\sqrt{Kd}  \right)\sum^H_{h=1} \left( \frac{d}{c_1c_2K} + \frac{d\epsilon}{c_1c_2K(1-\epsilon)} \right) \label{eq:no_coverage_proof_12}\\
        & \leq O \left(\frac{1}{c_1c_2}  H^2d^{3/2}K^{-1/2} + \frac{1}{c_1c_2} H^2d^{3/2}\epsilon \right) \nonumber ~,
    \end{align}
where \eqref{eq:no_coverage_proof_12} follows from Lemma \ref{lem:improved_bellman_error}, and \eqref{eq:no_coverage_proof_12} follows from Lemma \ref{lem:general_trace_bound}~.
\end{proof}

\section{Proofs of Section \ref{sec:coverage_discussion}}\label{sec:assumptions_strength}

In this section, we provide the proofs of results related to coverage assumptions in Section \ref{sec:coverage_discussion}.

\subsection{Proof of Proposition \ref{prop:equivalent_assumptions}}

 Before proving Proposition \ref{prop:equivalent_assumptions}, we state a result that provides bounds on the concentration of covariance matrices. For proof see \citep{zanette2021cautiously}. 
\begin{lem}\label{lem:covariance_concentration}
    Let $\{ \phi_i\}_{i\in[K]} \subset \mathbb{R}^d$ be i.i.d. samples from an underlying bounded distribution $\nu$, with $\norm{\phi_i}_2\leq 1$, for $i\in[K]$ and covariance $\Sigma$. Define
    \begin{equation*}
        \Lambda = \sum^K_{k=1}\phi_i\phi_i^\top + \lambda I~,
    \end{equation*}
    where $\lambda \geq \Omega ( d\log (K/\delta))$. Then, with probability at least $1-\delta$, we have
    \begin{equation*}
        \frac{1}{3}(K\Sigma + \lambda I) \preceq \Lambda \preceq \frac{5}{3}(K\Sigma + \lambda I)~.
    \end{equation*}
\end{lem}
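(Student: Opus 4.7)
The plan is to reduce the claim to a matrix Bernstein inequality through a whitening step. Set $A := K\Sigma + \lambda I$, which is positive definite. The target sandwich $\tfrac{1}{3}A \preceq \Lambda \preceq \tfrac{5}{3}A$ is equivalent, after conjugating by $A^{-1/2}$, to $\tfrac{1}{3}I \preceq A^{-1/2}\Lambda A^{-1/2} \preceq \tfrac{5}{3}I$. With whitened features $\psi_i := A^{-1/2}\phi_i$, one has
\begin{equation*}
A^{-1/2}\Lambda A^{-1/2} = \sum_{i=1}^K \psi_i\psi_i^\top + \lambda A^{-1},
\end{equation*}
whose expectation equals $I$; it therefore suffices to show this random matrix deviates from $I$ by at most $2/3$ in operator norm with probability at least $1-\delta$.

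To that end, I would apply matrix Bernstein to the zero-mean independent summands $Y_i := \psi_i\psi_i^\top - \mathbb{E}[\psi_i\psi_i^\top]$. The two ingredients needed are a uniform bound $\|Y_i\|_2 \leq R$ and a variance proxy $\|\sum_i \mathbb{E}[Y_i^2]\|_2 \leq \sigma^2$. Because $A \succeq \lambda I$, one gets $\|\psi_i\|_2^2 \leq 1/\lambda$ and hence $R \lesssim 1/\lambda$; and the operator-monotone domination $\mathbb{E}[Y_i^2] \preceq (1/\lambda)\,\mathbb{E}[\psi_i\psi_i^\top]$, combined with the identity $K\,\mathbb{E}[\psi_i\psi_i^\top] = I - \lambda A^{-1} \preceq I$, yields $\sigma^2 \leq 1/\lambda$. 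Plugging these into the matrix Bernstein tail bound at threshold $t = 2/3$ gives a failure probability of the form $2d\exp(-c\lambda)$ for a universal constant $c > 0$, and the hypothesis $\lambda = \Omega(d\log(K/\delta))$ is more than enough to drive this below $\delta$. Conjugating back by $A^{1/2}$ recovers the stated inequality.

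The main obstacle will be the bookkeeping of constants, so that the numerical factors $1/3$ and $5/3$ emerge cleanly from the Bernstein exponential; sloppy tracking could easily give, say, $1/2$ and $3/2$ instead, forcing a larger choice of $\lambda$. A secondary point is that the stated lower bound $\lambda = \Omega(d\log(K/\delta))$ is strictly stronger than what matrix Bernstein alone requires (only $\Omega(\log(d/\delta))$); the extra factors of $d$ and $K$ presumably leave slack for the applications of this lemma elsewhere in the paper, where it is combined with union bounds over the $H$ time steps and with covering-net arguments over directions in $d$-dimensional feature space.
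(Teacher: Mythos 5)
Your proposal is correct, and in fact it supplies a proof where the paper supplies none: the paper simply defers to the citation \citep{zanette2021cautiously} and does not reproduce an argument, so there is nothing to match against beyond checking your reasoning on its own terms. The whitening step is sound: with $A = K\Sigma + \lambda I \succeq \lambda I$ one has $A^{-1/2}\Lambda A^{-1/2} = \sum_i \psi_i\psi_i^\top + \lambda A^{-1}$ with expectation exactly $I$ (using $K\,\mathbb{E}[\psi\psi^\top] = I - \lambda A^{-1}$), the Loewner sandwich is preserved under conjugation by $A^{1/2}$, and your ingredients for matrix Bernstein check out: $\|\psi_i\|_2^2 \leq 1/\lambda$ gives $R \leq 2/\lambda$, and $\mathbb{E}[Y_i^2] \preceq \mathbb{E}[\|\psi_i\|_2^2\,\psi_i\psi_i^\top] \preceq \lambda^{-1}\mathbb{E}[\psi_i\psi_i^\top]$ gives $\sigma^2 \leq 1/\lambda$. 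Plugging $t = 2/3$ into the tail bound yields failure probability $2d\exp\bigl(-\tfrac{2\lambda}{13}\bigr)$, so $\lambda \gtrsim \log(d/\delta)$ already suffices and the hypothesis $\lambda = \Omega(d\log(K/\delta))$ is comfortably stronger, as you observe — the constants $1/3$ and $5/3$ do emerge cleanly and are not tight. Two minor points worth recording: first, the argument implicitly reads ``covariance'' as the uncentered second-moment matrix $\Sigma = \mathbb{E}[\phi\phi^\top]$, which is the only reading under which $\mathbb{E}[\Lambda] = K\Sigma + \lambda I$ and which is consistent with how the paper uses $\Sigma_h$ elsewhere; second, the summand $\lambda A^{-1}$ is deterministic, so the deviation of $A^{-1/2}\Lambda A^{-1/2}$ from $I$ is exactly $\sum_i Y_i$ and no extra error term is being swept under the rug.
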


 We restate Proposition \ref{prop:equivalent_assumptions}.
 \begin{statement}
     Assume $\Phi$ has full rank and let $\delta\in(0,1)$. Then, if Assumption \ref{asn:unilateral-coverage} holds, there exists a positive constant that depends on $\delta$ for which  Assumption \ref{asm:low_relative_uncertainty} holds, with probability at least $1-\delta$. Moreover, in the tabular MG setting, these two assumptions are equivalent. 
 \end{statement}

 \begin{proof}
      Fix $x\in\mathcal{S}$. Given policy pair $(\pi,\nu)$ and $h\in[H]$, let the matrix $D^{\pi,\nu}_h\in \mathbb{R}^{SAB \times SAB}$ denote the diagonal matrix composed of $d^{\pi,\nu}_h(s,a,b)$ diagonal entries, where we set $d^{\pi,\nu}_1(x,a,b)=1$. Note that 
     \begin{equation}
         \mathbb{E}_{\pi,\nu}\left[ \phi(s_h,a_h,b_h)\phi(s_h,a_h,b_h)^\top \vert s_1=x \right] = \Phi^\top D^{\pi,\nu}_h\Phi~.
     \end{equation}
     Further, let us denote by $\rho$ the behavioral policy from which the clean tuples $(s,a,b)$ from the given offline data are sampled and let us define the diagonal matrix $D^\rho_{\tau,h}\in \mathbb{R}^{SAB\times SAB}$, for every $\tau\in[K], h\in[H]$, with diagonal entries
     \begin{equation*}
         \widetilde{D}^\rho_{\tau,h} [(s,a,b)]:= \widetilde{d}^\rho_{\tau,h}(s,a,b) =
         \begin{cases}
             1 \;\; \text{if} \; (s^\tau_h,a^\tau_h,b^\tau_h) = (s,a,b)\\
             0 \;\; \text{otherwise}
         \end{cases}
     \end{equation*}
    and let 
    \begin{equation*}
        \widehat{D}^\rho_h = \frac{1}{K}\sum^K_{\tau=1}\widetilde{D}^\rho_{\tau,h}
    \end{equation*}
    denote the sample covariance matrix. By assumption, there exists a finite positive constant $c$ such that
    \begin{equation*}
          \frac{d^{\pi,\nu}_h(s,a,b)}{d^\rho_h(s,a,b)} < c < \infty, \forall h\in[H], (\pi,\nu)\in\mathcal{U}(\pi^*,\nu^*), (s,a,b)\in\{ (s,a,b) \in \mathcal{S}\times\mathcal{A}\times\mathcal{B} : d^{\pi,\nu}_h(s,a,b)>0\}~,
     \end{equation*}
     where $\mathcal{U}(\pi^*,\nu^*) = \{ (\pi^*,\nu),(\pi,\nu^*), \forall \pi, \forall \nu \}$. This implies that 
     \begin{equation}\label{eq:equivalent_assumptions_01}
         \min_{(s,a,b)\in\mathcal{S}\times\mathcal{A}\times\mathcal{B}} \left( d^\rho_h(s,a,b) - \frac{1}{c}d^{\pi,\nu}_h(s,a,b)\right) \geq 0~,
     \end{equation}
     which implies that 
     \begin{equation}\label{eq:unilateral_rlu_02}
         D^\rho_h - \frac{1}{c}D^{\pi,\nu}_h \succeq 0~.
     \end{equation}
     Now, Lemma \ref{lem:covariance_concentration} implies that there exist constants $\lambda_h \geq \Omega (d\log(KH/\delta))$, for $h\in[H]$, such that, with probability at least $1-\delta$, we have 
     \begin{equation*}
         \frac{1}{3}(KD^\rho_h + \lambda_h I ) \preceq \sum^K_{k=1} \widetilde{D}^\rho_{k,h} + \lambda_h I \preceq \frac{5}{3}( KD^\rho_h+ \lambda_h I)~,
     \end{equation*}
     for all $h\in[H]$. This implies that 
     \begin{equation}\label{eq:unilateral_rlu_03}
         \frac{1}{3}D^\rho_h - \frac{2\lambda_h}{3K}I \preceq \widehat{D}^\rho_h \preceq \frac{5}{3}D^\rho_h + \frac{2\lambda_h}{3K}I~.
     \end{equation}
     Equations \eqref{eq:unilateral_rlu_02} and \eqref{eq:unilateral_rlu_03} imply that, with probability at least $1-\delta$, we have
     \begin{equation*}
         \widehat{D}^\rho_h - \left( \frac{1}{c}D^{\pi,\nu}_h - \frac{2\lambda_h}{3K}I \right) \succeq 0~.
     \end{equation*}
     Since the left-hand side is a diagonal matrix, the above is equivalent to
     \begin{equation*}
         \min_{(s,a,b)} \left( \widehat{d}^\rho_h(s,a,b) - \left( \frac{1}{c} - \frac{2\lambda_h}{3Kd^{\pi,\nu}_h(s,a,b)} \right) d^{\pi,\nu}_h(s,a,b)\right) \geq 0~.
     \end{equation*}
     Now let us define constant $c_1$ as 
     \begin{align*}
         c_1 = \max \{ \frac{1}{c} - \frac{2d\log(KH/\delta)}{3Kd^{\pi,\nu}_h(s,a,b)} : d^{\pi,\nu}_h(s,a,b) >0\}~.
     \end{align*}
     Then we obtain
     \begin{equation*}
         \min_{(s,a,b)} \left( \widehat{d}^\rho_h(s,a,b) - c_1d^{\pi,\nu}_h(s,a,b)\right) \geq 0,
     \end{equation*}
     which means that we have
     \begin{equation*}
         \widehat{D}^\rho_h - c_1D^{\pi,\nu}_h \succeq 0~.
     \end{equation*}
     Now, since $\Phi$ has full rank, its null space is $0$. Thus, given non-zero $x\in\mathbb{R}^d$, we have $\Phi x \neq 0$ and thus
     \begin{equation*}
         (\Phi x)^\top \left( \widehat{D}^\rho_h - c_1D^{\pi,\nu}_h \right) (\Phi x) \geq 0~,
     \end{equation*}
     since $\widehat{D}^\rho_h - c_1D^{\pi,\nu}_h \succeq 0$. Therefore, 
     \begin{equation*}
         \Phi^\top \left( \widehat{D}^\rho_h - c_1D^{\pi,\nu}_h \right) \Phi \succeq 0~,
     \end{equation*}
     which implies
     \begin{equation*}
        \sum^K_{\tau=1}  \Phi^\top \widetilde{D}^\rho_{\tau,h}  \Phi  + I \succeq I + c_1K \Phi^\top D^{\pi,\nu}_h\Phi~.
     \end{equation*}
     Note that the above can be written as
     \begin{equation*}
         \Lambda_h \succeq I + c_1 K \max \left\{ \sup_\nu \; \mathbb{E}_{\pi^*,\nu}\left[ \phi_h\phi^\top_h \vert s_1=x \right], \; \sup_\nu \; \mathbb{E}_{\pi,\nu^*}\left[ \phi_h\phi^\top_h \vert s_1=x \right] \right\}~.
     \end{equation*}

     Now assume the MG is tabular. Then $\Phi$ is just the identity matrix. Thus, if Assumption \ref{asm:low_relative_uncertainty} holds, then there exists a positive constant $c_1$ such that, for any $h\in[H]$, $(\pi,\nu)\in \mathcal{U}(\pi^*,\nu^*)$, we have
     \begin{align*}
         \Lambda_h \succeq I + c_1K \Phi^\top D^{\pi,\nu}_h\Phi & \Rightarrow \Phi^\top \left( \widehat{D}^\rho_h - c_1 D^{\pi,nu}_h\right) \Phi \succeq 0  \\
                & \Rightarrow \widehat{D}^\rho_h - c_1 D^{\pi,\nu}_h \succeq 0\\
                & \Rightarrow \frac{5}{3}D^\rho_h + \frac{2\lambda_h}{3K}I - c_1D^{\pi,\nu}_h \succeq 0 \\
                & \Rightarrow  D^\rho_h + \frac{2\lambda_h}{5K}I - \frac{3}{5}c_1D^{\pi,\nu}_h \succeq 0 \\
                & \Rightarrow \min_{(s,a,b)} \left( d^\rho_h(s,a,b) - \left(\frac{3}{5}c_1 - \frac{2\lambda_h}{3Kd^{\pi,\nu}_h(s,a,b)}\right) d^{\pi,\nu}_h(s,a,b) \right) \geq 0~.
     \end{align*}
     Now let us define 
     \begin{equation*}
         c' = \max \left\{ \frac{3}{5}c_1 - \frac{2\lambda_h}{5Kd^{\pi,\nu}_h(s,a,b)}: d^{\pi,\nu}_h(s,a,b)>0 \right\}~.
     \end{equation*}
     We have 
     \begin{equation*}
          \min_{(s,a,b)} \left( d^\rho_h(s,a,b) -c' d^{\pi,\nu}_h(s,a,b) \right) \geq 0~.
     \end{equation*}
     Thus, for $c=1/c'$, we obtain 
     \begin{equation*}
          \frac{d^{\pi,\nu}_h(s,a,b)}{d^\rho_h(s,a,b)} < c < \infty, \forall h\in[H], (\pi,\nu)\in\mathcal{U}(\pi^*,\nu^*), (s,a,b)\in\{ (s,a,b) \in \mathcal{S}\times\mathcal{A}\times\mathcal{B} : d^{\pi,\nu}_h(s,a,b)>0\}~.
     \end{equation*}
\end{proof}

\subsection{Proof of Proposition \ref{prop:strength_of_assumptions}}

First, we show the equivalence of both assumptions in the tabular Markov game setting. Note that, in the tabular setting, the features are $SAB$ dimensional and $\phi(s,a,b)$ is the unit vector $e_{s,a,b}$ with coordinate $(s,a,b)$ set to $1$. First, suppose Assumption \ref{asn:uniform-phi-coverage} is true. Then we have,
\begin{equation}\label{eq:tabular-coverage}
 \E_{d^\rho_h}\left[ e_{s,a,b} e_{s,a,b}^\top \right] = \textrm{diag}\left[ \{d^\rho_h(s,a,b)\}_{s,a,b}\right] \succeq \xi \Identity
 \end{equation}
 This implies that $d^\rho_h(s,a,b) \ge \xi$ for any $h$ and $s,a,b$ and Assumption \ref{asn:uniform-coverage} is satisfied.
 
 Now, suppose Assumption \ref{asn:uniform-coverage} is true.
 Then $d^\rho_h(s,a,b) > 0$ for any $h$ and tuple $(s,a,b)$. Since the number of states is finite, there exists a constant $C$ such that $d^\rho_h(s,a,b) \ge C$ for any $h$ and any tuple $(s,a,b)$. Therefore, Equation~\eqref{eq:tabular-coverage} is satisfied with $\xi = C$, and thus, Assumption \ref{asn:uniform-phi-coverage} is satisfied.
 
 Next, we show that Assumption~\ref{asn:uniform-phi-coverage} is actually a stronger assumption than Assumption~\ref{asn:uniform-coverage} for the more general linear MDP model. We will write $\Phi \in \R^{SAB \times d}$  to denote the feature matrix where the row $(s,a,b)$ corresponds to the $d$-dimensional feature $\phi(s,a,b)$.\footnote{For infinite $\calS$, $\Phi$ is interpreted as a function.}
 \begin{lem}\label{lem_strong_assumption}
 Suppose $\textrm{rank}(\Phi) = d$. Then Assumption~\ref{asn:uniform-phi-coverage} implies Assumption~\ref{asn:uniform-coverage}.
 \end{lem}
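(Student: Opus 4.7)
The plan is to work through the identity $\E_\rho[\Sigma_h \mid s_1 = s] = \Phi^\top D_h \Phi$, where $D_h \in \R^{SAB \times SAB}$ is the diagonal matrix whose $(s',a,b)$-entry equals the visitation probability $d^\rho_h(s',a,b \mid s_1 = s)$ under $\rho$. Assumption~\ref{asn:uniform-phi-coverage} then reads $\Phi^\top D_h \Phi \succeq \kappa \Identity$ for every $h$ and $s$. The goal is to translate this spectral lower bound on the feature-space Gram matrix into a strictly positive lower bound on each diagonal entry of $D_h$, which is precisely Assumption~\ref{asn:uniform-coverage}.

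I would argue by contrapositive. Suppose some entry $d^\rho_h(s_0, a_0, b_0 \mid s_1 = s)$ equals zero. Then the range of $D_h^{1/2} \Phi$ lies in the subspace of $\R^{SAB}$ whose $(s_0, a_0, b_0)$-th coordinate vanishes. Because $\Phi$ has full column rank $d$, one can extract a unit vector $v \in \R^d$ along which this rank reduction manifests, making $\| D_h^{1/2}\Phi v\|_2$ arbitrarily small. Since $v^\top \Phi^\top D_h \Phi v = \| D_h^{1/2}\Phi v\|_2^2$, this contradicts the assumed lower bound $v^\top(\Phi^\top D_h \Phi)v \geq \kappa > 0$, and so no zero diagonal entry of $D_h$ can exist.

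The main obstacle is making the extraction of $v$ quantitative under the bare hypothesis $\mathrm{rank}(\Phi) = d$: one needs to argue that zeroing out any row of $\Phi$ in the weighted Gram matrix $\Phi^\top D_h \Phi$ must shave a positive amount off its minimum eigenvalue. For the tabular specialization this step is immediate, since $\Phi = \Identity$ and hence $\Phi^\top D_h \Phi = D_h$; here Assumption~\ref{asn:uniform-phi-coverage} reduces to the per-tuple bound $d^\rho_h(s',a,b \mid s_1 = s) \geq \kappa$, which is exactly Assumption~\ref{asn:uniform-coverage}. The converse implication in the tabular case follows from finiteness of the tuple space: a strictly positive per-tuple coverage yields a uniform strictly positive lower bound, so $D_h \succeq \kappa' \Identity$ for some $\kappa' > 0$, completing the tabular equivalence asserted in Remark~\ref{rem:coverage_assumptions}.
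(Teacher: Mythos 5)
Your tabular specialization is correct and coincides with the paper's. In the general linear case, however, your argument has a genuine gap that you yourself flag: you never establish that a vanishing diagonal entry of $D_h$ forces $\lambda_{\min}(\Phi^\top D_h \Phi)$ below $\kappa$. Under the bare hypothesis $\mathrm{rank}(\Phi)=d$ this step is not merely hard to make quantitative --- it fails. Take $d=1$ and two tuples $z_1,z_2$ with $\phi(z_1)=\phi(z_2)=1$, so $\Phi=(1,1)^\top$ has full column rank; let $\rho$ place all mass on $z_1$, i.e.\ $D_h=\mathrm{diag}(1,0)$. Then $\Phi^\top D_h\Phi = 1 \succeq \kappa$ with $\kappa=1$, so the premise of Assumption~\ref{asn:uniform-phi-coverage} is met, yet $z_2$ is never visited and Assumption~\ref{asn:uniform-coverage} fails. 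The reason your ``extraction of $v$'' breaks down is that when $SAB>d$ only $d$ linearly independent rows of $\Phi$ are needed to keep $\Phi^\top D_h\Phi$ nondegenerate, and $D_h$ is free to concentrate all its mass on exactly those rows; zeroing a redundant row costs nothing in the minimum eigenvalue.

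For comparison, the paper's proof takes a quantitative route through the SVD $\Phi=U\Lambda V^\top$: from $V\Lambda U^\top D^\rho_h U\Lambda V^\top\succeq \xi I$ it deduces $U^\top D^\rho_h U\succeq\xi\Lambda^{-2}$, and then asserts that $\lambda_{\min}(U^\top D^\rho_h U)=\min_{s,a,b}d^\rho_h(s,a,b)$. That identification is precisely where the extra structure you found yourself needing gets implicitly assumed: it is valid when each row of $U$ has a single nonzero entry (e.g.\ the tabular case $\Phi=I$), but not for a generic rank-$d$ feature matrix --- in the example above $U=(1/\sqrt2,1/\sqrt2)^\top$ and $U^\top D_h U = 1/2$, whereas $\min_{s,a,b} d^\rho_h = 0$. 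So your instinct that full column rank alone does not let you ``shave a positive amount off the minimum eigenvalue'' by deleting a row is sound; closing the argument requires a strengthened hypothesis (for instance, that deleting any row of $\Phi$ drops its rank, or a diversity condition of the kind the paper itself alludes to at the end of its proof), and neither your contrapositive route nor the paper's SVD route completes the implication without one.
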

 \begin{proof}
 We will assume $\calS$ is finite but possibly very large. The proof can be easily generalized  for infinite $\calS$. Since $\Phi$ has rank $d$ let us write $\Phi = U \Lambda V^\top$ where $U\in \R^{SAB \times d}$, $\Lambda$ is a $d$-dimensional diagonal matrix, and $V \in \R^{d \times d}$. Moreover, we can take $V$ to be orthonormal i.e. $V^\top V = \Identity$. Let $D^\rho_h \in \R^{SAB \times SAB}$ be a diagonal matrix with $D^\rho(s,a,b) = d^\rho_h(s,a,b)$.
 \begin{align*}
     &\E_{d^\rho_h} \left[ \phi(s,a,b) \phi(s,a,b)^\top \right] = \Phi^\top D^\rho_h \Phi \\
     &= V \Lambda U^\top D^\rho_h U \Lambda V^\top \succeq \xi \Identity_{d \times d}
 \end{align*}
 Since $\Phi$ has rank $d$, both $\Lambda$ and $V$ are invertible. This gives us 
\begin{align*}
 &U^\top D^\rho_h U \succeq \xi (V\Lambda)^{-1} (\Lambda V^\top)^{-1} \\
 &= \xi \Lambda^{-1} (V^\top V)^{-1} \Lambda^{-1} \succeq \xi \Lambda^{-2} \Identity_{d \times d}
 \end{align*}
 Since the matrix $U^\top D^\rho_h U$ is already in diagonalized representation, the minimum eigenvalue of $U^\top D^\rho_h U$ is the smallest diagonal entry of $D^\rho_h$. Therefore,
 $$
 \lambda_{\min} \left(U^\top D^\rho_h U \right) = \min_{s,a,b} d^\rho_h(s,a,b) \ge \frac{\xi}{\max_{j\in [d]} \Lambda(j)^2}
 $$
 Since $\Phi$ has rank $d$ at least one entry of the diagonal matrix $\Lambda$ is non-zero. Therefore, for any $h$, the tuple $(s,a,b)$ is covered with probability at least $p$ where $p = \xi / \max_j \Lambda(j)^2$.
 
 On the other hand, note that, for general rank-$d$ feature matrix $\Phi$, assumption~\ref{asn:uniform-coverage} need not imply assumption~\ref{asn:uniform-phi-coverage}. However, if we put additional restrictions on the features e.g. diversity, these two assumptions could be equivalent.
 \end{proof}

\end{document}


%

%

\onecolumn
\aistatstitle{Instructions for Paper Submissions to AISTATS 2024: \\
Supplementary Materials}

\section{FORMATTING INSTRUCTIONS}

To prepare a supplementary pdf file, we ask the authors to use \texttt{aistats2024.sty} as a style file and to follow the same formatting instructions as in the main paper.
The only difference is that the supplementary material must be in a \emph{single-column} format.
You can use \texttt{supplement.tex} in our starter pack as a starting point, or append the supplementary content to the main paper and split the final PDF into two separate files.

Note that reviewers are under no obligation to examine your supplementary material.

\section{MISSING PROOFS}

The supplementary materials may contain detailed proofs of the results that are missing in the main paper.

\subsection{Proof of Lemma 3}

\textit{In this section, we present the detailed proof of Lemma 3 and then [ ... ]}

\section{ADDITIONAL EXPERIMENTS}

If you have additional experimental results, you may include them in the supplementary materials.

\subsection{The Effect of Regularization Parameter}

\textit{Our algorithm depends on the regularization parameter $\lambda$. Figure 1 below illustrates the effect of this parameter on the performance of our algorithm. As we can see, [ ... ]}

\vfill